\crefname{section}{\textsection}{\textsection}
\crefname{subsection}{\textsection}{\textsection}
\crefname{subsubsection}{\textsection}{\textsection}
\crefname{paragraph}{\textparagraph}{\textparagraph}
\crefname{thm}{Theorem}{Theorem}
\crefname{hyp}{Assumption}{Assumption}
\def\mathalfa@frakscaled{s*[1]}
\DeclareFontFamily{U}{euf}{}%
\DeclareFontShape{U}{euf}{m}{n}{<-7>\mathalfa@frakscaled eufm5
  <7-9>\mathalfa@frakscaled eufm7
  <9->\mathalfa@frakscaled eufm10}{}%
\DeclareFontShape{U}{euf}{b}{n}{<-7>\mathalfa@frakscaled eufb5
  <7-9>\mathalfa@frakscaled eufb7
  <9->\mathalfa@frakscaled eufb10}{}%
\DeclareMathAlphabet{\mathfrak}{U}{euf}{m}{n}
\DeclareMathOperator{\ran}{Ran}
\DeclareMathOperator*{\slim}{s-lim}
\DeclareMathOperator*{\wlim}{w-lim}
\renewcommand{\Im}{\mathrm{Im}}
\renewcommand{\Re}{\mathrm{Re}}
          \newtheorem{thm}{Theorem}[section]
          \newtheorem{proposition}[thm]{Proposition}
          \newtheorem{lemma}[thm]{Lemma}
          \newtheorem{corollary}[thm]{Corollary}
          \newtheorem{definition}[thm]{Definition}
          \theoremstyle{definition}
          \newtheorem{remark}[thm]{Remark}
          \newtheorem{hyp}{Assumption}
\renewcommand{\setminus}{\smallsetminus}
\setlist[enumerate,3]{itemsep=4mm, label=\Alph*.,ref=\Alph*}
\setlist[enumerate,2]{itemsep=3mm, label=\alph*., topsep=3mm, ref=\alph*}
\setlist[enumerate,1]{itemsep=2mm, label=\roman*), topsep=2mm, ref=\roman*)}
\setlist[itemize,1]{itemsep=4mm}
\setlist[itemize,2]{itemsep=3mm, label=$*$, topsep=3mm}
\setlist[itemize,3]{itemsep=2mm, label=$\diamond$, topsep=2mm}
\newcommand{\RR}{\mathbb{R}}
\newcommand{\NN}{\mathbb{N}}
\newcommand{\der}{\partial}
\newcommand{\R}{\mathbb{R}}
\newcommand{\N}{\mathbb{N}}
\newcommand{\F}{\mathcal{F}}
\begin{document}
\onehalfspacing{} \bibliographystyle{natalpha}

\newenvironment{sistema}%
{\left\lbrace\begin{array}{@{}l@{}}}%
    {\end{array}\right.}

\title[Semiclassical analysis of asymptotic fields in the Yukawa theory]{
  Semiclassical analysis of quantum asymptotic fields in the Yukawa theory}

\author[Z.\ Ammari]{Zied Ammari}

\address{Univ Rennes, [UR1], CNRS, IRMAR - UMR 6625 \\F-35000 Rennes, France}

\email{zied.ammari@univ-rennes1.fr}

\urladdr{}

\author[M.\ Falconi]{Marco Falconi}

\address{Politecnico di Milano \\ D-Mat\\ P.zza Leonardo da Vinci \\ 20133
  Milano \\ Italy}

\email{marco.falconi@polimi.it}

\urladdr{}

\author[M.\ Olivieri]{Marco Olivieri}

\address{Fakult\"{a}t f\"{u}r Mathematik\\ Karlsruher Institut f\"{u}r
  Technologie\\ D-76128, Karlsruhe\\ Germany}

\email{marco.olivieri@kit.edu}

\urladdr{}

\keywords{Yukawa Interaction, Semiclassical Analysis,
  Schr\"odinger-Klein-Gordon Equation, Scattering Theory, Asymptotic Fields.}

\subjclass[2020]{81T05, 81T08, 81Q20, 35P25.}

\date{\today}

\begin{abstract}
  In this article, we study the asymptotic fields of the Yukawa
  particle-field model of quantum physics, in the semiclassical regime $\hslash\to
  0$, with an interaction subject to an ultraviolet cutoff. We show that the
  transition amplitudes between final (respectively initial) states converge
  towards explicit quantities involving the outgoing (respectively incoming)
  wave operators of the nonlinear Schrödinger--Klein--Gordon (S-KG)
  equation. Thus, we rigorously link the scattering theory of the Yukawa
  model to that of the Schrödinger--Klein--Gordon equation. Moreover, we prove
  that the asymptotic vacuum states of the Yukawa model have a phase space
  concentration property around classical radiationless solutions. Under
  further assumptions, we show that the S-KG energy admits a unique minimizer
  modulo symmetries and identify exactly the semiclassical measure of Yukawa
  ground states.  Some additional consequences of asymptotic completeness are
  also discussed, and some further open questions are raised.
\end{abstract}

\maketitle


\section{Introduction}

The fundamental interaction between matter and the electromagnetic field
forms the underlying theoretical basis of various branches of physics and
engineering (quantum optics, atomic spectroscopy, magnetized plasma,
spacecraft semi-conductors, \dots). Such interaction is modeled by a variety
of reduced mathematical models. For instance, in classical mechanics the
light-matter interaction is simply described via Newton's law and the Lorentz
force, while in quantum mechanics it is described by quantum electrodynamics
(QED).  An alternative approach for studying light-matter interactions,
depending on the scale of the system, is provided by semiclassical
electrodynamics, where quantum matter interacts with classical
electromagnetic fields, according for example to the Maxwell--Schr\"odinger or
Maxwell–Bloch equations.  Similarly in high energy physics, the quantum and
semiclassical descriptions provide two different models of the strong nuclear
force via the Yukawa theory, depending on whether the meson field is treated
as a quantum or classical object.  Although there is consensus about the fact
that the quantum and the semiclassical descriptions of matter-field
interactions are intimately related by Bohr's correspondence principle, there
are few rigorous derivations of such relationship \citep[see,
\emph{e.g.},][and references therein contained]{ammari2014jsp,
  ammari2017sima, Leopold:2018aa, MR4159563, carlone2021sima,
  correggi2019jst, correggi2020arxiv, correggi2021jems}.  From both a
conceptual and a practical standpoint, it is interesting to question the
consistency of these different models of particle-field interactions and to
introduce an asymptotic analysis for \emph{transition} and \emph{scattering
  amplitudes} in terms of classical quantities. Recall that the scattering
amplitudes correspond to the main measurable quantities in experimental
physics, and their study is quite challenging particularly for realistic
models.

According to the standard formalism of quantum field theory, the
time-asymptotics of quantum dynamics is encoded in the asymptotic fields and
in the $S$-matrix \cite{MR142335}. In particular, the transition amplitudes
for various scattering processes are described by the following expectations
values
\begin{equation}
  \label{eq.int.1}
  \langle f^\alpha_\hslash, f^\beta_\hslash\rangle  \quad \text{ and } \quad \langle i^\alpha_\hslash, i^\beta_\hslash\rangle\,,
\end{equation}
where $|f^\alpha_\hslash\rangle$ are final (out) states and $|i^\beta_\hslash\rangle$ are initial (in) states
defined through the asymptotic fields as,
\begin{equation}
  \label{eq.int.ts}
  |f^\alpha_\hslash\rangle=\prod_{j} a^{+,*}_\hslash(\alpha_j) |\Omega_\hslash\rangle, \qquad \text{ and } \qquad |i^\beta_\hslash\rangle=\prod_{k} a^{-,*}_\hslash(\beta_k) |\Omega_\hslash\rangle\,,
\end{equation}
where $|\Omega_\hslash\rangle$ is a prepared vector state, $\alpha_j,\beta_k$ are smooth functions and
$a_{\hslash}^{\pm,*},a_{\hslash}^{\pm}$ are the asymptotic creation-annihilation operators
respectively.  On the other hand, the scattering amplitudes are defined
generally as the quantities
\begin{equation}
  \label{eq.int.amp}
  \quad \langle f^\alpha_\hslash, i^\beta_\hslash\rangle\,.
\end{equation}
Specifically, if one takes $|\Omega_\hslash\rangle$ to be an asymptotic vacuum (ground) state
of the interacting system, the transition amplitudes \eqref{eq.int.1} can be
computed exactly thanks to the canonical commutation relations
$$
a^{\pm}_\hslash(\alpha_j) a^{\pm,*}_\hslash(\beta_k)- a^{\pm,*}_\hslash(\beta_k)a^{\pm}_\hslash(\alpha_j)= \hslash \langle \alpha_j, \beta_k\rangle \mathds{1}\,,
$$
and Wick's theorem. However, for interacting quantum field theories it is not
possible in general to give exact closed formulas for these transition and
scattering amplitudes.  Therefore, one resorts to perturbation theory and
semiclassical analysis in order to provide expansions of these amplitudes
respectively in terms of the coupling constant or the semiclassical
parameter.  One of the remarkable results of QFT is the
\emph{Lehmann-Symanzik-Zimmermann} (LSZ) reduction formula which relates the
scattering amplitudes \eqref{eq.int.amp} to time-ordered correlation
functions, thus formally allowing to derive a perturbative expansion by means
of Feynman diagrams. An important question is then to prove rigorously an
expansion formula for the transition and scattering amplitudes
\eqref{eq.int.1}-\eqref{eq.int.amp} with respect to the semiclassical
parameter $\hslash$.  It is worth noting that in principle the semiclassical limit
$\hslash\to 0$ in \eqref{eq.int.amp} yields the tree diagrams of the $S$-matrix
elements while the loop diagrams provide higher order corrections in terms of
$\hslash$.

In the present article, we focus on the rigorous computation of the first
order of the transition amplitudes \eqref{eq.int.1} given by the limit $\hslash\to
0$. In this topic various quantum field theories can be considered. For
relevance and convenience, we focus on the nonlocal Yukawa particle-field
model. In particular, the spectral and scattering theory for such system is
thoroughly investigated \citep[see, \emph{e.g.},][]{derezinski1999rmp,
  MR1809881, MR1878286}.

The Yukawa theory describes the nuclear force as the interaction of a Dirac
(fermion) field $\psi$ with a Klein-Gordon (boson) field $\phi$ to be given by the
expression
\begin{equation}
  g\, \int_{\R^d} \overline{\psi}(x)  \phi(x)  \psi(x) dx\,,
\end{equation}
where $g$ is the corresponding coupling constant. Considering
non-relativistic nucleons and fixing their number in the above theory, one
obtains the so-called Nelson model, that was studied in the land-marking
article \cite{MR0175537} by Edward Nelson.  The Hamiltonian that we shall
consider here is a reduction of the original Yukawa theory, in the sense that
we impose a nonlocal interaction through an ultraviolet cutoff and a
boson-boson coupling instead of fermion-boson one. Under such simplification,
the first two authors proved that the semiclassical limit $\hslash\to0$ yields the
Schr\"odinger-Klein-Gordon equation, see \cite{MR3059880,ammari2014jsp} and
also \cite{ammari2017sima} where the ultraviolet cutoff is removed. The above
mentioned works concern either stationary solutions or finite time dynamics,
and do not address the problem of large times and scattering theory. In fact,
as far as we know, there are no rigorous results on the $\hslash$-asymptotics of
the scattering and transition amplitudes for the Nelson and Yukawa models.

Let us briefly overview our main contribution. The Hamiltonian describing the
reduced Yukawa theory considered here is given by
\begin{eqnarray*} H_\hslash = H_\hslash^0+\int_{\R^d} \psi_\hslash^*(x) (a_\hslash^*(\lambda_x)+a_\hslash(\lambda_x))\, \psi_\hslash(x)
  \;\mathrm{d}x \,,
\end{eqnarray*} with the non-interacting Hamiltonian defined as
\begin{eqnarray*} H_\hslash^0= \int_{\mathbb{R}^d} \, \psi_{\hslash}^*(x) (-\Delta_x+V(x))\, \psi_{\hslash}(x)\,
  \mathrm{d} x + \int_{\mathbb{R}^d} \, a_{\hslash}^*(k) \omega(k)\,a_{\hslash}(k)\, \mathrm{d} k\,.
\end{eqnarray*} Here $\psi_\hslash^\sharp$ and $a_\hslash^\sharp$ denote respectively the particle and
meson creation-annihilation operators while
$$\omega(k)=\sqrt{k^2+m^2}\,, \; m>0\,,$$
is the meson dispersion relation and $\lambda_x$ is a form factor (see Section
\ref{sec.1} for more details). Then the asymptotic fields are defined as
follows,
\begin{equation} a^{\pm,\sharp}_\hslash(\eta) = \lim_{t\to \pm\infty} e^{-i\frac{t}{\hslash}H_{\hslash}} \;
  e^{i\frac{t}{\hslash}H_{\hslash}^0} \; a_\hslash^\sharp(\eta) \; e^{-i\frac{t}{\hslash}H_{\hslash}^0}\;
  e^{i\frac{t}{\hslash}H_{\hslash}}\,,
\end{equation} and they are known to exist for a dense subset of functions
$\eta\in L^2(\R^d)$, see
\cite{hoegh-krohn1968jmp, hoegh-krohn1969jmp1, MR1326139,
  derezinski1999rmp}.
Our first main result gives the semiclassical limit of the transition
amplitudes under some natural assumptions,
\begin{equation}
  \label{eq:intro1}
  \lim_{\hslash\to 0}   \langle f^\alpha_\hslash, f^\beta_\hslash\rangle=
  \int_{L^2(\R^d)\oplus L^2(\R^d)} \prod_{j} \langle \alpha_j, \Lambda^{+}(u,z)\rangle_{L^2(\R^d)} \; \prod_{k} \langle
  \Lambda^{+}(u,z), \beta_k\rangle_{L^2(\R^d)} \; d\mu_0(u,z)\,,
\end{equation}
and
\begin{equation}
  \label{eq:intro2}
  \lim_{\hslash\to 0}   \langle i^\alpha_\hslash, i^\beta_\hslash\rangle\,,=
  \int_{L^2(\R^d)\oplus L^2(\R^d)} \prod_{j} \langle \alpha_j, \Lambda^{-}(u,z)\rangle_{L^2(\R^d)} \; \prod_{k} \langle
  \Lambda^{-}(u,z), \beta_k\rangle_{L^2(\R^d)} \; d\mu_0(u,z)\,,
\end{equation}
where $\Lambda^{\pm}$ are the in-out-going wave operators of the nonlinear
Schr\"odinger-Klein-Gordon equation \eqref{eq:skg}, constructed in Section
\ref{sec:3} (Definition \ref{def.waveop}) and $\mu_0$ is a semiclassical or
Wigner measure of the family of states $\{\Omega_\hslash\}_{\hslash \in(0,1)}$, which is a Borel
probability measure over the phase-space $L^2(\mathbb{R}^d)\oplus L^2 (\mathbb{R}^d)$. Such result
is stated throughout Theorem \ref{thm:1}, Proposition \ref{prop:segalfield},
and Corollaries \ref{cor:asymptcreat} and \ref{cor:1}.  We further show that
if $\{\Omega_\hslash\}_{\hslash \in(0,1)}$ are the ground states of the interacting system, then
the semiclassical measure $\mu_0$ concentrates on the lowest classical energy
level of the Schr\"odinger-Klein-Gordon system. More generally, if we replace
$\{\Omega_\hslash\}_{\hslash \in(0,1)}$ by excited or bound sates of the quantum Yukawa system,
we then show that $\mu_0$ concentrates on a set of classical asymptotic
radiationless states, from which no radiation is coming out or in (see
Theorem \ref{prop:measvacuum}). The later notion is similar to that of
trapped trajectories in finite dimensional semiclassical analysis, and here
it is defined by means of the wave operators $\Lambda^\pm$, through the condition
$$
\Lambda^\pm(u_0,z_0)=0\,.
$$
Our final result shows that the energy functional of the
Schr\"odinger-Klein-Gordon system \eqref{eq.skg-eng} admits a unique
minimizer $(u_\delta,z_\delta)$, up to invariance, under the constraint
$\|u\|_{L^2(\R^d)}=\delta$ for $\delta$ sufficiently small. This leads to the
identification of the semiclassical measure $\mu_0$ in Corollary \ref{cor:5}
when $\{\Omega_\hslash\}_{\hslash \in(0,1)}$ are the ground states of $H_\hslash$ restricted to the
space of finite number of nucleons $n_\hslash$ such that $\hslash n_\hslash\to \delta$.

\noindent The key point in the proofs of the results above is to combine
dispersive and uniform energy estimates with semiclassical analysis and
scattering theory.

\bigskip
\noindent
\emph{Outlook of the article}: The reduced particle-field Yukawa model is
introduced in Section \ref{sec.1}.  A technical part of the article lays in
Section \ref{sec.2}, where uniform energy and dispersive estimates are given
for both the classical S-KG equation and the quantum Yukawa model.  Then
scattering theory of the Schr\"odinger-Klein-Gordon system is discussed in
Section \ref{sec:3}, where we construct the classical out-in-coming wave
operators.  Our main results on transition amplitudes are proved in Section
\ref{sec:4}.  Precise semiclassical concentration properties of asymptotic
vacuum and bound states are given in Section \ref{sec:5}. In the same
section, uniqueness of minimizers for the S-KG energy functional and
consequences of the asymptotic completeness for the quantum Yukawa model are
addressed.  A list of perspectives and open problems is provided in Section
\ref{sec.op}.

\medskip
\noindent
\emph{Acknowledgments}: M.F.\ has been supported by the European Research
Council (ERC) under the European Union’s Horizon 2020 research and innovation
programme (ERC CoG UniCoSM, grant agreement n.724939). M.O.\ has been
supported by the Deutsche Forschungsgemeinschaft (DFG, Project-ID 258734477 -
SFB 1173). Concerning his research stay in Rennes, in which part of this
project was carried out, M.O.\ would also like to thank LYSM (Laboratoire
Ypatia de Sciences Mathématiques) for the financial support, and IRMAR
(Institut de recherche mathématique de Rennes) for the hospitality.

\section{Yukawa model, semiclassical analysis and asymptotic meson fields}
\label{sec.1}

\subsection{Yukawa's reduced Hamiltonian}
The reduced Yukawa model is composed by two subsystems mutually interacting:
the first one being an arbitrary number of boson particles, considered as a
non-relativistic field obeying Bose-Einstein statistics, while the second one
is a scalar meson field. The Hilbert space of the fully interacting dynamical
system is
\[
  \mathscr{H} := \Gamma_s(\mathfrak{H}) \otimes \Gamma_s(\mathfrak{H}) \cong
  \Gamma_s(\mathscr{Z})\,,
\]
where $\Gamma_s(\cdot )$ denotes the symmetric Fock space, $\mathfrak{H} = L^2(\mathbb{R}^d)$,
and $\mathscr{Z} :=\mathfrak{H} \oplus \mathfrak{H}$. The latter shall be
interpreted as the classical phase-space for the particle-field system. For
convenience, we denote the sector with $n$ non-relativistic particles by
\begin{equation*}
  \mathscr{H}^{(n)}:= L_s^2(\mathbb{R}^{dn}) \otimes \Gamma_s(\mathfrak{H})\,,
\end{equation*}
and the sector with $n$ particles and $m$ mesons by
\begin{equation*}
  \mathscr{H}^{(n,m)} = L_s^2(\mathbb{R}^{dn}) \otimes L_s^2(\mathbb{R}^{dm})\,.
\end{equation*}
It follows that
\[
  \mathscr{H} = \bigoplus_{n \in \mathbb{N}} \mathscr{H}^{(n)}
  =\bigoplus_{n,m \in \mathbb{N}} \mathscr{H}^{(n,m)}.
\]
Consider now $(\psi_\hslash^*, \psi_\hslash)$ and $(a_\hslash^*, a_\hslash)$ to be the creation and
annihilation operators for the particle and meson fields, respectively.
These are $\hslash$-dependent couples of quantum observables, both satisfying the
canonical commutation relations:
\begin{align*} [\psi_{\hslash}(x), \psi^*_{\hslash}(y)]= \hslash\,
  \delta(x-y),\qquad [\, a_{\hslash}(k)\,, a^*_{\hslash}(k')]= \hslash\,
  \delta(k-k'), \qquad [a^{\natural}_{\hslash}(k),
  \psi_{\hslash}^{\natural}(x)] = 0\;.
\end{align*}
The creation and annihilation operators above are in fact considered
``operator-valued distributions'', in the sense that for any $\eta\in L^2(\mathbb{R}^d)$,
\begin{align*}
  &\psi_{\hslash}(\eta)= \int_{\mathbb{R}^d}^{}\bar{\eta}(x)\psi_{\hslash}(x)  \mathrm{d}x\; ,  &\psi_{\hslash}^{*}(\eta)= \int_{\mathbb{R}^d}^{}\eta(x)\psi^{*}_{\hslash}(x)  \mathrm{d}x\;;\\
  &a_{\hslash}(\eta)= \int_{\mathbb{R}^d}^{}\bar{\eta}(k)a_{\hslash}(k)  \mathrm{d}k\; , &a_{\hslash}^{*}(\eta)= \int_{\mathbb{R}^d}^{}\eta(k)a^{*}_{\hslash}(k)  \mathrm{d}k\;;
\end{align*}
are closed operators on $\mathscr{H}$, respectively one adjoint to the
other. Recall as well that $\psi^{\natural}_{\hslash}$ and $a^{\natural}_{\hslash}$ can be written in
terms of the standard $\hslash$-independent creation and annihilation operators
\cite[see, \emph{e.g.},][for an introductory reference]{MR3060648}:
$$
\psi_{\hslash}^\natural(\cdot)=\sqrt{\hslash} \;\psi^\natural(\cdot)\; ;
a^{\natural}_{\hslash}(\cdot)=\sqrt{\hslash}\; a^{\natural}(\cdot)\;.
$$
Given any self-adjoint operator $A $ on $\mathfrak{H}$, we denote its second
quantizations by
\begin{align*}
  \mathrm{d} \Gamma_{\hslash}^{(1)} (A) := \int_{\mathbb{R}^d} \mathrm{d} x \, \psi_{\hslash}^*(x) A\, \psi_{\hslash}(x)\quad , \quad \mathrm{d} \Gamma_{\hslash}^{(2)} (A) := \int_{\mathbb{R}^d} \mathrm{d} k \, a_{\hslash}^*(k) A \,a_{\hslash}(k)\,.
\end{align*}
In particular, the rescaled number operators are respectively
$$
N_1 := \mathrm{d} \Gamma_{\hslash}^{(1)} (1) \otimes \mathds{1}\,, \quad N_2
:= \mathds{1} \otimes \mathrm{d} \Gamma_{\hslash}^{(2)} (1)\,,
$$
and the total rescaled number operator is
 $$
 N = N_1 + N_2\,.
 $$
 Remark that $N_1$ and $N_2$ count $\hslash$-times the number of particles in
 each respective sector.

 With the above notations, the particle-field Yukawa Hamiltonian is given by
 \begin{equation}
   \label{eq:hamint}
   H_{\hslash} := H_\hslash^0 + H_\hslash^I\,,
 \end{equation}
 such that
 \begin{equation*}
   \begin{aligned}
     H_\hslash^0 &:= \mathrm{d} \Gamma_{\hslash}^{(1)} (-\Delta + V) \otimes \mathds{1} + \mathds{1} \otimes \mathrm{d} \Gamma_{\hslash}^{(2)} (\omega)\,, \\
     H_\hslash^I &:= \mathrm{d} \Gamma_{\hslash}^{(1)} (\phi_{\hslash}(\lambda_{x}))\,,
   \end{aligned}
 \end{equation*}
 and
 \begin{equation}
   \label{eq:phidef}
   \phi_{\hslash}(\lambda_x) := a_{\hslash}(\lambda_x) + a^{*}_{\hslash}(\lambda_x)\,.
 \end{equation}

  \begin{remark}
   \label{rem:1}
   The Yukawa Hamiltonian $H_{\hslash}$ strongly commutes with the nucleon
   number operator $N_1$ \citep[see, \emph{e.g.},][Appendix A]{MR3379490},
   therefore it can be fibered to each $\mathscr{H}^{(n)}$:
   \begin{equation*}
     H_{\hslash}=\bigoplus_{n\in \mathbb{N}} H_{\hslash}^{(n)}=\bigoplus_{n\in \mathbb{N}}  \Bigl(H_\hslash^{0,(n)}+H_\hslash^{I,(n)}\Bigr)\; .
   \end{equation*}
 \end{remark}
 
 The above potential $V$, dispersion relation $\omega$, and form factor $\lambda_x(\cdot)$
 need to satisfy the following regularity assumptions.
 
 \begin{hyp}
   \label{hyp:1}
   $\phantom{i}$
   \begin{itemize}
   \item $V : \mathbb{R}^d \rightarrow \mathbb{R}^+$ is locally
     integrable. In addition, there exist $C_0 >0$ and $\nu > 0 $ such that:
     \begin{equation}\label{hyp:pot}\tag{A1}
       V(x) \geq C_0\, \langle 	x\rangle^{1+\nu}\,.
     \end{equation}
     Here $\langle\,\cdot\,\rangle_{}=(1+\lvert \,\cdot\, \rvert_{}^2)^{\frac{1}{2}}$ stands for the so-called
     Japanese bracket.

   \item $\omega$ is the dispersion of a scalar relativistic field of mass
     $m$:
     \begin{equation}\label{hyp:omega}\tag{A2}
       \omega(k) = \sqrt{k^2 + m^2}, \quad m > 0\,.
     \end{equation}

   \item There exists a cut-off function $\chi \in C^{\infty}_0(\mathbb{R}^d)$ such that
     \begin{equation}\label{hyp:lambda}\tag{A3}
       \lambda_x(k) = e^{ik\cdot x} \omega^{-1/2}(k) \chi(k)\,.
     \end{equation}
   \end{itemize}
 \end{hyp}
 From now on, we will always assume implicitly
 \eqref{hyp:pot}-\eqref{hyp:lambda} to be satisfied.

\bigskip
 
 \subsection{Semiclassical limit and Wigner measures}
 \label{subsec:wig}
 The classical limit procedure, known also as Bohr's correspondence
 principle, can be interpreted as the approximation of the behavior of a
 quantum system by its classical counterpart in a certain effective
 regime. There are several ways to give a mathematical foundation to this
 principle. Among them, one can mention the methods based on coherent states,
 path integration, Schwinger-Dyson expansion and Wigner measures. The latter
 is the one that we shall follow here. In fact, the Wigner measure approach
 initiated in \cite{ammari2008ahp} for quantum field and many-body theories
 proved to be quite effective \citep[see, \emph{e.g.},][and references
 therein contained]{MR3907740, ammari2014jsp, ammari2017sima, MR3379490}.
 More precisely, consider a family of normalized vectors $\{\Psi_{\hslash}\}_{\hslash \in
   (0,1)} \subseteq \mathscr{H}$. The main point consists in studying the cluster
 points of the expectations
 \[
   \left\langle \Psi_{\hslash},W_\hslash(\eta) \Psi_{\hslash}
   \right\rangle_{\mathscr{H}}\,,
 \]
 where $\eta=\eta_1\oplus \eta_2\in \mathscr{Z}$ and $W_\hslash(\eta)$ is the \emph{Weyl
   operator}. Recall that the Weyl operator is a unitary operator on
 $\mathscr{H}$, defined by
 \begin{equation}
   \label{eq:weyldef}
   W_\hslash(\eta)=  W_{\hslash}^{(1)}(\eta_1)\otimes W_{\hslash}^{(2)}(\eta_2)\equiv e^{\frac{i}{\sqrt{2}} \bigl(\psi_\hslash^{*} (\eta _1)+\psi_\hslash(\eta _1)\bigr)+\frac{i}{\sqrt{2}}\bigl(a_\hslash^{*}(\eta _2)+a_\hslash(\eta _2)\bigr) }\;.
 \end{equation}
 As proved in \citep[Theorem 6.2]{ammari2008ahp}, if there exists $C>0$ such
 that
 \begin{equation}
   \label{number-cond}
   \forall \hslash\in(0,1), \quad \langle \Psi_{\hslash}  , N \Psi_{\hslash}\rangle_{\mathscr{H}} \leq C\,,
 \end{equation}
 then there is at least one sequence $(\hslash_n)_{n\in\mathbb{N}}$, $\hslash_n\to 0$, and a Borel
 probability measure $\mu$ on $\mathscr{Z}$ such that for all $\eta\in\mathscr{Z}$,
 \begin{equation}
   \label{wigner}
   \lim_{n\to\infty} \left\langle \Psi_{\hslash_n},W_{\hslash_n}(\eta) \Psi_{\hslash_n} \right\rangle_{\mathscr{H}}=  \int_{\mathscr{Z}} e^{\sqrt{2}i \Re\langle \eta_1\oplus \eta_2, u\oplus z\rangle_{\mathscr{Z}}} \,d\mu(u,z)\,.
 \end{equation}
 The expectation on the left-hand side is the Fourier-Wigner transform of a
 (bosonic) quantum state while the average on the right-hand side is the
 (inverse) Fourier transform of a measure $\mu$. Thus the convergence given
 in \eqref{wigner} is interpreted as a semiclassical limit emphasizing the
 relationship between quantum field generating functionals and classical
 characteristic functions. The formal infinitesimal version of the limit
 \eqref{wigner} yields the the convergence of quantum correlation functions
 towards the classical ones.  Symbolically, we denote this convergence by
 $\Psi_{\hslash_n}\rightharpoondown \mu$. The probability measures $\mu$ such
 that $\Psi_{\hslash_k}\rightharpoondown \mu$ for some sequence $\hslash_k\to
 0$, are called Wigner or semiclassical measures of the family
 $\{\Psi_{\hslash}\}_{\hslash\in (0,1)}$. The set of all Wigner measures
 associated with the family $\{\Psi_{\hslash}\}_{\hslash\in (0,1)}$ is
 denoted by
 \[
   \mathscr{M}\bigl(\Psi_{\hslash}; \hslash\in(0,1)\bigr)\,.
 \]

 From a dynamical standpoint, it is quite relevant to understand whether a
 classical dynamics in the phase space $\mathscr{Z}$ is a good approximation
 in the semiclassical regime to the microscopic dynamics of quantum
 fields. This can be rigorously done giving a characterization of the Wigner
 measures associated to the time-evolved vectors $\Psi_\hslash(t)= e^{-i
   \frac{t}{\hslash} H_\hslash} \Psi_\hslash$. Indeed, it is proved in
 \citep[Theorem 1.1]{ammari2014jsp} for the Yukawa model that if
 $\mathscr{M}\bigl(\Psi_{\hslash_n}; n\in \mathbb{N}\bigr)=\{\mu\}$ at time
 zero, then for all later or previous times
 \begin{equation*}
   \mathscr{M}\bigl(\Psi_{\hslash_n}(t); n\in \mathbb{N}\bigr)=\{(\Phi_t) \, _{*}\, \mu\}\; ,
 \end{equation*}
 where $\Phi_t$ is the nonlinear Hamiltonian flow of the
 Schr\"odinger-Klein-Gordon system given in \eqref{eq:skg} and $(\Phi_t) \,
 _{*}\, \mu$ is the push-forward measure defined for any Borel set $B$ in
 $\mathscr{Z}$ by
 $$
 (\Phi_t) \, _{*}\, \mu(B)=\mu\big( (\Phi_t)^{-1}(B)\big)\,.
 $$
 We refer the reader to Section \ref{sec:3} for more details on the S-KG
 equation, and to our previous papers \citep{ammari2008ahp, ammari2014jsp} on
 Wigner measures and semiclassical limits.

\subsection{Asymptotic meson fields}
Knowing the behavior of the quantum system in the semiclassical limit at
fixed times $t\in \mathbb{R}$, it is natural to ask whether the approximation
by the classical dynamics holds true for asymptotically long times. In order
to deal with such a question, it is necessary to introduce the concept of
\emph{asymptotic fields}
\citep[see][]{hoegh-krohn1968jmp,hoegh-krohn1969jmp1,hoegh-krohn1969jmp2,derezinski1999rmp}.

\begin{definition}
  The asymptotic Weyl operators for the meson field, denoted by
  $W^{\pm}_{\hslash}(\eta)$, are defined for any $\eta \in \mathfrak{H}$ by
  \begin{equation}
    \begin{aligned}
      W^{\pm}_{\hslash} (\eta) &:= \slim_{t \rightarrow \pm \infty} e^{i\frac{t}{\hslash}H_{\hslash}}e^{-i \frac{t}{\hslash} H_{\hslash}^0} \,W^{(2)}_{\hslash}(\eta)e^{i \frac{t}{\hslash} H_{\hslash}^0} \,e^{-i\frac{t}{\hslash}H_{\hslash}} \\
      &=\slim_{t \rightarrow \pm \infty} e^{i\frac{t}{\hslash}H_{\hslash}}W^{(2)}_{\hslash}(\eta_t)e^{-i\frac{t}{\hslash}H_{\hslash}}\;,
    \end{aligned}
  \end{equation}
  and exist as $\mathscr{H}$-strong limits. In the above equation,
  $\eta_t=e^{-it \omega} \eta$.
\end{definition}
The existence of the asymptotic Weyl operators for the relativistic meson
field in the Yukawa model was proved firstly in \citep{hoegh-krohn1968jmp},
see also \citep{derezinski1999rmp} for additional details. The asymptotic
Weyl operators are the fundamental objects in the study of the scattering
properties of the Yukawa model, playing a role analogous to wave operators in
the scattering theory of Schrödinger operators. They satisfy indeed the Weyl
relations
\begin{equation*}
  W^{\pm}_{\hslash} (\eta)W^{\pm}_{\hslash} (\eta')=e^{\frac{i}{2}\Im \langle \eta  , \eta' \rangle_{\mathfrak{H}}}W^{\pm}_{\hslash} (\eta+\eta')\;.
\end{equation*}
The collection $\{W^{\pm}_{\hslash}(\eta),\eta\in \mathfrak{H}\}$ generate thus, in the
algebraic sense, a representation of the time-zero canonical commutation
relations for the relativistic free field. It is also possible to define the
associated asymptotic fields $\phi^{\pm}_{\hslash}(\eta)$, and creation-annihilation
operators $a^{\pm,\natural}_{\hslash}(\eta)$.  In order to do that, observe that according to
\cref{rem:1}, the asymptotic Weyl operators $W^{\pm}_{\hslash}(\eta)$ strongly commute
with $N_1$, hence they can be fibered on each $\mathscr{H}^{(n)}$ ,
\begin{equation}
  \label{eq:1}
  W_{\hslash}^{\pm}(\eta)=\bigoplus_{n\in \mathbb{N}} W^{\pm}_{\hslash}(\eta)^{(n)}\,.
\end{equation}
Moreover, notice that the maps
\begin{equation*}
  \mathfrak{H}\ni \eta \mapsto W^{\pm}_{\hslash}(\eta)^{(n)}= W^{\pm}_{\hslash}(\eta)_{|\mathscr{H}^{(n)}}\in \mathscr{B}(\mathscr{H}^{(n)})
\end{equation*}
are continuous with respect to the strong operator topology, and the maps
\begin{equation*}
  \mathfrak{H}\ni \eta \mapsto W^{\pm}_{\hslash}(\eta)^{(n)}(H_{\hslash}^{(n)}+i)^{-\varepsilon} \in \mathscr{B}(\mathscr{H}^{(n)})
\end{equation*}
are continuous, for any $\varepsilon>0$, with respect to the uniform operator
topology \citep{derezinski1999rmp}. Hence, by Stone's theorem, it is possible
to define, for any $\eta\in \mathfrak{H}$, $\phi^{\pm}_{\hslash}(\eta)^{(n)}$
as the generators of the asymptotic Weyl operators
\begin{equation*}
  W^{\pm}_{\hslash}(\eta)^{(n)}=e^{\frac{i}{\sqrt{2}}\phi^{\pm}_{\hslash}(\eta)^{(n)}}\; .
\end{equation*}
Additionally, one concludes for instance using \citep[][Proposition
A1]{MR3379490} that the operators
\begin{equation}
  \label{eq:22}
  \phi_{\hslash}^{\pm}(\eta)=\bigoplus_{n\in \mathbb{N}} \phi^{\pm}_{\hslash}(\eta)^{(n)}\;.
\end{equation}
are self-adjoint and satisfy for all $\eta\in \mathfrak{H}$,
$$
W^{\pm}_{\hslash}(\eta) =e^{\frac{i}{\sqrt{2}}\phi^{\pm}_{\hslash}(\eta)}\; .
$$

Finally, note that the operators $\{\phi^{\pm}_{\hslash}(\eta)^{(n)},\eta\in
\mathfrak{H}\}$ have a common dense domain, since the following inequality
holds for any $\eta\in \mathfrak{H}$ \citep[Theorem 5.2]{derezinski1999rmp}:
\begin{equation}
  \label{eq:4}
  \Bigl\lVert \phi^{\pm}_{\hslash}(\eta)^{(n)} (H_{\hslash}^{(n)}+i)^{-\frac{1}{2}} \Bigr\rVert_{\mathscr{B}(\mathscr{H}^{(n)})}^{}\leq C\lVert \eta  \rVert_{\mathfrak{H}}^{}\; .
\end{equation}
Therefore the form domain $\mathscr{Q}(H_{\hslash}^{(n)})\subset
\mathscr{D}(\phi^{\pm}_{\hslash}(\eta)^{(n)})$ for all $\eta\in
\mathfrak{H}$. The creation and annihilation operators
\begin{align}
  \label{eq:2}
  &a^{\pm}_{\hslash}(\eta)^{(n)}= \frac{1}{2} \bigg(\phi^{\pm}_{\hslash}(\eta)^{(n)} +i \phi^{\pm}_{\hslash}(i\eta)^{(n)}\bigg)\; ,\\
  \label{eq:3}
  &a^{\pm,*}_{\hslash}(\eta)^{(n)}= \frac{1}{2} \bigg(\phi^{\pm}_{\hslash}(\eta)^{(n)} -i \phi^{\pm}_{\hslash}(i\eta)^{(n)}\bigg)\; ,
\end{align}
are then densely defined and closed on
$\mathscr{D}(\phi^{\pm}_{\hslash}(\eta)^{(n)})\cap
\mathscr{D}(\phi^{\pm}_{\hslash}(i\eta)^{(n)})\supset
\mathscr{Q}(H_{\hslash}^{(n)})$ for all $\eta\in \mathfrak{H}$ \citep[Theorem
5.3]{derezinski1999rmp}.  Henceforth, one defines
\begin{equation}
  \label{eq:14}
  a_{\hslash}^{\pm,\natural}(\eta)=\bigoplus_{n\in \mathbb{N}} a^{\pm,\natural}_{\hslash}(\eta)^{(n)}\;,
\end{equation}
as closed operators on the fibred Hilbert space $\mathscr{H}$.

We conclude this section by introducing the notion of \emph{asymptotic
  completeness} for the Yukawa model. The $n$-nucleon asymptotic vacuum
spaces are defined respectively as
\begin{equation}
  \label{eq:19}
  \mathscr{K}^{\pm,(n)}_{\hslash}=\Bigl\{ \psi^{(n)}\in \mathscr{H}^{(n)}\;\vert\; \forall \eta\in \mathfrak{H}\,,\, a^{\pm}_{\hslash}(\eta)^{(n)}\psi^{(n)}=0 \Bigr\}\; .
\end{equation}
Physically, the spaces $\mathscr{K}^{\pm,(n)}_{\hslash}$ can be thought as
the configurations in which only the $n$ asymptotic (``dressed'') nucleons
are present, for there is no asymptotically free meson in this
case. Asymptotic completeness is formulated as a property of asymptotic
vacua. Indeed, the Yukawa theory is said to be asymptotically complete iff
(each) $\mathscr{K}^{\pm,(n)}_{\hslash}$ equals the spectral subspace of
bound states of $H_{\hslash}^{(n)}$, {\it i.e.}
\begin{equation}
  \label{eq:20}
  \mathscr{K}^{\pm,(n)}_{\hslash}= \mathds{1}_{\mathrm{pp}}(H_{\hslash}^{(n)}) \mathscr{H}^{(n)}\; ,
\end{equation}
where $\mathds{1}_{\mathrm{pp}}(H_{\hslash}^{(n)})$ denotes the spectral
projection over the pure point spectrum
$\sigma_{\mathrm{pp}}(H_\hslash^{(n)})$.  This guarantees in particular that
for each $n\in \mathbb{N}$,
$\mathscr{K}^{+,(n)}_{\hslash}=\mathscr{K}^{-,(n)}_{\hslash}$ (future and
past asymptotic vacua coincide), and that the asymptotic vacua are bound
states of the Yukawa Hamiltonian.

\section{Technical estimates}
\label{sec.2}
In this section we collect some important technical estimates that are used
to prove our main results.

\subsection{Energy estimates}
\label{subsec:est-enr} We recall here some uniform estimates for the
Hamiltonian $H^{(n)}_\hslash$. For each fixed $\hslash$, such estimates are
well-known; however, for our analysis of the limit $\hslash\to 0$, uniformity
with respect to $\hslash$ has to be kept into account.

\begin{lemma}\label{lem:dgammaomega}
  There exists $C>0$ such that for all $\hslash\in(0,1)$,
  \[\|(\mathrm{d} \Gamma_{\hslash}^{(2)}(\omega) + 1)^{-1/2}
    a^{\natural}_{\hslash}(\lambda_{x})\|_{\mathscr{B}(\mathscr{H}^{(1)})} \leq
    C.\]
\end{lemma}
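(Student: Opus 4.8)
The plan is to reduce the assertion to a standard $N_\tau$-type relative bound on a single bosonic Fock space, uniform in $\hslash$, exploiting the direct integral structure of $\mathscr{H}^{(1)}$ over the nucleon configuration variable. First I would write $\mathscr{H}^{(1)} = L^2_s(\mathbb{R}^d)\otimes\Gamma_s(\mathfrak{H}) \cong \int^{\oplus}_{\mathbb{R}^d}\Gamma_s(\mathfrak{H})\,\mathrm{d}x$ and observe that on this sector both $\mathrm{d}\Gamma_{\hslash}^{(2)}(\omega)$ (which acts as the identity on the nucleon factor) and $a^{\natural}_{\hslash}(\lambda_x)$ are decomposable operators, the latter acting on the fiber over $x$ as the meson creation/annihilation operator with form factor $\lambda_x\in\mathfrak{H}$, exactly as $H_\hslash^I=\mathrm{d}\Gamma_\hslash^{(1)}(\phi_\hslash(\lambda_x))$ acts fiberwise over $x$. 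Consequently the operator norm in question equals the essential supremum over $x$ of the fiberwise norms $\bigl\lVert(\mathrm{d}\Gamma_{\hslash}(\omega)+1)^{-1/2}a^{\natural}_{\hslash}(\lambda_x)\bigr\rVert_{\mathscr{B}(\Gamma_s(\mathfrak{H}))}$, so it suffices to bound these uniformly in $x$ and in $\hslash$.

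For the fiberwise bound I would use the quadratic-form inequalities on $\Gamma_s(\mathfrak{H})$: for any $f\in\mathfrak{H}$ with $\omega^{-1/2}f\in\mathfrak{H}$,
\[
  a_{\hslash}^*(f)\,a_{\hslash}(f) \;\le\; \lVert\omega^{-1/2}f\rVert_{\mathfrak{H}}^2\;\mathrm{d}\Gamma_{\hslash}(\omega)\,, \qquad [a_{\hslash}(f),a_{\hslash}^*(f)] = \hslash\,\lVert f\rVert_{\mathfrak{H}}^2\,,
\]
which follow from the corresponding $\hslash$-independent estimates through $a_{\hslash}^{\sharp}=\sqrt{\hslash}\,a^{\sharp}$ and $\mathrm{d}\Gamma_{\hslash}(\omega)=\hslash\,\mathrm{d}\Gamma(\omega)$, so that the $\hslash$ factors balance and the constant is $\hslash$-independent. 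Combining them, for $\hslash\in(0,1)$ one obtains, for both choices of $\natural$, $a_{\hslash}^{\natural}(f)^*a_{\hslash}^{\natural}(f) \le \lVert\omega^{-1/2}f\rVert_{\mathfrak{H}}^2\,\mathrm{d}\Gamma_{\hslash}(\omega) + \lVert f\rVert_{\mathfrak{H}}^2 \le C_f^2\,(\mathrm{d}\Gamma_{\hslash}(\omega)+1)$ with $C_f^2 := \max\{\lVert\omega^{-1/2}f\rVert_{\mathfrak{H}}^2,\lVert f\rVert_{\mathfrak{H}}^2\}$. Conjugating by $(\mathrm{d}\Gamma_{\hslash}(\omega)+1)^{-1/2}$ gives $\lVert a_{\hslash}^{\natural}(f)(\mathrm{d}\Gamma_{\hslash}(\omega)+1)^{-1/2}\rVert\le C_f$, and taking adjoints yields the same bound for $(\mathrm{d}\Gamma_{\hslash}(\omega)+1)^{-1/2}a_{\hslash}^{\natural}(f)$, the adjoint merely exchanging creation and annihilation, both of which are covered. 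A density argument on the finite-particle subspace (a core for all operators involved) handles the domain bookkeeping.

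It then remains to check that $C_{\lambda_x}$ is finite and independent of $x$. By \eqref{hyp:lambda} one has $\lvert\lambda_x(k)\rvert = \omega(k)^{-1/2}\lvert\chi(k)\rvert$, so $\lVert\lambda_x\rVert_{\mathfrak{H}}^2 = \int_{\mathbb{R}^d}\omega^{-1}\lvert\chi\rvert^2\,\mathrm{d}k$ and $\lVert\omega^{-1/2}\lambda_x\rVert_{\mathfrak{H}}^2 = \int_{\mathbb{R}^d}\omega^{-2}\lvert\chi\rvert^2\,\mathrm{d}k$, both finite since $\chi\in C_0^{\infty}(\mathbb{R}^d)$ and $\omega\ge m>0$ by \eqref{hyp:omega}, and both manifestly independent of $x$. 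Hence $C:=C_{\lambda_x}$ works for all $x$ and all $\hslash\in(0,1)$, which is the claim. I do not expect a genuine obstacle here; the only points requiring a little care are the uniformity in $\hslash$ (guaranteed by the balanced scaling and by absorbing the commutator term $\hslash\lVert f\rVert_{\mathfrak{H}}^2$ into $\lVert f\rVert_{\mathfrak{H}}^2$ for $\hslash<1$) and the tracking of the operator ordering via adjoints.
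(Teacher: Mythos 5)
Your proposal is correct and is essentially the paper's argument in slightly more abstract clothing: the quadratic-form bound $a_\hslash^*(f)a_\hslash(f)\leq \lVert\omega^{-1/2}f\rVert_{\mathfrak{H}}^2\,\mathrm{d}\Gamma_\hslash(\omega)$ is exactly the weighted Cauchy--Schwarz computation the paper performs by hand on $\mathscr{H}^{(1,n)}$, the creation-operator case is handled identically via the CCR with the $\hslash\lVert\lambda_x\rVert_{\mathfrak{H}}^2$ term absorbed using $\hslash<1$, and your direct-integral reduction to an $x$-uniform fiber bound is just a repackaging of the paper's use of $\lVert\omega^{-1/2}\lambda_{(\cdot)}\rVert_{L^\infty(\mathbb{R}^d;\mathfrak{H})}$, with the uniformity in $x$ coming from $\lvert\lambda_x(k)\rvert=\omega^{-1/2}(k)\lvert\chi(k)\rvert$ as you note. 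No gaps.
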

\begin{proof}
  Let us start with the annihilation operator, and let $\Psi \in
  \mathscr{H}^{(1,n)}$. Then,
  \begin{equation*}
    \begin{split}
      \| a_{\hslash}(\lambda_{x}) \Psi\|_{\mathscr{H}^{(1,n-1)}}^2 = \int_{\mathbb{R}^d \times \mathbb{R}^{d(n-1)}} \mathrm{d} x \,\mathrm{d} k_2 \ldots \mathrm{d} k_n  \, \hslash n \,\left| \int_{\mathbb{R}^d} \mathrm{d} k_1\, \overline{\lambda_x(k_1)} \, \Psi(x, k_1, \ldots, k_n) \right|^2\\
      = \int_{\mathbb{R}^d \times \mathbb{R}^{d(n-1)}} \mathrm{d} x \,\mathrm{d} k_2 \ldots \mathrm{d} k_n  \, \hslash n \,\left| \int_{\mathbb{R}^d} \mathrm{d} k_1\, \omega(k_1)^{-\frac{1}{2}}\overline{\lambda_x(k_1)} \, \omega(k_1)^{\frac{1}{2}}\Psi(x, k_1, \ldots, k_n) \right|^2\\
      \leq \int_{\mathbb{R}^d \times \mathbb{R}^{d(n-1)}} \mathrm{d} x \,\mathrm{d} k_1 \ldots \mathrm{d} k_n\, \bigl\lVert \omega^{-\frac{1}{2}}\lambda_x  \bigr\rVert_{\mathfrak{H}}^2 \hslash n\, \omega(k_1)\, \bigl\lvert \Psi(x, k_1, \ldots, k_n)  \bigr\rvert_{}^2\\
      \leq \|\omega^{-\frac{1}{2}}\lambda_{(\cdot)}\|^2_{L^{\infty}(\mathbb{R}^{d}; \mathfrak{H})} \lVert \mathrm{d}\Gamma^{(2)}_{\hslash}(\omega)^{\frac{1}{2}}\Psi  \rVert_{\mathscr{H}^{(1,n)}}^{2}\leq \frac{1}{m} \|\lambda_{(\cdot)}\|^2_{L^{\infty}(\mathbb{R}^{d}; \mathfrak{H})} \lVert\mathrm{d}\Gamma^{(2)}_{\hslash}(\omega)^{\frac{1}{2}}\Psi  \rVert_{\mathscr{H}^{(1,n)}}^{2}\; .
    \end{split}
  \end{equation*}
  The proof for the creation operator makes use of the canonical commutation
  relations, of the result for the annihilation operator, and of the fact
  that $\hslash\in (0,1)$:
  \begin{equation*}
    \begin{split}
      \| a_{\hslash}^{*}(\lambda_{x}) \Psi\|_{\mathscr{H}^{(1)}}^2=\| a_{\hslash}(\lambda_{x}) \Psi\|_{\mathscr{H}^{(1)}}^2+ \langle \Psi  , \hslash\lVert \lambda_x  \rVert_{\mathfrak{H}}^2\Psi \rangle_{\mathscr{H}^{(1)}}\\\leq (1+\tfrac{1}{m})\|\lambda_{(\cdot)}\|^2_{L^{\infty}(\mathbb{R}^{d}; \mathfrak{H})}\bigl\langle \Psi  ,  \bigl(\mathrm{d}\Gamma^{(2)}_{\hslash}(\omega)+1\bigr)\Psi\bigr\rangle_{\mathscr{H}^{(1)}}\; .
    \end{split}
  \end{equation*}
\end{proof}

\begin{lemma}\label{lem:KR.est}
  Let $0<\alpha<\beta$ given. Then there exist $b,C>0$ such that for any
  $n\in\mathbb{N}$ and $\hslash\in(0,1)$ satisfying $n\hslash\in
  (\alpha,\beta)$, one has:
  \begin{gather}
    \label{eq:6}
    \| H_\hslash^{(n) } \; (H_\hslash^{0,(n)}+1)^{-1}\|_{\mathscr{B}(\mathscr{H})}\leq C\; ,\\
    \label{eq:7}
    \| H_\hslash^{0,(n) } \; (H_\hslash^{(n)}+b)^{-1}\|_{\mathscr{B}(\mathscr{H})}\leq C\; .
  \end{gather}
\end{lemma}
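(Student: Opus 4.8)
The plan is to deduce both \eqref{eq:6} and \eqref{eq:7} from a single uniform Kato--Rellich estimate: for every $\varepsilon>0$ there is $C_\varepsilon>0$ such that, for all $n\in\mathbb{N}$ and $\hslash\in(0,1)$ with $n\hslash\in(\alpha,\beta)$,
\[
  \lVert H_\hslash^{I,(n)}\Psi\rVert\le\varepsilon\,\lVert H_\hslash^{0,(n)}\Psi\rVert+C_\varepsilon\,\lVert\Psi\rVert\qquad\text{for all }\Psi\in\mathscr{D}(H_\hslash^{0,(n)}).
\]
Granting this, \eqref{eq:6} follows at once by writing $H_\hslash^{(n)}(H_\hslash^{0,(n)}+1)^{-1}=H_\hslash^{0,(n)}(H_\hslash^{0,(n)}+1)^{-1}+H_\hslash^{I,(n)}(H_\hslash^{0,(n)}+1)^{-1}$: the first summand has norm $\le1$ by the spectral theorem (since $H_\hslash^{0,(n)}\ge0$), and the second has norm $\le\varepsilon+C_\varepsilon$ by applying the displayed bound to $(H_\hslash^{0,(n)}+1)^{-1}\Psi$. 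For \eqref{eq:7}, the same bound gives $\lVert H_\hslash^{0,(n)}\Psi\rVert\le\lVert H_\hslash^{(n)}\Psi\rVert+\varepsilon\lVert H_\hslash^{0,(n)}\Psi\rVert+C_\varepsilon\lVert\Psi\rVert$, hence (taking $\varepsilon<1$) $\lVert H_\hslash^{0,(n)}\Psi\rVert\le(1-\varepsilon)^{-1}(\lVert H_\hslash^{(n)}\Psi\rVert+C_\varepsilon\lVert\Psi\rVert)$; by the Kato--Rellich theorem $H_\hslash^{(n)}$ is self-adjoint on $\mathscr{D}(H_\hslash^{0,(n)})$ and bounded below by some $-E_0$ with $E_0$ depending only on $\varepsilon,C_\varepsilon$ (hence uniform), so choosing $b>E_0$ makes $H_\hslash^{(n)}+b\ge b-E_0>0$ uniformly, and combining $\lVert\Psi\rVert\le(b-E_0)^{-1}\lVert(H_\hslash^{(n)}+b)\Psi\rVert$ with $\lVert H_\hslash^{(n)}\Psi\rVert\le\lVert(H_\hslash^{(n)}+b)\Psi\rVert+b\lVert\Psi\rVert$ closes the argument with a constant depending only on $\varepsilon,C_\varepsilon,b,E_0$.

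To prove the displayed estimate I would work on the fixed sector $\mathscr{H}^{(n)}=L_s^2(\mathbb{R}^{dn})\otimes\Gamma_s(\mathfrak{H})$, where the interaction reads, fiberwise in the nucleon coordinates $X=(x_1,\dots,x_n)$,
\[
  H_\hslash^{I,(n)}=\hslash\sum_{j=1}^n\phi_\hslash(\lambda_{x_j})=\hslash\bigl(a_\hslash(\Lambda_X)+a_\hslash^*(\Lambda_X)\bigr),\qquad \Lambda_X:=\sum_{j=1}^n\lambda_{x_j}\in\mathfrak{H}.
\]
Running the Cauchy--Schwarz computation from the proof of Lemma~\ref{lem:dgammaomega} with $\lambda_x$ replaced by $\Lambda_X$, and using that by \eqref{hyp:lambda} and $\omega\ge m$ one has the $X$-independent bounds $\lVert\omega^{-1/2}\Lambda_X\rVert_{\mathfrak{H}}\le n\lVert\omega^{-1}\chi\rVert_{L^2}$ and $\lVert\Lambda_X\rVert_{\mathfrak{H}}\le n\lVert\omega^{-1/2}\chi\rVert_{L^2}$, one obtains (the creation term contributing the $\lVert\Psi\rVert$ part through the canonical commutation relations, with the harmless extra factor $\hslash<1$ absorbed)
\[
  \lVert H_\hslash^{I,(n)}\Psi\rVert\le C_1\,\hslash n\,\bigl(\lVert\mathrm{d}\Gamma_{\hslash}^{(2)}(\omega)^{1/2}\Psi\rVert+\lVert\Psi\rVert\bigr),
\]
and $\hslash n<\beta$ makes the prefactor uniformly bounded.

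It then remains to trade the half-power of the meson energy for a full power of $H_\hslash^{0,(n)}$ with a small constant. Since $-\Delta+V\ge0$ by \eqref{hyp:pot}, one has $\mathrm{d}\Gamma_{\hslash}^{(2)}(\omega)\le H_\hslash^{0,(n)}$ on $\mathscr{H}^{(n)}$, hence $\lVert\mathrm{d}\Gamma_{\hslash}^{(2)}(\omega)^{1/2}\Psi\rVert^2=\langle\Psi,\mathrm{d}\Gamma_{\hslash}^{(2)}(\omega)\Psi\rangle\le\langle\Psi,H_\hslash^{0,(n)}\Psi\rangle\le\lVert H_\hslash^{0,(n)}\Psi\rVert\,\lVert\Psi\rVert$; Young's inequality then replaces $\lVert\mathrm{d}\Gamma_{\hslash}^{(2)}(\omega)^{1/2}\Psi\rVert$ by $\sqrt{t}\,\lVert H_\hslash^{0,(n)}\Psi\rVert+(2\sqrt{t})^{-1}\lVert\Psi\rVert$ for any $t>0$, and choosing $t$ small enough yields the claimed estimate with $\varepsilon$ arbitrarily small and $C_\varepsilon$ independent of $n,\hslash$ in the prescribed range.

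I expect the only real point to watch is uniformity in the joint regime $n\hslash\in(\alpha,\beta)$, $\hslash\in(0,1)$: the interaction is a sum of $n\sim\hslash^{-1}$ operator terms and the crude estimate above loses a whole factor $n$ rather than $\sqrt{n}$, but this is exactly compensated by the explicit prefactor $\hslash$ (plus one further power of $\hslash<1$ carried by the rescaled field operators), so that $\hslash n<\beta$ renders everything uniform; in particular no symmetry-based refinement of the bound on $\sum_j\lambda_{x_j}$ is needed, the triangle inequality $\lVert\Lambda_X\rVert_{\mathfrak{H}}\le n\lVert\lambda_{x_1}\rVert_{\mathfrak{H}}$ already suffices. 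The smallness of the relative bound comes entirely from the form-versus-operator interpolation in the last step, not from smallness in $\hslash$, and everything else is the standard Kato--Rellich bookkeeping.
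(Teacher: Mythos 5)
Your proof is correct and follows essentially the same route as the paper: a uniform relative bound of $H_\hslash^{I,(n)}$ with respect to $H_\hslash^{0,(n)}$ (coming from the same Cauchy--Schwarz estimate as in Lemma~\ref{lem:dgammaomega}, with the factor $\hslash n\in(\alpha,\beta)$ ensuring uniformity, and the half-power of $\mathrm{d}\Gamma_\hslash^{(2)}(\omega)$ traded for an arbitrarily small multiple of $H_\hslash^{0,(n)}$), which gives \eqref{eq:6} directly and \eqref{eq:7} via Kato--Rellich with a uniform lower bound. The only cosmetic deviation is that you bound $\sum_j\lambda_{x_j}$ by the triangle inequality instead of invoking symmetry as in \eqref{eq:25}, which changes nothing in the estimates.
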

\begin{proof}
  Thanks to the estimate of \cref{lem:dgammaomega}, there exists $C>0$ such
  that
  \begin{equation}
    \label{eq:25}
    \Bigl\|\hslash \sum_{j=1}^n \phi_\hslash(\lambda_{x_j}) \; (H_\hslash^{0,(n)}+1)^{-1/2}\Bigr\|_{\mathscr{B}(\mathscr{H})} \leq C\,,
  \end{equation}
  uniformly in $n$ and $\hslash$ such that $n\hslash\in (\alpha,\beta)$. This
  proves \eqref{eq:6}, since
  $$ H_\hslash^{(n) } = H_\hslash^{0,(n) }+H_\hslash^{I,(n) }=
  H_\hslash^{0,(n) }+ \hslash \sum_{j=1}^n \phi_\hslash(\lambda_{x_j})\,.$$
  In particular, it follows that for any $\gamma\in(0,1)$ there exists $b >0$
  such that, for any $\varphi\in \mathscr{D}(H_\hslash^{0,(n)})$,
  \begin{equation}
    \label{eq.KR}
    \| H_\hslash^{I, (n)} \varphi\| \leq \gamma \,\| H_\hslash^{0,(n)} \varphi\|+ b\, \| \varphi\|\,,
  \end{equation}
  uniformly in $n$ and $\hslash$ as before.  As a consequence of the
  Kato-Rellich Theorem, $H_\hslash^{(n)}$ is self-adjoint on
  $\mathscr{D}(H_\hslash^{0,(n)})$, uniformly bounded from below and
  \eqref{eq:7} holds true for some $b>0$ and $C>0$.
\end{proof}
Concerning the total Hamiltonian $H_{\hslash}$, the following uniform
inequalities are very useful.
\begin{lemma}\label{lem:formcontrol}
  There exist $ c_1, c_2, \underline{\alpha},\underline{\beta}>0$ such that
  for all $\hslash \in (0,1)$,
  \begin{gather}
    \label{eq:10}
    \pm H_{\hslash}^I \leq c_1 \,(H_\hslash^0 + N_1^2 + 1)\; ,\\
    \label{eq:11}
    H_\hslash^0 \leq c_2 \, (H_{\hslash} + \underline{\alpha} N_1^2 + \underline{\beta})\; .
  \end{gather}
\end{lemma}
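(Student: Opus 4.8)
The plan is to prove both bounds fiberwise on each sector $\mathscr{H}^{(n)}$, with constants independent of $n$ and of $\hslash\in(0,1)$, and then reassemble them using $\mathscr{H}=\bigoplus_n\mathscr{H}^{(n)}$, on which $N_1$ restricts to the scalar $\hslash n$ and $H_\hslash^0,H_\hslash^I,H_\hslash$ are block diagonal. The only input is \cref{lem:dgammaomega}: read on the Fock factor of $\mathscr{H}^{(1)}=L^2(\mathbb{R}^d)\otimes\Gamma_s(\mathfrak{H})$, it provides $C>0$, uniform in $x\in\mathbb{R}^d$ and $\hslash\in(0,1)$, with $\|a_\hslash^\natural(\lambda_x)\Phi\|\leq C\|(\mathrm{d}\Gamma_\hslash^{(2)}(\omega)+1)^{1/2}\Phi\|$ for $\Phi$ in the form domain of $\mathrm{d}\Gamma_\hslash^{(2)}(\omega)$; hence $\|\phi_\hslash(\lambda_x)\Phi\|\leq 2C\|(\mathrm{d}\Gamma_\hslash^{(2)}(\omega)+1)^{1/2}\Phi\|$ and, passing to quadratic forms,
\[
  \bigl|\langle\Phi,\phi_\hslash(\lambda_x)\Phi\rangle\bigr|\leq 2C\,\|\Phi\|\,\bigl\|(\mathrm{d}\Gamma_\hslash^{(2)}(\omega)+1)^{1/2}\Phi\bigr\|\,.
\]

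Next I would use $H_\hslash^{I,(n)}=\hslash\sum_{j=1}^n\phi_\hslash(\lambda_{x_j})$ on $\mathscr{H}^{(n)}=L_s^2(\mathbb{R}^{dn})\otimes\Gamma_s(\mathfrak{H})$, where for frozen positions each $\phi_\hslash(\lambda_{x_j})$ acts only on the Fock factor. Applying the pointwise form bound above to $\Psi(\vec x)$, integrating in $\vec x$, using Cauchy--Schwarz in $L^2(\mathbb{R}^{dn})$, summing the $n$ terms and multiplying by $\hslash$, one gets
\[
  \bigl|\langle\Psi,H_\hslash^{I,(n)}\Psi\rangle\bigr|\leq 2C\,(\hslash n)\,\|\Psi\|\,\bigl\|(\mathrm{d}\Gamma_\hslash^{(2)}(\omega)+1)^{1/2}\Psi\bigr\|\,,\qquad \Psi\in\mathscr{Q}(H_\hslash^{0,(n)})\,.
\]
Since $V\geq 0$ gives $\mathrm{d}\Gamma_\hslash^{(1)}(-\Delta+V)\geq 0$, one has $\mathrm{d}\Gamma_\hslash^{(2)}(\omega)+1\leq H_\hslash^{0,(n)}+1$ as forms, while $\hslash n$ is the value of $N_1$ on $\mathscr{H}^{(n)}$; Young's inequality $2ab\leq\epsilon a^2+\epsilon^{-1}b^2$ then yields, for every $\epsilon>0$,
\[
  \pm H_\hslash^{I,(n)}\leq \epsilon\,H_\hslash^{0,(n)}+\epsilon^{-1}C^2\,N_1^2+\epsilon \qquad\text{on }\mathscr{H}^{(n)}\,.
\]
Choosing $\epsilon=1$ and reassembling over $n$ gives \eqref{eq:10} with $c_1=\max\{1,C^2\}$, on the form domain $\mathscr{Q}(H_\hslash^0)\cap\mathscr{Q}(N_1^2)$.

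For \eqref{eq:11} I would instead take $\epsilon=\tfrac12$ in the last display, so that $-H_\hslash^I\leq\tfrac12 H_\hslash^0+C'(N_1^2+1)$ with $C'=2C^2+\tfrac12$; then $H_\hslash=H_\hslash^0+H_\hslash^I\geq\tfrac12 H_\hslash^0-C'(N_1^2+1)$, and rearranging,
\[
  H_\hslash^0\leq 2H_\hslash+2C'N_1^2+2C'=2\bigl(H_\hslash+C'N_1^2+C'\bigr)\,,
\]
which is \eqref{eq:11} with $c_2=2$ and $\underline{\alpha}=\underline{\beta}=C'$. The sectorwise inequality again transfers to $\mathscr{H}$ by block-diagonality.

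I do not expect a genuine obstacle: this is essentially a quantitative, $\hslash$- and $n$-uniform version of the Kato--Rellich estimate already exploited in \cref{lem:KR.est}. The points needing a little attention are the uniformity in $\hslash$, which is automatic from the $\hslash$-scaling and from \cref{lem:dgammaomega}, and the fact that $N_1$ is unbounded on $\mathscr{H}$: this is why the estimates must be carried out sector by sector and why $N_1^2$ (rather than a constant) appears on the right, and it means that $H_\hslash$ itself need not be bounded below on the full space --- the displayed inequalities are to be read as form inequalities on $\mathscr{Q}(H_\hslash^0)\cap\mathscr{Q}(N_1^2)$, which the first one exhibits as a subspace of $\mathscr{Q}(H_\hslash)$.
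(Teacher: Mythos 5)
Your proof is correct and follows essentially the same route as the paper: fiberwise reduction to $\mathscr{H}^{(n)}$, the uniform bound of \cref{lem:dgammaomega} to control $\hslash\sum_j\phi_\hslash(\lambda_{x_j})$ by $\|(\mathrm{d}\Gamma_\hslash^{(2)}(\omega)+1)^{1/2}\cdot\|\,\|N_1\cdot\|$, then Young's inequality and the comparison $\mathrm{d}\Gamma_\hslash^{(2)}(\omega)\leq H_\hslash^0$, with \eqref{eq:11} obtained by absorbing part of $H_\hslash^0$ (the paper shaves only the field part with a small parameter $\eta$, you take $\epsilon=1/2$ of all of $H_\hslash^0$ — an inessential difference).
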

\begin{proof}
  Let $\varphi^{(n)} \in \mathscr{H}^{(n)}$. By symmetry of the function
  $\varphi^{(n)}$, we have
  \begin{eqnarray}
    \label{eq:control1}
    \bigl|\langle \varphi^{(n)},H_\hslash^I \varphi^{(n)} \rangle\bigr| &=& \hslash n \bigl|\langle \varphi^{(n)}, \, \phi_{\hslash}(\lambda_{x_1})\varphi^{(n)}\rangle\bigr|	\nonumber\\
    &\leq& \, \| (\mathrm{d} \Gamma_{\hslash}^{(2)}(\omega) + 1)^{1/2}\varphi^{(n)}\|_{\mathscr{H}^{(n)}}\, \|   (\mathrm{d} \Gamma_{\hslash}^{(2)}(\omega) + 1)^{-1/2}\, \hslash n\,\phi_{\hslash}(\lambda_{x_1})\varphi^{(n)}\|_{\mathscr{H}^{(n)}} \nonumber \\
    &\leq&  C \, \| (\mathrm{d} \Gamma_{\hslash}^{(2)}(\omega) + 1)^{1/2}\varphi^{(n)}\|_{\mathscr{H}^{(n)}}\|N_1 \varphi^{(n)}\|_{\mathscr{H}^{(n)}} \; ,
  \end{eqnarray}
  where \cref{lem:dgammaomega} is used and $C>0$ is a constant independent of
  $n\in\N$.

  Now, to prove \eqref{eq:10} we use \eqref{eq:control1} to show that
  \begin{align*}
    |\langle \varphi^{(n)},H_\hslash^I\varphi^{(n)} \rangle| &\leq  C \, \| (\mathrm{d} \Gamma_{\hslash}^{(2)}(\omega) + 1)^{1/2}\varphi^{(n)}\|_{\mathscr{H}^{(n)}}\|N_1 \varphi^{(n)}\|_{\mathscr{H}^{(n)}} \\
    &\leq C\, \langle  \varphi^{(n)},  \bigl(\mathrm{d} \Gamma_{\hslash}^{(2)}(\omega) + N^2_1 +1\bigr) \varphi^{(n)} \rangle  \\
    &\leq  C \, \langle  \varphi^{(n)},  \bigl(H_\hslash^0+ N^2_1 +1\bigr)\varphi^{(n)} \rangle\; .
  \end{align*}
  Now, since $H_\hslash^I$ and $H_\hslash^0 + N_1^2 + 1$ commute with $N_1$,
  we can sum over $n$ and conclude the proof.

  To prove \eqref{eq:11}, we use again \eqref{eq:control1}, and the
  inequality $2ab \leq \frac{1}{\eta^2} a^2 + \eta^2 b^2$ that yield
  \begin{align*}
    |\langle \varphi^{(n)},H_\hslash^I \varphi^{(n)} \rangle| &\leq  C \, \| (\mathrm{d} \Gamma_{\hslash}^{(2)}(\omega) + 1)^{1/2}\varphi^{(n)}\|_{\mathscr{H}^{(n)}}\|N_1 \varphi^{(n)}\|_{\mathscr{H}^{(n)}}  \\
    &\leq \frac{C}{\eta^2} \langle \varphi^{(n)}, \bigl(\mathrm{d} \Gamma_{\hslash}^{(2)}(\omega) + 1\bigr)\varphi^{(n)}\rangle + C \eta^2 \langle \varphi^{(n)},  N^2_1 \varphi^{(n)}\rangle\; .
  \end{align*}
  Hence,
  \[
    H_\hslash^I \geq - \frac{C}{\eta^2} (\mathrm{d}
    \Gamma_{\hslash}^{(2)}(\omega) + 1) - C\eta^2 N^2_1\,.
  \]
  Finally, adding $H_\hslash^0$ on both sides one obtains
  \[ H_{\hslash} + C\eta^2 N^2_1 \geq \mathrm{d}
    \Gamma_{\hslash}^{(1)}(-\Delta + V) + \left( 1- \frac{C}{\eta^2}\right)
    \mathrm{d} \Gamma_{\hslash}^{(2)}(\omega) -\frac{C}{\eta^2}. \]Now
  choosing $\eta >0$ such that $C\eta^{-2} < 1$, it follows that
  \[(1- C\eta^{-2}) \, H_\hslash^0 \leq H_{\hslash} + C\eta^2 N^2_1 +
    C\eta^{-2}\,.\]
\end{proof}

The regularity property below is a straightforward consequence of the above
Lemma \ref{lem:formcontrol}. Beforehand, denote
\begin{gather}
  \label{eq:12}
  \Psi_{\hslash}(t) = e^{-i\frac{t}{\hslash} H_{\hslash}} \Psi_{\hslash}\; ,\\
  \label{eq:13}
  S=H_{\hslash}^0+N_1^2+1\; .
\end{gather}

\begin{corollary}
  \label{lem:controltime}
  Assume that $\{\Psi_{\hslash}\}_{\hslash \in (0,1)} \subseteq \mathscr{H}$
  is a family of normalized vectors satisfying for some constant $C>0$ and
  all $\hslash\in(0,1)$,
  \begin{equation*}
    \left\langle \Psi_{\hslash}, S \,\Psi_{\hslash} \right\rangle \leq C.
  \end{equation*}
  Then there exists a constant $c_3>0$ such that, uniformly in
  $\hslash\in(0,1)$ and in $t \in \mathbb{R}$,
  \begin{equation}\label{eq:controltime}
    \left\langle \Psi_{\hslash}(t), S\,\Psi_{\hslash}(t) \right\rangle \leq c_3.
  \end{equation}
\end{corollary}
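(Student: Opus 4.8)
The plan is to reduce the control of $S=H_\hslash^0+N_1^2+1$ along the evolution to two conservation laws — conservation of the total energy $H_\hslash$ and of the rescaled nucleon number $N_1$, both of which commute with $e^{-i\frac{t}{\hslash}H_\hslash}$ by \cref{rem:1} — and then to trade these conserved quantities back against $S$ via the two-sided form bounds \eqref{eq:10}--\eqref{eq:11} of \cref{lem:formcontrol}. Since, again by \cref{rem:1}, everything decomposes over the sectors $\mathscr{H}^{(n)}$, on which $N_1$ acts as the scalar $\hslash n$, I would work fibre by fibre: write $\Psi_\hslash(t)^{(n)}=e^{-i\frac{t}{\hslash}H_\hslash^{(n)}}\Psi_\hslash^{(n)}$ and $S^{(n)}=H_\hslash^{0,(n)}+(\hslash n)^2+1$, so that $\langle\Psi_\hslash(t),S\,\Psi_\hslash(t)\rangle=\sum_n\langle\Psi_\hslash(t)^{(n)},S^{(n)}\Psi_\hslash(t)^{(n)}\rangle$ and the hypothesis reads $\sum_n\langle\Psi_\hslash^{(n)},S^{(n)}\Psi_\hslash^{(n)}\rangle\le C$; in particular each $\Psi_\hslash^{(n)}\in\mathscr{Q}(S^{(n)})=\mathscr{Q}(H_\hslash^{0,(n)})=\mathscr{Q}(H_\hslash^{(n)})$, the last equality coming from the relative boundedness of $H_\hslash^{I,(n)}$ with respect to $H_\hslash^{0,(n)}$ (cf.\ \cref{lem:dgammaomega}, \cref{lem:KR.est}), so that $\Psi_\hslash(t)^{(n)}$ stays in this common form domain for all $t$.

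On a fixed sector the estimate is then a short chain of inequalities. First, evaluating \eqref{eq:11} at $\Psi_\hslash(t)^{(n)}$ and using that the unitary group preserves both the norm and the quadratic form of $H_\hslash^{(n)}$ gives $\langle\Psi_\hslash(t)^{(n)},S^{(n)}\Psi_\hslash(t)^{(n)}\rangle\le c_2\langle\Psi_\hslash^{(n)},H_\hslash^{(n)}\Psi_\hslash^{(n)}\rangle+(c_2\underline{\alpha}+1)(\hslash n)^2\|\Psi_\hslash^{(n)}\|^2+(c_2\underline{\beta}+1)\|\Psi_\hslash^{(n)}\|^2$. Next, bound the energy expectation by the data through \eqref{eq:10} and $H_\hslash^{0,(n)}\le S^{(n)}$: $\langle\Psi_\hslash^{(n)},H_\hslash^{(n)}\Psi_\hslash^{(n)}\rangle=\langle\Psi_\hslash^{(n)},(H_\hslash^{0,(n)}+H_\hslash^{I,(n)})\Psi_\hslash^{(n)}\rangle\le(1+c_1)\langle\Psi_\hslash^{(n)},S^{(n)}\Psi_\hslash^{(n)}\rangle$, while $(\hslash n)^2\|\Psi_\hslash^{(n)}\|^2=\langle\Psi_\hslash^{(n)},N_1^2\Psi_\hslash^{(n)}\rangle\le\langle\Psi_\hslash^{(n)},S^{(n)}\Psi_\hslash^{(n)}\rangle$. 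Collecting constants yields $\langle\Psi_\hslash(t)^{(n)},S^{(n)}\Psi_\hslash(t)^{(n)}\rangle\le K\,\langle\Psi_\hslash^{(n)},S^{(n)}\Psi_\hslash^{(n)}\rangle+(c_2\underline{\beta}+1)\|\Psi_\hslash^{(n)}\|^2$ with $K:=c_2(1+c_1)+c_2\underline{\alpha}+1$, independent of $n$, $\hslash$ and $t$. Summing over $n$ and using $\sum_n\|\Psi_\hslash^{(n)}\|^2=1$ together with the hypothesis gives $\langle\Psi_\hslash(t),S\,\Psi_\hslash(t)\rangle\le KC+c_2\underline{\beta}+1=:c_3$, uniformly in $\hslash\in(0,1)$ and $t\in\RR$.

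The only genuinely delicate point — the one I would be careful about — is the domain bookkeeping used silently above: that $\Psi_\hslash(t)$ remains in $\mathscr{Q}(S)=\mathscr{Q}(H_\hslash^0)\cap\mathscr{D}(N_1)$ and that the conservation of $H_\hslash$ and of $N_1^2$, hence the right to evaluate \eqref{eq:10}--\eqref{eq:11} at $\Psi_\hslash(t)$, are legitimate. Fibring over $n$ makes the $N_1$ part trivial (a scalar there), and on $\mathscr{H}^{(n)}$ the identification $\mathscr{Q}(H_\hslash^{0,(n)})=\mathscr{Q}(H_\hslash^{(n)})$, which is invariant under $e^{-i\frac{t}{\hslash}H_\hslash^{(n)}}$, follows from the Kato--Rellich-type relative bound of \cref{lem:KR.est}; if one prefers to avoid form-domain arguments, one may instead run the chain of inequalities for $\Psi_\hslash^{(n)}$ in the operator core $\mathscr{D}(H_\hslash^{(n)})\subseteq\mathscr{Q}(H_\hslash^{0,(n)})$ and then pass to the limit in the $S^{(n)}$-form norm. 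Uniformity in $\hslash$ and $t$ is then automatic, since none of the constants $c_1,c_2,\underline{\alpha},\underline{\beta}$ produced by \cref{lem:formcontrol} depends on $n$, $\hslash$ or $t$.
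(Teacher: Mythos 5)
Your argument is correct and is exactly the intended one: the paper omits the proof, calling the corollary a straightforward consequence of \cref{lem:formcontrol}, and the intended chain is precisely yours — bound $S$ by \eqref{eq:11}, use conservation of $H_\hslash$ and of $N_1$ under the flow, then return to $S$ at time zero via \eqref{eq:10}. Your sector-by-sector bookkeeping and the remark on form domains are a careful (and harmless) elaboration of the same route, with constants manifestly uniform in $\hslash$ and $t$.
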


We conclude this section by proving \emph{uniform energy-number} estimates,
inspired by the ones in \citep[][Section 3.5]{derezinski1999rmp} and which
will be useful in studying correlation functions in
\cref{sec:class-limit-asympt}.
\begin{proposition}
  \label{prop:enrgynumberest}
  Let $0<\alpha<\beta$. For $k,r\in\mathbb{N}$ there exist $c,b>0$ such that
  for any $n\in\mathbb{N}$ and any $\hslash\in(0,1)$ satisfying $n\hslash\in
  (\alpha,\beta)$, the following estimates hold true :
  \begin{enumerate}
  \item\label{item:1} $ \|(N_2+1)^k (H_\hslash^{(n)}+b)^{-k}\|\leq c$.
  \item\label{item:2} $ \|(N_2+1)^{k+r} (H_\hslash^{(n)}+b)^{-k}
    (N_2+1)^{-r}\|\leq c$.
  \item\label{item:3} $ \|H_\hslash^{0,(n)} (N_2+1)^k
    (H_\hslash^{(n)}+b)^{-(k+1)}\|\leq c$.
  \end{enumerate}
\end{proposition}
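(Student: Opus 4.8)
The plan is to deduce all three bounds from two ingredients and then run short inductions. The first ingredient is elementary: since $\omega(k)=\sqrt{k^{2}+m^{2}}\ge m$ and $-\Delta+V\ge 0$, one has the operator inequality $N_{2}\le\tfrac1m\,\mathrm{d}\Gamma_{\hslash}^{(2)}(\omega)\le\tfrac1m\,H_{\hslash}^{0,(n)}$, and since $N_{2}$, $\mathrm{d}\Gamma_{\hslash}^{(2)}(\omega)$ and $H_{\hslash}^{0,(n)}$ mutually commute this iterates to $(N_{2}+1)^{k}\le C_{k}(H_{\hslash}^{0,(n)}+1)^{k}$. The second, which is the real core, is the \emph{comparability of powers}: for every $k\in\mathbb{N}$ there is $C_{k}>0$ such that, uniformly for $n\hslash\in(\alpha,\beta)$,
\[
 \bigl\|(H_{\hslash}^{0,(n)}+1)^{k}(H_{\hslash}^{(n)}+b)^{-k}\bigr\|\le C_{k}\qquad\text{and}\qquad\bigl\|(H_{\hslash}^{(n)}+b)^{k}(H_{\hslash}^{0,(n)}+1)^{-k}\bigr\|\le C_{k}.
\]
For $k=1$ this is \cref{lem:KR.est}. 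For $k\ge 2$ one argues by induction, writing $(H_{\hslash}^{0,(n)}+1)^{k}(H_{\hslash}^{(n)}+b)^{-k}=(H_{\hslash}^{0,(n)}+1)^{k-1}\bigl[(H_{\hslash}^{0,(n)}+1)(H_{\hslash}^{(n)}+b)^{-1}\bigr](H_{\hslash}^{(n)}+b)^{-(k-1)}$ and commuting $(H_{\hslash}^{0,(n)}+1)^{k-1}$ through the bracket; the single commutator produced contains $[H_{\hslash}^{0,(n)},H_{\hslash}^{(n)}]=[H_{\hslash}^{0,(n)},H_{\hslash}^{I,(n)}]$, which is controlled by the sublemma $\|(H_{\hslash}^{0,(n)}+1)^{l}[H_{\hslash}^{0,(n)},H_{\hslash}^{I,(n)}](H_{\hslash}^{0,(n)}+1)^{-l-1}\|\le C_{l}$, after which the induction hypothesis closes the estimate (the reversed bound is obtained the same way from \eqref{eq:6}). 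The proof of that sublemma is where the hypotheses bite: one computes $[H_{\hslash}^{0,(n)},H_{\hslash}^{I,(n)}]=\hslash\sum_{j}[-\Delta_{x_{j}},\phi_{\hslash}(\lambda_{x_{j}})]+\hslash^{2}\sum_{j}\bigl(a_{\hslash}^{*}(\omega\lambda_{x_{j}})-a_{\hslash}(\omega\lambda_{x_{j}})\bigr)$ and uses \eqref{hyp:lambda} to see that all the dressed form factors $\omega^{s}\chi$, $k^{\alpha}\chi$ lie in $\mathfrak{H}$ with norms independent of $x_{j}$, keeping constants uniform for $n\hslash\in(\alpha,\beta)$. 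Granting this, the first estimate is immediate, since $(N_{2}+1)^{k}(H_{\hslash}^{(n)}+b)^{-k}=\bigl[(N_{2}+1)^{k}(H_{\hslash}^{0,(n)}+1)^{-k}\bigr]\bigl[(H_{\hslash}^{0,(n)}+1)^{k}(H_{\hslash}^{(n)}+b)^{-k}\bigr]$ is a product of two bounded operators.

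For the third estimate, $[N_{2},H_{\hslash}^{0,(n)}]=0$ lets one move $(N_{2}+1)^{k}$ to the left and substitute $H_{\hslash}^{0,(n)}(H_{\hslash}^{(n)}+b)^{-1}=\mathds{1}-b\,(H_{\hslash}^{(n)}+b)^{-1}-H_{\hslash}^{I,(n)}(H_{\hslash}^{(n)}+b)^{-1}$; this reduces the claim to the boundedness of $(N_{2}+1)^{k}(H_{\hslash}^{(n)}+b)^{-k}$ and $(N_{2}+1)^{k}(H_{\hslash}^{(n)}+b)^{-(k+1)}$ — both covered by the first estimate — together with the single term $(N_{2}+1)^{k}H_{\hslash}^{I,(n)}(H_{\hslash}^{(n)}+b)^{-(k+1)}$. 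For the last, split $H_{\hslash}^{I,(n)}=H_{\hslash,+}^{I,(n)}+H_{\hslash,-}^{I,(n)}$ into creation and annihilation parts: each shifts the meson number by $\pm1$, so $(N_{2}+1)^{k}H_{\hslash,\pm}^{I,(n)}=H_{\hslash,\pm}^{I,(n)}(N_{2}+1)^{k}g_{\pm}(N_{2})$ with $0\le g_{\pm}(N_{2})\le 2^{k}$; using $\|H_{\hslash,\pm}^{I,(n)}(\mathrm{d}\Gamma_{\hslash}^{(2)}(\omega)+1)^{-1/2}\|\le C$ (by \cref{lem:dgammaomega}, summed over the $n$ nucleon variables with $\hslash n\le\beta$), the fact that $(\mathrm{d}\Gamma_{\hslash}^{(2)}(\omega)+1)^{1/2}(N_{2}+1)^{k}g_{\pm}(N_{2})$ commutes with $H_{\hslash}^{0,(n)}$ and is dominated by $C(H_{\hslash}^{0,(n)}+1)^{k+1}$, and the comparability estimate, this term is bounded after inserting $(\mathrm{d}\Gamma_{\hslash}^{(2)}(\omega)+1)^{-1/2}(\mathrm{d}\Gamma_{\hslash}^{(2)}(\omega)+1)^{1/2}$ at the right place.

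For the second estimate I would first note the algebraic reduction $(N_{2}+1)^{k+r}(H_{\hslash}^{(n)}+b)^{-k}(N_{2}+1)^{-r}=\prod_{s=r}^{k+r-1}B_{1,s}$, where $B_{1,s}:=(N_{2}+1)^{s+1}(H_{\hslash}^{(n)}+b)^{-1}(N_{2}+1)^{-s}$ (obtained by splitting off one resolvent at a time and inserting $(N_{2}+1)^{\pm(k+r-1)}$), so that it suffices to bound each single-resolvent factor $B_{1,s}$, $s\le k+r-1$. For $B_{1,s}$ one pushes the trailing $(N_{2}+1)^{-s}$ leftward through $(H_{\hslash}^{(n)}+b)^{-1}$; the relevant commutator $[N_{2},H_{\hslash}^{(n)}]=[N_{2},H_{\hslash}^{I,(n)}]=-i\hslash\bigl(\hslash\sum_{j}\phi_{\hslash}(i\lambda_{x_{j}})\bigr)$ shifts the meson number by $\pm1$ and satisfies $\|[N_{2},H_{\hslash}^{(n)}](H_{\hslash}^{(n)}+b)^{-1/2}\|\le C\hslash$, and the resulting identity takes the form $B_{1,s}(\mathds{1}-T_{s})=(N_{2}+1)(H_{\hslash}^{(n)}+b)^{-1}$ with $\|T_{s}\|\le C\,2^{s}\hslash\,b^{-1/2}$; choosing $b$ large enough (depending on $r$, which is allowed) makes $\|T_{s}\|<1$ for all $s\le k+r-1$, so $B_{1,s}$ is bounded and the product is controlled, all uniformly in $n$ for $n\hslash\in(\alpha,\beta)$. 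As this outline makes clear, the main obstacle is the higher-order comparability of $H_{\hslash}^{0,(n)}+1$ and $H_{\hslash}^{(n)}+b$, i.e.\ the sublemma on $[H_{\hslash}^{0,(n)},H_{\hslash}^{I,(n)}]$ which rests on the explicit ultraviolet cut-off \eqref{hyp:lambda}; everything else is commutator bookkeeping with the meson number and the resolvents.
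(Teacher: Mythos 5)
Your reduction of items \ref{item:1} and \ref{item:3} to the higher-power comparability $\|(H_\hslash^{0,(n)}+1)^{k}(H_\hslash^{(n)}+b)^{-k}\|\leq C_k$ is where the argument breaks. That comparability is genuinely harder than the $k=1$ case of \cref{lem:KR.est}: your induction requires the sublemma $\|(H_\hslash^{0,(n)}+1)^{l}[H_\hslash^{0,(n)},H_\hslash^{I,(n)}](H_\hslash^{0,(n)}+1)^{-l-1}\|\leq C_l$ for $l\geq 1$, and this is not provable under \cref{hyp:1}. Indeed $[H_\hslash^{0,(n)},H_\hslash^{I,(n)}]$ contains, besides the harmless dressed field operators, the first-order terms $\hslash\sum_j a^{\natural}_\hslash(k\lambda_{x_j})\cdot\nabla_{x_j}$ coming from $[-\Delta_{x_j},\phi_\hslash(\lambda_{x_j})]$; to move $l\geq 1$ powers of $H_\hslash^{0,(n)}+1$ across such a term one must commute $\nabla_{x_j}$ with $V(x_j)$, producing $\nabla V$, which is not even defined under \eqref{hyp:pot} ($V$ is only locally integrable and confining, with no regularity). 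Your justification of the sublemma invokes only the ultraviolet cutoff \eqref{hyp:lambda}, which controls the field-side dressing ($\omega^s\chi$, $k^\alpha\chi\in\mathfrak{H}$) but not this particle-side obstruction. For $l=0$ (hence $k\leq 2$) the argument goes through, but items \ref{item:1} and \ref{item:3} for general $k$, and hence the proposition, are not established this way. (Your item \ref{item:2} argument is also only a sketch — the factorization $B_{1,s}(\mathds{1}-T_s)=(N_2+1)(H_\hslash^{(n)}+b)^{-1}$ is asserted rather than derived — but the decisive gap is the comparability claim.)

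The paper's proof sidesteps exactly this difficulty: it never compares higher powers of $H_\hslash^{0,(n)}+1$ and $H_\hslash^{(n)}+b$. Instead it commutes powers of $A=N_2+1$ (not of $H_\hslash^{0,(n)}+1$) through the resolvent $B=(H_\hslash^{(n)}+b)^{-1}$ via the Leibniz formula. Since $[N_2,H_\hslash^{(n)}]=[N_2,H_\hslash^{I,(n)}]$ produces only field operators with cutoff form factors and an extra factor of $\hslash$ (no particle gradients, no derivatives of $V$), the iterated adjoints ${\rm ad}^j_A(B)$ are explicit finite sums of products $B\,\phi_\hslash(r_{p,1})B\cdots\phi_\hslash(r_{p,p})B$ as in \eqref{eq:23}, each factor $\phi_\hslash(r_{p,q})B$ being uniformly bounded by \eqref{eq:25} together with the first-order bound of \cref{lem:KR.est} — the only energy comparability ever used, including in item \ref{item:3}. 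If you want to salvage your route you would have to either assume smoothness of $V$ with controlled derivatives (not available here) or reorganize the commutator bookkeeping around $N_2$ as the paper does.
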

\begin{proof}
  Throughout the proof, we restrict to $n$ and $\hslash$ satisfying the
  assumption in the proposition. Observe that by \eqref{eq:11} of
  \cref{lem:formcontrol} there exits a constant $b>0$ such that
  $H_\hslash^{(n)}+b \geq 1$.  For simplicity, let us denote
$$
A= N_2+1, \qquad \qquad B= (H_\hslash^{(n)}+b)^{-1}\,,
$$
and $ {\rm ad}^j_A(B)$ the adjoint action defined recursively as
\begin{gather*}
  {\rm ad}^0_A(B)=B \; ,\\
  {\rm ad}^j_A(B)=[A, {\rm ad}^{j-1}_A(B)]\; .
\end{gather*}
Recall the Leibniz's formula
\begin{equation}
  \label{GLF}
  A^k B = \sum_{j=0}^k  {k \choose j} \, {\rm ad}^j_A(B) \, A^{k-j}\,.
\end{equation}
First, observe that for all $j\in\N$
\begin{equation}
  \label{eq:23}
  {\rm ad}^j_A(B)= (i\hslash)^{j} \sum_{p=1}^j c_p \,B  \phi_\hslash (r_{p,1}) \dots B \phi_\hslash(r_{p,p}) B,
\end{equation}
with $r_{p,q}\in\{\hslash\sum_{s=1}^n \lambda_{x_s},i\hslash \sum_{s=1}^n
\lambda_{x_s}\}$ for $q=1,\dots,p,$ and $c_p$ are real coefficients
independent of $\hslash$ and $n$. Such identity is proved by induction on $j$
and using the commutation relations
$$
{\rm ad}^1_A(B)=[A,B]=-i\hslash \;B \, \phi_\hslash\bigg(i\hslash\sum_{s=1}^n
\lambda_{x_s}\bigg) \, B\,,
$$
and
$$
[A,\phi_\hslash(r_{p,q})]=-i\hslash \,\phi_\hslash(ir_{p,q})\,.
$$
In particular, the Leibniz formula \eqref{GLF} make sense as an operator
equality on $\mathscr{D}(A^k)$.

We prove now \ref{item:1} by induction. For $k=1$, the inequality is a
consequence of \cref{lem:KR.est}, and the fact that
$(H_\hslash^{0,(n)}+1)^{-1} N_2$ is bounded uniformly with respect to $n$ and
$\hslash$.  The induction step goes as follows.
\begin{align*}
  A^{k+1} B^{k+1}&= A \left(A^k B\right) B^k \\
  &= A \left( \sum_{j=0}^k  {k \choose j} \, {\rm ad}^j_A(B) \, A^{k-j} \right) B^k \,.
\end{align*}
Using \cref{lem:KR.est} and the inequality \eqref{eq:25}, one shows that
there exists $c>0$ such that for all $p\in\mathbb{N}, q=1,\dots,p,$
\begin{equation}
  \label{eq:comest2}
  \|\phi_\hslash(r_{p,q})  \; B\|_{ \mathscr{B}(\mathscr{H}^{(n)})}  \leq c\,,
\end{equation}
uniformly with respect to $n$ and $\hslash$. Therefore,
\begin{align}
  A^{k+1} B^{k+1}&= A B A^k B^k+
  A B \sum_{j=1}^k  (i\hslash)^j  {k \choose j} \bigg( \sum_{p=1}^j c_p  \,\phi_\hslash(r_{p,1}) \dots B \phi_\hslash(r_{p,p}) B \bigg)
  \; A^{k-j}  B^k \,.
\end{align}
Since all the operators on the right hand side are bounded uniformly with
respect to $n$ and $\hslash$, the proof of \ref{item:1} is completed.

Let us now prove \ref{item:2}. We use the following algebraic operator
identity:
\begin{eqnarray}
  \label{eq.comprod}
  A^{k+r} B^{k} A^{-r}= A^{k+r} B A^{-k-r+1} \dots A^{r+1} B A^{-r}\,.
\end{eqnarray}
The Leibniz's formula and the identity \eqref{eq:23} yield
\begin{eqnarray}
  \label{eq:24}
  A^\gamma B A^{-\gamma+1}= \sum_{j=0}^\gamma (i\hslash)^j  {\gamma \choose j} \bigg( \sum_{p=1}^j c_p  \,B \phi_\hslash(r_{p,1}) \dots B \phi_\hslash(r_{p,p}) B \bigg)
  A^{1-j} \; ,
\end{eqnarray}
for any $\gamma\in\mathbb{N}$. Since all the terms in the right hand side of \eqref{eq:24},
and consequently \eqref{eq.comprod}, are uniformly bounded with respect to
$n$ and $\hslash$, the proof is concluded.

It remains to prove \ref{item:3}. It follows from \ref{item:1}, using
Leibniz's formula. In fact, one has
\begin{align*}
  H_\hslash^{0,(n)} A^k  B^{k+1}&= H_\hslash^{0,(n)} \left( A^k B \right)   B^{k}\\
  &=  H_\hslash^{0,(n)} B A^k B^k+ H_\hslash^{0,(n)} B \sum_{j=1}^k  (i\hslash)^j  {k \choose j} \,\bigg(\sum_{p=1}^j   c_p  \phi_\hslash(r_{p,1}) \cdots B \phi_\hslash(r_{p,p}) B \bigg)   A^{k-j}  B^{k}.
\end{align*}
Hence combining \ref{item:1} with \eqref{eq:comest2} and with the fact that
$H_\hslash^{0,(n)} B$ is uniformly bounded according to \cref{lem:KR.est}, it
follows that all the terms on the right hand side are uniformly bounded with
respect to $n$ and $\hslash$.
\end{proof}

\subsection{Dispersive estimates}
Let us now discuss the dispersive properties of the Yukawa models, both
quantum and classical. Recall that we are supposing that \cref{hyp:1} is
satisfied. We start with a first elementary dispersive estimate.
\begin{lemma}\label{lem:controldecay}
  For every $\xi \in \mathscr{C}_0^\infty(\mathbb{R}\setminus\{0\})$, there
  exists $C>0$ such that, for all $t \in \mathbb{R}$,
  \begin{gather}
    \label{eq:8}
    \left\|\langle x\rangle^{-1-\nu}  \left\langle \xi_t,\lambda_{x} \right\rangle\right\|_{L^\infty}\leq \frac{C}{\langle t\rangle^{1+\nu}}\,,\\
    \label{eq:9}
    \|(-\Delta + V+1)^{-1/2}\, \Im \left\langle \xi_t,\lambda_x \right\rangle\,(-\Delta + V+1)^{-1/2}\|_{\mathscr{B}(L^2(\mathbb{R}^{d}))}\leq \frac{C}{\langle t\rangle^{1+\nu}}\,,
  \end{gather}
  where $\xi_t = e^{-it\omega} \xi$.
\end{lemma}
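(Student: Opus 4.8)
The plan is to treat $\langle\xi_t,\lambda_x\rangle$ as an oscillatory integral and run a non-stationary phase argument in the Fourier variable, using that $\xi$ is supported away from the origin. First, by \eqref{hyp:omega}--\eqref{hyp:lambda} and $\xi_t=e^{-it\omega}\xi$, I would write
\[
\langle\xi_t,\lambda_x\rangle=\int_{\mathbb{R}^d}e^{i\Phi_{t,x}(k)}\,g(k)\,\mathrm{d}k\,,\qquad \Phi_{t,x}(k):=t\,\omega(k)+k\cdot x\,,
\]
where $g:=\bar\xi\,\omega^{-1/2}\chi$ is smooth with compact support inside $\mathbb{R}^d\setminus\{0\}$; in particular $\|g\|_{L^1}<\infty$ and $\mathrm{supp}\,g$ is compact and avoids $k=0$. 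The ``group velocity'' is $\nabla_k\Phi_{t,x}=t\,\nabla\omega(k)+x$ with $\nabla\omega(k)=k/\omega(k)$, which vanishes only at $k=0$, so $c_0:=\inf_{k\in\mathrm{supp}\,g}|\nabla\omega(k)|>0$.

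For \eqref{eq:8} I would dichotomize on the size of $|x|$ versus $|t|$. If $|x|>\tfrac{c_0}{2}|t|$, then $\langle x\rangle\ge\min(1,\tfrac{c_0}{2})\langle t\rangle$, so $\langle x\rangle^{-1-\nu}\le C\langle t\rangle^{-1-\nu}$, and the crude bound $|\langle\xi_t,\lambda_x\rangle|\le\|g\|_{L^1}$ suffices. If $|x|\le\tfrac{c_0}{2}|t|$, then $\Phi_{t,x}$ has no critical point on $\mathrm{supp}\,g$ and $|\nabla_k\Phi_{t,x}|\ge|t|\,|\nabla\omega|-|x|\ge\tfrac{c_0}{2}|t|$ there; introducing $L:=\tfrac{1}{i|\nabla_k\Phi_{t,x}|^2}\,\nabla_k\Phi_{t,x}\cdot\nabla_k$, for which $Le^{i\Phi_{t,x}}=e^{i\Phi_{t,x}}$, and integrating by parts $N$ times gives $|\langle\xi_t,\lambda_x\rangle|=\bigl|\int e^{i\Phi_{t,x}}(L^{\mathrm{t}})^Ng\,\mathrm{d}k\bigr|\le C_N|t|^{-N}$ for $|t|\ge1$; here each integration by parts gains a factor $\lesssim|t|^{-1}$ since $|\nabla_k\Phi_{t,x}|\gtrsim|t|$ while the higher derivatives $\partial^\alpha\Phi_{t,x}=t\,\partial^\alpha\omega$ ($|\alpha|\ge2$) are $O(|t|)$ on $\mathrm{supp}\,g$ and carry no $x$-dependence, so $C_N$ is uniform in the admissible $x$. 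Taking $N\ge1+\nu$ and using $\langle x\rangle^{-1-\nu}\le1$ closes the range $|t|\ge1$, and $|t|\le1$ is trivial since then $\langle t\rangle\le\sqrt2$ and $|\langle\xi_t,\lambda_x\rangle|\le\|g\|_{L^1}$.

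For \eqref{eq:9} I would note that $f(x):=\Im\langle\xi_t,\lambda_x\rangle$, acting as a multiplication operator, is real and bounded, and that \eqref{eq:8} yields the pointwise bound $|f(x)|\le|\langle\xi_t,\lambda_x\rangle|=\langle x\rangle^{1+\nu}\bigl(\langle x\rangle^{-1-\nu}|\langle\xi_t,\lambda_x\rangle|\bigr)\le C\langle t\rangle^{-1-\nu}\langle x\rangle^{1+\nu}$. Writing $R:=(-\Delta+V+1)^{-1/2}$, Cauchy--Schwarz then gives, for all $\phi,\psi\in L^2(\mathbb{R}^d)$,
\[
\bigl|\langle\phi,R\,f\,R\,\psi\rangle\bigr|\le\frac{C}{\langle t\rangle^{1+\nu}}\Bigl(\int\langle x\rangle^{1+\nu}|R\phi|^2\,\mathrm{d}x\Bigr)^{1/2}\Bigl(\int\langle x\rangle^{1+\nu}|R\psi|^2\,\mathrm{d}x\Bigr)^{1/2}\,.
\]
Finally, \eqref{hyp:pot} in the form $\langle x\rangle^{1+\nu}\le C_0^{-1}V\le C_0^{-1}(-\Delta+V+1)$ gives $\int\langle x\rangle^{1+\nu}|R\phi|^2\,\mathrm{d}x\le C_0^{-1}\langle R\phi,(-\Delta+V+1)R\phi\rangle=C_0^{-1}\|\phi\|^2$ and likewise for $\psi$, hence the operator-norm bound in \eqref{eq:9}.

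The only substantive step is the non-stationary phase estimate, and even there the content is soft: the meson group velocity $k/\omega(k)$ is bounded away from zero precisely because $\mathrm{supp}\,\xi$ avoids $k=0$, which is the standing hypothesis on $\xi$, and the only point that needs a second's care is uniformity of the integration-by-parts constant over the region $|x|\le\tfrac{c_0}{2}|t|$ --- which holds because the $x$-dependence of $\Phi_{t,x}$ is affine and thus invisible to all derivatives of order $\ge2$. The critical points of $\Phi_{t,x}$ (present when $|x|\sim|t|$) all lie in the complementary region, where the polynomial weight $\langle x\rangle^{-1-\nu}$ does the work unaided; and the passage \eqref{eq:8} $\Rightarrow$ \eqref{eq:9} is then an immediate consequence of the coercivity assumption \eqref{hyp:pot}.
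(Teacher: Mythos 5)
Your proof is correct and follows essentially the same route as the paper: the operator bound \eqref{eq:9} is reduced to the weighted pointwise bound \eqref{eq:8} using $V(x)\geq C_0\langle x\rangle^{1+\nu}$, and \eqref{eq:8} is obtained by non-stationary phase since $\operatorname{supp}\xi$ avoids the only stationary point $k=0$ of $\omega$. The sole difference is cosmetic: the paper simply cites the standard propagation estimate (Reed--Simon, Theorem XI.14), whereas you carry out the same dichotomy in $|x|$ versus $|t|$ and the integration by parts explicitly, with the uniformity in $x$ correctly justified.
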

\begin{proof}
  Observe that
  \begin{multline}
    \|(-\Delta + V+1)^{-1/2}\, \Im \left\langle \xi_t,\lambda_x \right\rangle\,(-\Delta +
    V+1)^{-1/2}\|_{\mathscr{B}(L^2(\mathbb{R}^{d}))} \\ \leq\|\langle \,\cdot\,\rangle^{-1-\nu}
    \left\langle \xi_t,\lambda_{(\cdot)} \right\rangle \|_{L^\infty} \,\|(-\Delta + V+1)^{-1/2}\langle
    x\rangle^{\frac{1+\nu}{2}}\|^2_{\mathscr{B}(L^2(\mathbb{R}^{d}))} \,.
  \end{multline}
  The last term on the right-hand side is bounded by a constant, thanks to
  the assumption \eqref{hyp:pot} on $V$.  Moreover, $\xi \in
  \mathscr{C}_0^{\infty}(\mathbb{R}^d \setminus \{0\})$ and thus the
  stationary point $\{k: \nabla \omega(k) = 0\}=\{0\}$ does not belong to the
  support of $\xi$. Hence, the non-stationary phase method yields \citep[see,
  \emph{e.g.},][Theorem XI.14]{reed1979III}:
  \begin{align*}
    \langle x\rangle^{-1-\nu} \bigl| \left\langle \xi_t,\lambda_{x} \right\rangle\bigr| &= \langle x\rangle^{-1-\nu} \left|\int_{\mathbb{R}^d} \mathrm{d} k \, e^{i t\omega(k)} \,\overline{\xi}(k) \,\lambda_x(k) \right| \leq \frac{C}{\langle t \rangle^{1+\nu} }\;.
  \end{align*}

\end{proof}

We prove below the main $\hslash$-uniform decay estimate for the Yukawa
model. Recall that the operator $S$ is defined according to \eqref{eq:13}.
\begin{proposition}\label{prop:timedecay}
  For any $\xi \in \mathscr{C}_0^\infty(\mathbb{R}^d \setminus \{0\})$ there
  exists $c>0$ such that, for all $t\in\mathbb{R}$,
  \[
    \left\| S^{-1/2} \,\mathrm{d} \Gamma_{\hslash}^{(1)}\bigl(\Im
      \left\langle \xi_t ,\lambda_{(\cdot)} \right\rangle\bigr)\, S^{-1/2}
    \right\|_{\mathscr{B}(\mathscr{H})} \leq \frac{c}{\langle t
      \rangle^{1+\nu}}\,,
  \]
  uniformly with respect to $\hslash\in(0,1)$.
\end{proposition}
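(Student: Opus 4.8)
The plan is to deduce the bound from the one–particle dispersive estimate \eqref{eq:9} of \cref{lem:controldecay} by invoking the monotonicity of second quantization with respect to quadratic forms. For fixed $t$, the quantity $x\mapsto \Im\langle \xi_t,\lambda_x\rangle_{\mathfrak{H}}$ is a real-valued function, hence a symmetric multiplication operator on $\mathfrak{H}=L^2(\mathbb{R}^d)$, and estimate \eqref{eq:9} says precisely that it is relatively form-bounded by $-\Delta+V+1$: as quadratic forms on $\mathscr{Q}(-\Delta+V)$,
\[
  \pm\,\Im\langle \xi_t,\lambda_{(\cdot)}\rangle \;\leq\; \frac{C}{\langle t\rangle^{1+\nu}}\,(-\Delta+V+1)\,.
\]

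Next I would use the elementary fact that if $A$ is symmetric on $\mathfrak{H}$ and $\pm A\leq \delta\, K$ for a positive self-adjoint $K$, then $\pm\,\mathrm{d}\Gamma_{\hslash}^{(1)}(A)\leq \delta\,\mathrm{d}\Gamma_{\hslash}^{(1)}(K)$ on $\mathscr{H}$, with the same constant $\delta$ and uniformly in $\hslash\in(0,1)$. This is seen fiberwise via \cref{rem:1}: on $\mathscr{H}^{(n)}$ one has $\mathrm{d}\Gamma_{\hslash}^{(1)}(A)_{|n}=\hslash\sum_{j=1}^n A^{(j)}$, and summing the ampliated inequalities $\pm A^{(j)}\leq\delta K^{(j)}$ and multiplying by $\hslash>0$ gives $\pm\,\mathrm{d}\Gamma_{\hslash}^{(1)}(A)_{|n}\leq\delta\,\mathrm{d}\Gamma_{\hslash}^{(1)}(K)_{|n}$ uniformly in $n$ (the factor $\hslash$ appears identically on both sides, so it cancels); since both sides commute with $N_1$, one concludes on all of $\mathscr{H}$. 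Applying this with $A=\Im\langle\xi_t,\lambda_{(\cdot)}\rangle$, $K=-\Delta+V+1$ and $\delta=C\langle t\rangle^{-1-\nu}$ yields
\[
  \pm\,\mathrm{d}\Gamma_{\hslash}^{(1)}\bigl(\Im\langle\xi_t,\lambda_{(\cdot)}\rangle\bigr)\;\leq\;\frac{C}{\langle t\rangle^{1+\nu}}\,\mathrm{d}\Gamma_{\hslash}^{(1)}(-\Delta+V+1)\,,
\]
uniformly in $\hslash\in(0,1)$.

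Finally I would dominate the right-hand side by $S$. Since $\mathrm{d}\Gamma_{\hslash}^{(1)}(1)=N_1$ we have $\mathrm{d}\Gamma_{\hslash}^{(1)}(-\Delta+V+1)=\mathrm{d}\Gamma_{\hslash}^{(1)}(-\Delta+V)+N_1$; using $\mathrm{d}\Gamma_{\hslash}^{(2)}(\omega)\geq 0$ (so $\mathrm{d}\Gamma_{\hslash}^{(1)}(-\Delta+V)\leq H_{\hslash}^0$) and the numerical inequality $N_1\leq N_1^2+1$, we obtain $\mathrm{d}\Gamma_{\hslash}^{(1)}(-\Delta+V+1)\leq H_{\hslash}^0+N_1^2+1=S$. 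Combining this with the previous display gives $\pm\,\mathrm{d}\Gamma_{\hslash}^{(1)}(\Im\langle\xi_t,\lambda_{(\cdot)}\rangle)\leq C\langle t\rangle^{-1-\nu}\,S$; sandwiching with $S^{-1/2}$ (legitimate because $\mathrm{d}\Gamma_{\hslash}^{(1)}(\Im\langle\xi_t,\lambda_{(\cdot)}\rangle)$ is $S$-form-bounded and hence defines a bounded form on $\mathscr{Q}(S)$) yields exactly the asserted operator-norm bound on $S^{-1/2}\,\mathrm{d}\Gamma_{\hslash}^{(1)}(\Im\langle\xi_t,\lambda_{(\cdot)}\rangle)\,S^{-1/2}$.

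There is no serious obstacle here: all the analytic content sits in the one-particle decay estimate \eqref{eq:9}, which is already available. The only points needing a little care are the verification that the form inequality on $\mathfrak{H}$ lifts to $\mathrm{d}\Gamma_{\hslash}^{(1)}$ with the same constant and uniformly in $\hslash$ — handled by the fiber decomposition, where the $\hslash$ prefactors cancel — and the form-theoretic meaning of the sandwiched operator, since $\Im\langle\xi_t,\lambda_x\rangle$ is only relatively, not absolutely, form-bounded by $-\Delta+V$.
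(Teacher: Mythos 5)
Your argument is correct, and it reaches the conclusion by a slightly different (more structural) route than the paper. The paper does not pass through the sandwiched one-particle bound \eqref{eq:9}: in its proof of \cref{prop:timedecay} it works directly with matrix elements on each fiber $\mathscr{H}^{(n)}$, uses symmetry to reduce to the single term $\hslash n\,\Im\langle\xi_t,\lambda_{x_1}\rangle$, applies a weighted Cauchy--Schwarz with the weight $\langle x_1\rangle^{1+\nu}$ together with the $L^\infty$ decay bound \eqref{eq:8}, and then absorbs $\mathrm{d}\Gamma_{\hslash}^{(1)}(\langle x\rangle^{1+\nu})$ into $S$ via the confinement assumption \eqref{hyp:pot}. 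You instead take the other half of \cref{lem:controldecay}, namely \eqref{eq:9}, read it as the one-particle form bound $\pm\,\Im\langle\xi_t,\lambda_{(\cdot)}\rangle\leq C\langle t\rangle^{-1-\nu}(-\Delta+V+1)$, lift it by monotonicity of second quantization (with the $\hslash$ and $n$ factors cancelling fiberwise), and dominate $\mathrm{d}\Gamma_{\hslash}^{(1)}(-\Delta+V+1)=\mathrm{d}\Gamma_{\hslash}^{(1)}(-\Delta+V)+N_1\leq H_{\hslash}^0+N_1^2+1=S$; the final sandwich by $S^{-1/2}$ is then the standard equivalence between a form bound and a norm bound for a symmetric operator. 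The analytic content is identical in both proofs (the stationary-phase decay plus $V\gtrsim\langle x\rangle^{1+\nu}$, which is exactly what goes into \eqref{eq:9}), so the difference is one of packaging: your version hides the sector-wise Cauchy--Schwarz inside the abstract statement that $\mathrm{d}\Gamma_{\hslash}^{(1)}$ preserves form inequalities against a positive comparison operator, and uses $-\Delta+V+1$ rather than $\langle x\rangle^{1+\nu}$ as the intermediate operator, which makes the uniformity in $\hslash$ and $n$ transparent; the paper's version is more hands-on but avoids having to state and justify the lifting lemma and the ampliation of form inequalities. Both are complete proofs, and your domain remarks (boundedness of the multiplication operator, $\mathscr{Q}(S)$ sitting inside the relevant form domains) are the only points that genuinely need the care you already give them.
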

\begin{proof}
  Consider $\Phi^{(n)}, \Psi^{(n)} \in \mathscr{H}^{(n)}$, then using
  symmetry one obtains
  \begin{align*}
    \Bigl|\bigl\langle &\Phi^{(n)},S^{-1/2} \,\mathrm{d} \Gamma^{(1)}_{\hslash}(\Im  \langle \xi_t,\lambda_{(\cdot)} \rangle) \, S^{-1/2}  \Psi^{(n)} \bigr\rangle\Bigr| \\ &= \left| \hslash n  \bigl\langle S^{-1/2} \Phi^{(n)},\Im  \langle \xi_t,\lambda_{x_1} \rangle S^{-1/2}\Psi^{(n)} \bigr\rangle \right|  \\
    &\leq \Bigl\|\, \frac{\Im \left\langle \xi_t,\lambda_{x_1} \right\rangle}{\langle x_1\rangle^{1+\nu}} \,\Bigr\|_{\mathscr{B}(L^2(\mathbb{R}^{d}))}  \Bigl\|\sqrt{(n\hslash)\langle x_1\rangle^{1+\nu} }S^{-1/2}\Phi^{(n)}\Bigr\|_{\mathscr{H}^{(n)}} \Bigl\|\sqrt{(n\hslash)\langle x_1\rangle^{1+\nu} }S^{-1/2} \Psi^{(n)}\Bigr\|_{\mathscr{H}^{(n)}}\\
    & \leq  \frac{C}{\langle t\rangle^{1+\nu}} \|\sqrt{(n\hslash)\langle x_1\rangle^{1+\nu} }S^{-1/2}\Phi^{(n)}\|_{\mathscr{H}^{(n)}} \|\sqrt{(n\hslash)\langle x_1\rangle^{1+\nu} }S^{-1/2} \Psi^{(n)}\|_{\mathscr{H}^{(n)}}\,,
  \end{align*}
  where in the last inequality \cref{lem:controldecay} is used. Using
  symmetry again and assumption \eqref{hyp:pot},
  \begin{align*}
    \|\sqrt{(n\hslash)\langle x_1\rangle^{1+\nu} }S^{-1/2} \Psi^{(n)}\|_{\mathscr{H}^{(n)}}^2&=
    \bigl\langle \Psi^{(n)},S^{-1/2} \,\mathrm{d} \Gamma^{(1)}_{\hslash}(\langle x\rangle^{1+\nu}) \, S^{-1/2}\Psi^{(n)}\bigr\rangle\\
    &\leq C' \|\ \Psi^{(n)}\|_{\mathscr{H}^{(n)}}^2\; .
  \end{align*}
  An analogous inequality holds for $\Phi^{(n)}$ as well, and therefore
  \begin{equation*}
    \Bigl|\bigl\langle \Phi^{(n)},S^{-1/2} \,\mathrm{d} \Gamma^{(1)}_{\hslash}(\Im  \langle \xi_t,\lambda_{(\cdot)} \rangle) \, S^{-1/2}  \Psi^{(n)} \bigr\rangle\Bigr|\leq\frac{c}{\langle t \rangle^{1+\nu}} \|\ \Phi^{(n)}\|_{\mathscr{H}^{(n)}}\|\ \Psi^{(n)}\|_{\mathscr{H}^{(n)}}\; .
  \end{equation*}
\end{proof}

\section{Scattering theory of the Schr\"odinger-Klein-Gordon equation}
\label{sec:3}

In the sequel, we discuss the classical scattering theory of the
Schrödinger-Klein-Gordon system.  Since we assume that the Schrödinger
particle is confined, only Klein-Gordon waves can be asymptotically
free. Therefore, in our context the diffusion scheme differs from the
traditional translation invariant case studied for instance in
\cite{MR1275411,MR2037763}.

The S-KG equation is an infinite dimensional classical Hamiltonian system
described by its Hamiltonian function
\begin{equation}
  \label{eq.skg-eng}
  \mathscr{E}(u,z) = \left\langle u,(-\Delta + V)u \right\rangle_{\mathfrak{H}} + \left\langle z,\omega z \right\rangle_{\mathfrak{H}} + \int_{\mathbb{R}^{2d}}^{}\bigl(\lambda_x(k)\bar{z}(k)+\bar{\lambda}_x(k)z(k) \bigr) \lvert u(x)  \rvert_{}^2 \mathrm{d}x \mathrm{d}k\; ,
\end{equation}
defined on the energy space
\begin{equation}
  \label{eq.enspa}
  \mathscr{X}= \mathscr{D}(\sqrt{-\Delta + V}) \oplus \mathscr{D}(\sqrt{\omega})\,\subset \mathscr{Z}=\mathfrak{H}\oplus \mathfrak{H}\,,
\end{equation}
with $\mathfrak{H}=L^2(\mathbb R^d)$. The energy of the non-interacting
system is given by
\begin{align*}
  \mathscr{E}_0(u,z) = \left\langle u,(-\Delta + V)u \right\rangle_{\mathfrak{H}}+ \left\langle z,\omega z \right\rangle_{\mathfrak{H}}\,.
\end{align*}
Then, the following rough inequalities compare $\mathscr{E}$ and
$\mathscr{E}_0$.  Indeed, there exist $c,\alpha>0$ such that for all
$(u,z)\in\mathscr{X}$,
\begin{align}
  \label{eq.estenrg1}
  & \big|\mathscr{E}(u,z) -\mathscr{E}_0(u,z) \big| \leq c \,( \mathscr{E}_0(u,z) +\|u\|_{L^2}^4)\,,\\
  \label{eq.estenrg2}
  & 0\leq \mathscr{E}_0(u,z) \leq c \,( \mathscr{E}(u,z) +\alpha \|u\|_{L^2}^4)\,.
\end{align}
With these notations, the {Schrödinger-Klein-Gordon} equation is a system of
PDE given by:
\begin{equation}\label{eq:skg}
  \begin{cases}
    i \partial_t u = (-\Delta + V) \,u +  \varphi_\chi\, u \\[2mm]
    i \partial_t z = \omega z + \omega^{-1/2}\chi \,\widehat{\lvert u \rvert^2}
  \end{cases}
\end{equation}
where
\begin{equation*}
  \varphi_\chi(x)=\int_{\mathbb{R}^d}^{}\frac{1}{\sqrt{\omega(k)}}\bigl( e^{ik\cdot x}\chi(k)\bar{z}(k)+e^{-ik\cdot x}\bar{\chi}(k)z(k)\bigr)  \mathrm{d}k
\end{equation*}
is the smeared Klein-Gordon field and $\widehat{\lvert u \rvert^2}$ is the Fourier
transform of $\lvert u \rvert^2\in L^1(\mathbb R^d)$. In particular, the S-KG equation is
globally well-posed over $\mathscr{X}$ with mass $\lVert u \rVert_{L^2(\mathbb R^{d})}$
and energy $\mathscr{E}(u,z)$ as conserved quantities \citep[see,
\emph{e.g.},][]{bachelot1984aihpnl,MR380160}. In the following, we denote by
$\Phi_t:\mathscr{X}\to\mathscr{X}$ the solution flow associated to the S-KG
equation.

In order to discuss the long-time asymptotics for the Klein-Gordon field, it
is convenient to rewrite the solution $z(t)$ using the Duhamel's integral
formula:
\begin{equation}\label{eq:duhamelfield}
  z(t) = e^{-it\omega} z_0 - i \int_0^t e^{-i(t-\tau)\omega} \omega^{-1/2}\chi  \,\widehat{\lvert u(\tau)  \rvert^2} \,\mathrm{d} \tau\,.
\end{equation}
The corresponding wave operator is then defined as follows.
\begin{definition}
  \label{def.waveop}
  The classical forward and backward wave operators of the S-KG equation are
  defined as the maps:
  \begin{eqnarray*}
    \Lambda^{\pm} : \;\mathscr{X} &\longrightarrow & \mathfrak{H}\\
    (u_0,z_0) &\longmapsto & z^{\pm}:=\wlim_{t \rightarrow \pm \infty} \; e^{it\omega} z(t)\,,
  \end{eqnarray*}
  where $(u(\cdot,t),z(\,\cdot\,,t))=\Phi_t(u_0,z_0)$ is the unique solution
  of the S-KG equation \eqref{eq:skg} satisfying the initial condition
  $(u_0,z_0)$ at time $t=0$ and the limit in the right hand side is with
  respect to the weak $L^2(\mathbb{R}^d)$-topology.
\end{definition}

The above classical wave operators exist according to the proposition below.
\begin{proposition}
  \label{prop:1}
  The wave operators $\Lambda^{\pm}$ of the S-KG equation are well defined.
\end{proposition}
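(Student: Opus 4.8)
The plan is to read off the weak limit from the Duhamel formula \eqref{eq:duhamelfield}: after factoring out the free propagator, the quantity $e^{it\omega}z(t)$ is a constant plus a time integral, and I want to show that integral converges in the weak $L^2$ sense. Convergence will be forced by combining the dispersive decay of \cref{lem:controldecay} with the weighted $L^2$ control on $u$ that the confinement \eqref{hyp:pot} provides via energy conservation.

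First, since $e^{it\omega}$ is unitary and $C_0^\infty(\mathbb{R}^d\setminus\{0\})$ is dense in $L^2(\mathbb{R}^d)$, and since $\{e^{it\omega}z(t)\}_{t\in\mathbb{R}}$ will turn out to be $L^2$-bounded, it suffices to prove that $\langle\xi_t,z(t)\rangle$ has a limit as $t\to\pm\infty$ for every such $\xi$, where $\xi_t=e^{-it\omega}\xi$ and $(u(\cdot,t),z(\cdot,t))=\Phi_t(u_0,z_0)$ with $(u_0,z_0)\in\mathscr{X}$. Testing \eqref{eq:duhamelfield} against $\xi_t$ and using unitarity of the free evolution gives
\[
  \langle\xi_t,z(t)\rangle=\langle\xi,z_0\rangle-i\int_0^t\bigl\langle\xi_\tau,\omega^{-1/2}\chi\,\widehat{\lvert u(\tau)\rvert^2}\bigr\rangle\,\mathrm{d}\tau\,,
\]
so I only need this time integral to converge absolutely as $t\to\pm\infty$. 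Writing out $\widehat{\lvert u(\tau)\rvert^2}$ and using Fubini, the integrand is bounded by $\int_{\mathbb{R}^d}\lvert u(\tau,x)\rvert^2\,\lvert\langle\xi_\tau,\lambda_x\rangle\rvert\,\mathrm{d}x$ (a possible reflection $x\mapsto-x$ and conjugation of $\chi$ being harmless for the bound, since $\langle\pm x\rangle=\langle x\rangle$), and the dispersive estimate \eqref{eq:8} — valid precisely because the support of $\xi$ avoids the stationary point $k=0$ of $\omega$ — yields $\lvert\langle\xi_\tau,\lambda_x\rangle\rvert\leq C\,\langle x\rangle^{1+\nu}\langle\tau\rangle^{-1-\nu}$ for all $x$, hence
\[
  \Bigl\lvert\bigl\langle\xi_\tau,\omega^{-1/2}\chi\,\widehat{\lvert u(\tau)\rvert^2}\bigr\rangle\Bigr\rvert\leq\frac{C}{\langle\tau\rangle^{1+\nu}}\int_{\mathbb{R}^d}\langle x\rangle^{1+\nu}\lvert u(\tau,x)\rvert^2\,\mathrm{d}x\,.
\]

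Next I would bound the weighted integral uniformly in $\tau$. Since the S-KG flow conserves the mass $\lVert u(\tau)\rVert_{L^2}=\lVert u_0\rVert_{L^2}$ and the energy $\mathscr{E}(u(\tau),z(\tau))=\mathscr{E}(u_0,z_0)$, inequality \eqref{eq.estenrg2} bounds $\mathscr{E}_0(u(\tau),z(\tau))$ uniformly in $\tau$; since $-\Delta\geq 0$ this gives $\langle u(\tau),Vu(\tau)\rangle\leq\mathscr{E}_0(u(\tau),z(\tau))\leq C$, and then \eqref{hyp:pot} yields $\int_{\mathbb{R}^d}\langle x\rangle^{1+\nu}\lvert u(\tau,x)\rvert^2\mathrm{d}x\leq C_0^{-1}\langle u(\tau),Vu(\tau)\rangle\leq C'$. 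Consequently the integrand above is $\leq C''\langle\tau\rangle^{-1-\nu}$, which is integrable over $\mathbb{R}$ because $\nu>0$, so the time integral converges absolutely as $t\to\pm\infty$. Finally, the same conservation laws together with $\omega\geq m>0$ give $\lVert z(\tau)\rVert_{L^2}^2\leq m^{-1}\langle z(\tau),\omega z(\tau)\rangle\leq m^{-1}\mathscr{E}_0(u(\tau),z(\tau))\leq C$, so $\{e^{it\omega}z(t)\}_{t\in\mathbb{R}}$ is bounded in $L^2$; a bounded family that converges when paired against a dense set converges weakly, which yields the existence of $\wlim_{t\to\pm\infty}e^{it\omega}z(t)$ in $L^2(\mathbb{R}^d)$, i.e.\ that $\Lambda^{\pm}$ is well defined.

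The argument is short, and its only genuinely delicate point is the matching of exponents: the time-decay rate $\langle\tau\rangle^{-1-\nu}$ extracted by the non-stationary phase method must be paired with exactly the spatial weight $\langle x\rangle^{1+\nu}$ that the confinement hypothesis \eqref{hyp:pot} permits to control on $u$ through energy conservation, and it is the common exponent $1+\nu>1$ that makes the resulting time integral convergent. Everything else reduces to the conservation laws of the S-KG system and elementary weak-compactness considerations.
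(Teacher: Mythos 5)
Your proof is correct and follows essentially the same route as the paper: testing the Duhamel formula \eqref{eq:duhamelfield} against $\xi_t$ with $\xi\in\mathscr{C}_0^\infty(\mathbb{R}^d\setminus\{0\})$, invoking the non-stationary-phase decay of \cref{lem:controldecay}, controlling the weight $\langle x\rangle^{1+\nu}$ on $u(\tau)$ through \eqref{hyp:pot}, \eqref{eq.estenrg2} and conservation of mass and energy, and finally lifting to all of $L^2$ by density together with the uniform bound on $\lVert z(t)\rVert_{L^2}$. The only cosmetic difference is that you route the weighted bound explicitly through $\langle u(\tau),Vu(\tau)\rangle$ with the constant $C_0^{-1}$, which the paper absorbs into its constants.
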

\begin{proof}
  Consider a smooth function $\varphi \in \mathscr{C}_0^{\infty}(\mathbb{R}^d
  \smallsetminus \{ 0 \})$, then according to \eqref{eq:duhamelfield},
  \begin{equation}
    \label{eq:cook}
    \begin{aligned}
      \langle \varphi,e^{it\omega} z(t) \rangle_{\mathfrak{H}} &= \langle \varphi,z_0 \rangle_{\mathfrak{H}} -i \bigl\langle \varphi, \int_0^t e^{i\tau\omega} \omega^{-1/2}\chi  \,\widehat{\lvert u(\tau)  \rvert^2}\mathrm{d} \tau \bigr\rangle_{\mathfrak{H}}  \\
      &=\langle \varphi,z_0\rangle_{\mathfrak{H}} -i \int_0^t \bigl\langle \varphi, e^{i\tau\omega} \omega^{-1/2}\chi  \,\widehat{\lvert u(\tau)  \rvert^2}\bigr\rangle_{\mathfrak{H}}\mathrm{d} \tau\; .
    \end{aligned}
  \end{equation}
  Using the dispersive estimate in \cref{lem:controldecay}, one obtains
  \begin{eqnarray*}
    \bigl\lvert\bigl\langle \varphi, e^{i\tau\omega} \omega^{-1/2}\chi \,\widehat{\lvert u(\tau)  \rvert^2}\bigr\rangle_{\mathfrak{H}}\bigr\rvert &\leq& \bigl\|\langle \,\cdot\, \rangle^{\frac{1+\nu}{2}} u(\tau)\bigr\|_{L^2}^2 \; \bigl\| \langle \,\cdot\, \rangle^{-1-\nu}   \langle e^{-i\tau\omega} \varphi , \lambda_{(\cdot )}\rangle_{\mathfrak{H}}
    \bigr\|_{L^\infty} \\ &\leq&  \frac{C}{\langle \tau\rangle^{1+\nu}}\; \bigl\|\langle \,\cdot\, \rangle^{\frac{1+\nu}{2}} u(\tau)\bigr\|_{\mathfrak{H}}^2\; .
  \end{eqnarray*}
  Now, the bound \eqref{eq.estenrg2} yields
  \begin{eqnarray*}
    \bigl\|\langle\,\cdot\,\rangle^{\frac{1+\nu}{2}} u(\tau)\|_{\mathfrak{H}}^2 &\leq&  \bigl\langle u(\tau),(-\Delta + V)u(\tau) \bigr\rangle_{\mathfrak{H}}   \\ &\leq& \mathscr{E}_0(u(\tau), z(\tau))
    \\ &\leq& c \, \Bigl(\mathscr{E}(u(\tau), z(\tau))+ \alpha \|u(\tau)\|^4_\mathfrak{H} \Bigr)\,.
  \end{eqnarray*}
  Thus, the energy and mass conservation guarantees that the integral in
  \eqref{eq:cook} converges as $t\to \pm\infty$. Such convergence is lifted to all $\varphi\in
  L^2(\mathbb{R}^d)$ by a density argument, using also the fact that the norm
  $\|z(t)\|_{L^2(\mathbb{R}^d)}$ is uniformly bounded in time thanks to the energy
  estimate \eqref{eq.estenrg2}.
\end{proof}

Let us make a couple of remarks.
\begin{itemize}
\item [(i)] The classical wave operators $ \Lambda^{\pm}$ have the following
  representation, intended as a distribution equality in $\mathscr{D}'(\mathbb{R}^d\smallsetminus
  \{0\})$:
  \begin{equation}
    \label{eq:waverepres}
    \Lambda^{\pm}(u_0,z_0) = z_0 - i \int_0^{\pm\infty} e^{i\tau\omega} \omega^{-1/2}\chi \,\widehat{\lvert u(\tau)  \rvert^2} \mathrm{d} \tau\,.
  \end{equation}
\item [(ii)] For all $\zeta\in \mathfrak{H}$,
  \begin{equation*}
    \lim_{t\to \pm\infty} \langle e^{-it\omega}\zeta  , z(t)-e^{-it\omega}z^{\pm} \rangle_{\mathfrak{H}}=0\; .
  \end{equation*}
  In other words, the nonlinear evolution of the field $z_0$ can be
  approximated for long times, in a weak sense, by the free evolution of
  $z^{\pm}=\Lambda^{\pm}(u_0,z_0)$.
\end{itemize}
Note also that despite the name, $\Lambda^\pm$ are not linear maps.

\begin{corollary}
  \label{cor:2}
  \begin{equation*}
    \ran \Lambda^{\pm}\subseteq \mathscr{D}(\sqrt{\omega})\; .
  \end{equation*}
\end{corollary}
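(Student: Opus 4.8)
The plan is to combine the conservation laws of the S-KG flow with a weak lower-semicontinuity argument for the quadratic form of $\omega$. Fix $(u_0,z_0)\in\mathscr{X}$, write $(u(t),z(t))=\Phi_t(u_0,z_0)$ (which stays in $\mathscr{X}$ by global well-posedness, so in particular $z(t)\in\mathscr{D}(\sqrt{\omega})$), and set $z^{\pm}:=\Lambda^{\pm}(u_0,z_0)$, so that by \cref{def.waveop} and \cref{prop:1} one has $e^{it\omega}z(t)\rightharpoonup z^{\pm}$ weakly in $\mathfrak{H}$ as $t\to\pm\infty$. The first step is a uniform-in-time bound on $\langle z(t),\omega z(t)\rangle_{\mathfrak{H}}$: since $-\Delta+V\ge 0$ one has $\langle z(t),\omega z(t)\rangle_{\mathfrak{H}}\le\mathscr{E}_0(u(t),z(t))$, and since $\mathscr{E}(u(t),z(t))$ and $\|u(t)\|_{L^2}$ are conserved along the flow, \eqref{eq.estenrg2} gives
\[
  \langle z(t),\omega z(t)\rangle_{\mathfrak{H}}\le c\bigl(\mathscr{E}(u_0,z_0)+\alpha\|u_0\|_{L^2}^{4}\bigr)=:M
\]
for all $t\in\mathbb{R}$.

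The second step transfers this bound to the rotated field. Because $e^{it\omega}$ is a bounded Borel function of the self-adjoint operator $\omega$, it commutes with every spectral projection of $\omega$, and a direct computation gives $\bigl\|\sqrt{\omega}\,e^{it\omega}z(t)\bigr\|_{\mathfrak{H}}^{2}=\langle z(t),\omega z(t)\rangle_{\mathfrak{H}}\le M$; thus the net $\{e^{it\omega}z(t)\}$ remains in the ball of radius $\sqrt{M}$ of $\mathscr{D}(\sqrt{\omega})$.

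The final step passes the bound to the weak limit. Let $P_R=\mathds{1}_{[0,R]}(\omega)$ be the spectral projections of $\omega$; then $\sqrt{\omega}\,P_R$ is bounded, hence $\sqrt{\omega}\,P_R\,e^{it\omega}z(t)\rightharpoonup\sqrt{\omega}\,P_R z^{\pm}$, and by Cauchy--Schwarz together with $\|P_R\|\le1$ and the second step,
\[
  \bigl\|\sqrt{\omega}\,P_R z^{\pm}\bigr\|_{\mathfrak{H}}^{2}
  =\lim_{t\to\pm\infty}\bigl\langle \sqrt{\omega}\,P_R z^{\pm},\sqrt{\omega}\,P_R\,e^{it\omega}z(t)\bigr\rangle_{\mathfrak{H}}
  \le\sqrt{M}\,\bigl\|\sqrt{\omega}\,P_R z^{\pm}\bigr\|_{\mathfrak{H}},
\]
so $\|\sqrt{\omega}\,P_R z^{\pm}\|_{\mathfrak{H}}\le\sqrt{M}$ for every $R>0$. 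Letting $R\to\infty$ and using monotone convergence in the spectral integral yields $\int_{[0,\infty)}\mu\,\mathrm{d}\|E_\mu z^{\pm}\|^{2}\le M<\infty$, which is precisely $z^{\pm}\in\mathscr{D}(\sqrt{\omega})$. (Equivalently, one may invoke weak compactness of the bounded net $\{\sqrt{\omega}\,e^{it\omega}z(t)\}$ together with the fact that the closed operator $\sqrt{\omega}$ has weakly closed graph.)

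I do not expect a genuine obstacle here: all the needed a priori information is already contained in \eqref{eq.estenrg2} and the conservation laws of \eqref{eq:skg}. The only point that requires a little care is that \cref{def.waveop} furnishes only weak $L^2$-convergence, which forbids concluding by mere continuity and forces the argument through the lower semicontinuity of $v\mapsto\|\sqrt{\omega}\,v\|_{\mathfrak{H}}$.
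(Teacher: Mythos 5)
Your proof is correct and follows essentially the same route as the paper: the key input in both is the uniform-in-time bound on $\lVert \sqrt{\omega}\,z(t)\rVert_{\mathfrak{H}}$ obtained from energy and mass conservation together with \eqref{eq.estenrg2}, after which the bound is passed to the weak limit $z^{\pm}$. The paper finishes by observing that the convergence of \cref{prop:1} extends to test functions in $\mathscr{D}(\sqrt{\omega})=\omega^{-1/2}L^{2}(\mathbb{R}^d)$, while you make the same point explicit via spectral cutoffs and weak lower semicontinuity of the form $v\mapsto\lVert\sqrt{\omega}\,v\rVert_{\mathfrak{H}}$ -- a harmless difference in bookkeeping.
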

\begin{proof}
  The solution $z(t)\in \mathscr{D}(\sqrt{\omega})$ for all $t\in
  \mathbb{R}$. In addition, by \eqref{eq.estenrg2} and the conservation of
  energy, $\lVert \sqrt{\omega} z(t) \rVert_{L^2}$ is uniformly bounded in
  time. Therefore, the convergence in \cref{prop:1} holds when testing with
  $\varphi\in \omega^{-1/2}L^2(\mathbb{R}^d)$ as well.
\end{proof}

As for the quantum theory and asymptotic vacuum states \eqref{eq:19}, at the
classical level there is a notion of asymptotic radiationless states. These
are the phase space points in the kernel of the classical wave operator, {\it
  i.e.}
\begin{equation}
  \label{eq:5}
  \mathscr{K}^{\pm}_0=\bigl\{(u,z)\in \mathscr{X}\;\vert\; \Lambda^{\pm}(u,z)=0\bigr\}\; .
\end{equation}
In the next sections, we will see how to relate the notions of classical wave
operators and the space of asymptotic radiationless states to the
corresponding ones for the quantum Yukawa model.

\section{Semiclassical limit of meson fields}
\label{sec:4}

In this section we study the behavior of the asymptotic fields of the Yukawa
theory, as $\hslash\to 0$.  As already recalled in \cref{subsec:wig}, at any finite
time $t\in \mathbb{R}$ the semiclassical behavior of the family $\{\Psi_\hslash(t)= e^{-i
  \frac{t}{\hslash} H_\hslash} \Psi_\hslash\}_{\hslash\in(0,1)}$ can be characterized by their
semiclassical measures at a fixed given time (usually $t=0$), pushed forward
by the nonlinear flow associated to the classical S-KG equations
\eqref{eq:skg} \citep[see][for a detailed analysis]{ammari2014jsp}. Combining
the techniques used for finite times with the dispersive properties of both
quantum and classical Yukawa models, one extends the analysis to infinite
times $t\to \pm \infty$.

\subsection{Meson fields}
For convenience, recall the operator $S$ already introduced in
\cref{subsec:est-enr}:
\begin{equation}
  \label{opS}
  S:= H_\hslash^0 + N_1^2 + 1\,.
\end{equation}
This operator emerges naturally, for this particular model, as the right tool
that encodes the regularity properties needed to characterize explicitly the
semiclassical limit of asymptotic fields.  To this extent, we will make
extensive use of the following assumption.
\begin{hyp}
  Given a family $\{\Psi_{\hslash}\}_{\hslash \in (0,1)} \subseteq
  \mathscr{H}$ of normalized vectors, there exists $C>0$ such that, uniformly
  with respect to $\hslash\in(0,1)$,
  \begin{equation}\label{hyp:states}\tag{A4}
    \left\langle \Psi_{\hslash}, S \,\Psi_{\hslash} \right\rangle \leq C.
  \end{equation}
\end{hyp}
The integral formula below is a consequence of such assumption.

\begin{proposition}\label{prop:weylrepres}
  Let $\{\Psi_{\hslash}\}_{\hslash \in (0,1)}$ be a family of vectors
  satisfying \eqref{hyp:states}. Denoting $\Psi_{\hslash}(t) =
  e^{-i\frac{t}{\hslash}H_{\hslash}} \Psi_{\hslash}$, we have that for all
  $\xi \in \mathscr{C}_0^\infty(\mathbb{R}^d \setminus \{0\})$,
  \begin{equation}\label{eq:weylrepres}
    \left\langle \Psi_{\hslash},W^{\pm}_{\hslash}(\xi) \Psi_{\hslash} \right\rangle = \left\langle \Psi_{\hslash},W_{\hslash}(\xi) \Psi_{\hslash} \right\rangle +\sqrt{2}i \int_0^{\pm\infty}  \left\langle \Psi_{\hslash}(\tau), \mathrm{d} \Gamma_{\hslash}^{(1)}\bigl( \Im  \langle \xi_\tau,\lambda_{(\cdot)} \rangle_{\mathfrak{H}}\bigr)\Psi_{\hslash}(\tau) \right\rangle \mathrm{d} \tau\; .
  \end{equation}
\end{proposition}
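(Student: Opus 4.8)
The plan is a Cook-type argument: I would realise the left-hand side of \eqref{eq:weylrepres} as the $t\to\pm\infty$ limit of a finite-time matrix element and then integrate its time derivative. Concretely, put $\xi_\tau=e^{-i\tau\omega}\xi$, recall $\Psi_{\hslash}(\tau)=e^{-i\frac{\tau}{\hslash}H_{\hslash}}\Psi_{\hslash}$, and set $G(\tau):=\langle\Psi_{\hslash}(\tau),W^{(2)}_{\hslash}(\xi_\tau)\Psi_{\hslash}(\tau)\rangle$. Since $W^{(2)}_{\hslash}(\eta)$ acts only on the meson Fock space it commutes with $N_1$ and with $\mathrm{d}\Gamma^{(1)}_{\hslash}(-\Delta+V)$, so the free dynamics intertwines it as $e^{-i\frac{\tau}{\hslash}H^0_{\hslash}}W^{(2)}_{\hslash}(\xi)e^{i\frac{\tau}{\hslash}H^0_{\hslash}}=W^{(2)}_{\hslash}(\xi_\tau)$; hence $G(\tau)=\langle\Psi_{\hslash},e^{i\frac{\tau}{\hslash}H_{\hslash}}W^{(2)}_{\hslash}(\xi_\tau)e^{-i\frac{\tau}{\hslash}H_{\hslash}}\Psi_{\hslash}\rangle$ is exactly the matrix element of the net whose strong limit defines $W^{\pm}_{\hslash}(\xi)$. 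So $G(0)=\langle\Psi_{\hslash},W_{\hslash}(\xi)\Psi_{\hslash}\rangle$ and, $\Psi_{\hslash}$ being fixed, $\lim_{t\to\pm\infty}G(t)=\langle\Psi_{\hslash},W^{\pm}_{\hslash}(\xi)\Psi_{\hslash}\rangle$, and it remains to show that $G$ is locally absolutely continuous with $G'\in L^1(0,\pm\infty)$ and to identify $G'$; then \eqref{eq:weylrepres} follows from the fundamental theorem of calculus.

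To identify $G'(\tau)$ I would differentiate the three $\tau$-dependences by the Leibniz rule: the contributions of $\partial_\tau\Psi_{\hslash}(\tau)=-\tfrac{i}{\hslash}H_{\hslash}\Psi_{\hslash}(\tau)$ combine to $\tfrac{i}{\hslash}\langle\Psi_{\hslash}(\tau),[H_{\hslash},W^{(2)}_{\hslash}(\xi_\tau)]\Psi_{\hslash}(\tau)\rangle$, while $\partial_\tau W^{(2)}_{\hslash}(\xi_\tau)=-\tfrac{i}{\hslash}[H^0_{\hslash},W^{(2)}_{\hslash}(\xi_\tau)]$ (again by the intertwining) contributes $-\tfrac{i}{\hslash}\langle\Psi_{\hslash}(\tau),[H^0_{\hslash},W^{(2)}_{\hslash}(\xi_\tau)]\Psi_{\hslash}(\tau)\rangle$, so that only $H_{\hslash}-H^0_{\hslash}=H^I_{\hslash}$ survives and $G'(\tau)=\tfrac{i}{\hslash}\langle\Psi_{\hslash}(\tau),[H^I_{\hslash},W^{(2)}_{\hslash}(\xi_\tau)]\Psi_{\hslash}(\tau)\rangle$. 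Since $H^I_{\hslash}=\mathrm{d}\Gamma^{(1)}_{\hslash}(\phi_{\hslash}(\lambda_x))$ and $\phi_{\hslash}(\lambda_x)=a_{\hslash}(\lambda_x)+a^*_{\hslash}(\lambda_x)$ is a meson field, the canonical commutation relations in Weyl form reduce $[\phi_{\hslash}(\lambda_x),W^{(2)}_{\hslash}(\xi_\tau)]$ to $\sqrt{2}\,\hslash\,\Im\langle\xi_\tau,\lambda_x\rangle_{\mathfrak H}$ times $W^{(2)}_{\hslash}(\xi_\tau)$; second-quantising in $x$ and dividing by $\hslash$ then yields $G'(\tau)=\sqrt{2}\,i\,\langle\Psi_{\hslash}(\tau),\mathrm{d}\Gamma^{(1)}_{\hslash}\bigl(\Im\langle\xi_\tau,\lambda_{(\cdot)}\rangle_{\mathfrak H}\bigr)\Psi_{\hslash}(\tau)\rangle$, the integrand of \eqref{eq:weylrepres}.

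To complete the proof I would then check integrability and legitimise the manipulations. All operators above are $S$-form bounded uniformly in $\hslash$, with $S=H^0_{\hslash}+N_1^2+1$: \cref{lem:dgammaomega,lem:KR.est,lem:formcontrol} control $H^I_{\hslash}$, $H^0_{\hslash}$ and $\mathrm{d}\Gamma^{(1)}_{\hslash}(\Im\langle\xi_\tau,\lambda_{(\cdot)}\rangle_{\mathfrak H})$ by $S$, and assumption \eqref{hyp:states} together with \cref{lem:controltime} keep $\Psi_{\hslash}(\tau)$, uniformly in $\tau\in\mathbb{R}$ and $\hslash\in(0,1)$, in the form domain with $\langle\Psi_{\hslash}(\tau),S\Psi_{\hslash}(\tau)\rangle\le c_3$. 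For integrability I would invoke the dispersive estimate \cref{prop:timedecay}, $\|S^{-1/2}\mathrm{d}\Gamma^{(1)}_{\hslash}(\Im\langle\xi_\tau,\lambda_{(\cdot)}\rangle_{\mathfrak H})S^{-1/2}\|\le c\langle\tau\rangle^{-1-\nu}$, which gives $|G'(\tau)|\le c\,\langle\tau\rangle^{-1-\nu}\langle\Psi_{\hslash}(\tau),S\Psi_{\hslash}(\tau)\rangle\le c\,c_3\,\langle\tau\rangle^{-1-\nu}$; this is integrable on the half-line because $\nu>0$ in \eqref{hyp:pot}, and it simultaneously re-derives --- quantitatively and uniformly in $\hslash$ --- the existence of $\lim_{t\to\pm\infty}G(t)$.

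The delicate point will be the differentiation of $G$: under \eqref{hyp:states} one only knows $\Psi_{\hslash}\in\mathscr{Q}(H^0_{\hslash})\cap\mathscr{D}(N_1)$, not $\Psi_{\hslash}\in\mathscr{D}(H_{\hslash})$, so $\tau\mapsto\Psi_{\hslash}(\tau)$ need not be norm-differentiable and the Leibniz/commutator steps are a priori only formal. I would get around this by running the argument first for $\varphi,\psi\in\mathscr{D}(H_{\hslash})$ --- for such vectors $\tau\mapsto\langle\varphi,e^{i\frac{\tau}{\hslash}H_{\hslash}}W^{(2)}_{\hslash}(\xi_\tau)e^{-i\frac{\tau}{\hslash}H_{\hslash}}\psi\rangle$ is $C^1$, the computations above are literal (the intertwining and the Weyl commutator being genuine identities on the common dense domain where $S$ is finite, by the standard theory of Weyl operators and \cref{lem:dgammaomega}), and \eqref{eq:weylrepres} holds tested against $\varphi,\psi$ --- and then letting $\varphi,\psi\to\Psi_{\hslash}$ by density, using the uniform $S$-bounds of \cref{lem:controltime} on the finite-time term and dominated convergence with the $\langle\tau\rangle^{-1-\nu}$ majorant to pass the limit inside the $\tau$-integral (any meson Weyl factor carried along in the commutator is harmless here, since $W^{(2)}_{\hslash}(\xi_\tau)^{*}SW^{(2)}_{\hslash}(\xi_\tau)\le C(S+1)$ uniformly by Weyl conjugation).
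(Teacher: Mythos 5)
Your overall strategy is exactly the paper's: a Cook-type argument in which one differentiates $e^{i\frac{t}{\hslash}H_{\hslash}}W^{(2)}_{\hslash}(\xi_t)e^{-i\frac{t}{\hslash}H_{\hslash}}$, observes via the intertwining with $e^{-i\frac{t}{\hslash}H^0_{\hslash}}$ that only $H^I_{\hslash}$ survives, evaluates the commutator by the Weyl form of the CCR, and integrates using the uniform bound of \cref{lem:controltime} together with the dispersive estimate of \cref{prop:timedecay} to get an integrable majorant $c\langle\tau\rangle^{-1-\nu}$. Your extra care about differentiability (regularizing on $\mathscr{D}(H_{\hslash})$ and passing to the limit with the Weyl-conjugation bound $W^{(2)}_{\hslash}(\xi_\tau)^{*}SW^{(2)}_{\hslash}(\xi_\tau)\le C(S+1)$) is a legitimate substitute for the paper's direct claim of weak differentiability on $\mathscr{D}(\sqrt{S})$.

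There is, however, one genuine inconsistency in your identification of $G'(\tau)$. Your own commutator computation gives $[\phi_{\hslash}(\lambda_x),W^{(2)}_{\hslash}(\xi_\tau)]=\sqrt{2}\,\hslash\,\Im\langle\xi_\tau,\lambda_x\rangle_{\mathfrak{H}}\,W^{(2)}_{\hslash}(\xi_\tau)$, so after second quantization the Weyl operator does \emph{not} disappear: the correct derivative is $G'(\tau)=\sqrt{2}\,i\,\bigl\langle\Psi_{\hslash}(\tau),\mathrm{d}\Gamma^{(1)}_{\hslash}\bigl(\Im\langle\xi_\tau,\lambda_{(\cdot)}\rangle_{\mathfrak{H}}\bigr)W^{(2)}_{\hslash}(\xi_\tau)\Psi_{\hslash}(\tau)\bigr\rangle$, and this factor cannot be dropped at fixed $\hslash$ (it only becomes the phase $e^{i\sqrt{2}\Re\langle\xi_\tau,z\rangle}$ in the semiclassical limit, as in \eqref{eq:weyllimit1}). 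This is exactly what the paper's proof records in \eqref{eq.pr.int.f} and bounds in \eqref{eq.estdecapr}; the printed right-hand side of \eqref{eq:weylrepres} appears to omit the factor $W_{\hslash}(\xi_\tau)$, so by silently deleting it you have reproduced the statement's typo rather than derived it. The fix costs you nothing structurally: keep $W^{(2)}_{\hslash}(\xi_\tau)$ in the integrand, and note that your integrability estimate then genuinely requires the uniform bound $\langle W^{(2)}_{\hslash}(\xi_\tau)\Psi_{\hslash}(\tau),S\,W^{(2)}_{\hslash}(\xi_\tau)\Psi_{\hslash}(\tau)\rangle\le c$ (Weyl conjugation plus \cref{lem:controltime}), which you already invoked for a different purpose, combined with \cref{prop:timedecay}.
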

\begin{proof}
  The operator $e^{i\frac{t}{\hslash}H_{\hslash}} W_{\hslash}(\xi_t)
  \,e^{-i\frac{t}{\hslash}H_{\hslash}} $ is weakly differentiable on
  $\mathscr{D}(\sqrt{S})\subset \mathscr{Q}(H_\hslash)$. This is a
  consequence of \cref{lem:formcontrol,lem:controltime}, together with the
  following properties of Weyl operators:
  \begin{equation}
    W_{\hslash}(\xi_t) = e^{-i\frac{t}{\hslash} H_\hslash^0}\, W_{\hslash}(\xi) \,e^{i\frac{t}{\hslash} H_\hslash^0} \;,\quad W_{\hslash}(\xi_t) \mathscr{D}(\sqrt{S})\subset \mathscr{D}(\sqrt{S})\; ,
  \end{equation}
  for any $\xi\in \mathscr{C}_0^\infty(\mathbb{R}^d \setminus \{0\})$. The
  time derivative yields
  \begin{equation}
    \frac{\mathrm{d}}{ \mathrm{d} t}\, e^{i\frac{t}{\hslash}H_{\hslash}} W_{\hslash}(\xi_t) \,e^{-i\frac{t}{\hslash}H_{\hslash}} =\sqrt{2} i \, e^{i\frac{t}{\hslash} H_{\hslash}}\, \mathrm{d} \Gamma_{\hslash}^{(1)}\bigl(\Im  \left\langle \xi_t, \lambda_{(\cdot)} \right\rangle_{\mathfrak{H}} \bigr)W_{\hslash}(\xi_t) \, e^{-i\frac{t}{\hslash} H_{\hslash}}\,.
  \end{equation}
  By the fundamental theorem of calculus, we obtain
  \begin{equation}
    \label{eq.pr.int.f}
    \begin{split}
      \bigl\langle \Psi_{\hslash},e^{i\frac{t}{\hslash}H_{\hslash}} W_{\hslash}(\xi_t) \,e^{-i\frac{t}{\hslash}H_{\hslash}} \Psi_{\hslash}\bigr\rangle&= \bigl\langle \Psi_{\hslash}, W_{\hslash}(\xi)  \Psi_{\hslash} \bigr\rangle \\
      &+ \sqrt{2} i\int_0^{t} \bigl\langle \Psi_{\hslash}(\tau), \mathrm{d} \Gamma_{\hslash}^{(1)}\bigl( \Im  \langle \xi_\tau,\lambda_{(\cdot)} \rangle_{\mathfrak{H}} \bigr)W_{\hslash}(\xi_\tau) \,\Psi_{\hslash}(\tau) \bigr\rangle \,\mathrm{d}\tau\; .
    \end{split}
  \end{equation}
  Using \cref{lem:controltime} and the canonical commutation relations
  \citep[see, \emph{e.g.},][Lemma 2.5]{derezinski1999rmp}, it then follows
  that there exists $c>0$ such that for all $\tau\in\mathbb{R}$,
$$
\langle W_{\hslash}(\xi_\tau) \Psi_{\hslash}(\tau), S W_{\hslash}(\xi_\tau)
\,\Psi_{\hslash}(\tau)\rangle \leq c\,.
$$
Hence we can use \cref{prop:timedecay}, to find a (possibly
different) $c>0$ such that
\begin{equation}
  \label{eq.estdecapr}
  \bigl\lvert\langle \Psi_{\hslash}(\tau), \mathrm{d} \Gamma_{\hslash}^{(1)}\bigl( \Im \langle \xi_\tau,\lambda_{(\cdot)} \rangle_{\mathfrak{H}} \bigr)W_{\hslash}(\xi_\tau) \,\Psi_{\hslash}(\tau) \rangle\bigr\rvert \leq c\; \langle \tau \rangle^{-1-\nu}\,.
\end{equation}
This ensures integrability on the whole positive (negative) real line, and
therefore we can take the limit $t \rightarrow \pm \infty$ in both sides of
\eqref{eq.pr.int.f}.
\end{proof}

Thanks to the above proposition and using the techniques of infinite
dimensional semiclassical analysis developed in \citep{ammari2008ahp}, we
prove below a semiclassical characterization for asymptotic Weyl operators.
We recommend the reading of paragraph \cref{subsec:wig} before going through
this part.

\begin{thm}
  \label{thm:1}
  Let $\{\Psi_{\hslash}\}_{\hslash \in (0,1)} \subseteq \mathscr{H}$ be a
  family of normalized vectors satisfying \eqref{hyp:states} and assume that
  $\mathscr{M}\bigl(\Psi_{\hslash_n}; n\in \mathbb{N}\bigr)=\{ \mu\}$ for a
  sequence $\hslash_n\to 0$. Then the Wigner measure $\mu$ is concentrated on
  $\mathscr{X}=\mathscr{D}(\sqrt{-\Delta+V})\oplus
  \mathscr{D}(\sqrt{\omega})$, and for all $\xi\in\mathfrak{H}$,
  \begin{align}
    \lim_{n\to \infty} \bigl\langle \Psi_{\hslash_n},W^\pm_{\hslash_n}(\xi) \Psi_{\hslash_n} \bigr\rangle &= \int_{\mathscr{X}} e^{i\sqrt{2} \Re  \left\langle \xi,\Lambda^\pm (u,z) \right\rangle_{\mathfrak{H}}}\,\mathrm{d} \mu(u,z)\;, \label{eq:weyllimit2}
  \end{align}
  where $\Lambda^\pm $ are the classical wave operators of \cref{def.waveop}.

  Additionally, the following integral representation holds true for all
  $\xi\in \mathscr{C}_0^\infty(\mathbb{R}^d \setminus \{0\})$:
  \begin{equation}
    \label{eq:weyllimit1}
    \begin{aligned}
      \lim_{n\to \infty} \bigl\langle \Psi_{\hslash_n},W^\pm_{\hslash_n}(\xi) \Psi_{\hslash} \bigr\rangle &= \int_{\mathscr{X}} e^{i\sqrt{2}  \Re  \langle \xi,z \rangle_{\mathfrak{H}}}\mathrm{d} \mu(u,z)  \\
      &+  \sqrt{2}i \int_0^{\pm \infty} \int_{\mathscr{X} } e^{i\sqrt{2} \Re \langle \xi_\tau,z \rangle_{\mathfrak{H}}} \Im \bigl\langle \xi_\tau  ,  \omega^{-1/2}\chi \,\widehat{\lvert u  \rvert^2}\bigr\rangle_{\mathfrak{H}}\mathrm{d} \mu_\tau(u,z)\mathrm{d} \tau\,,
    \end{aligned}
  \end{equation}
  where $\mu_\tau= (\Phi_\tau) \, _{*}\,\mu$ is the measure pushed forward by
  the nonlinear flow $\Phi_\tau : \mathscr{X}\rightarrow \mathscr{X}$ solving
  the S-KG equation \eqref{eq:skg}.
\end{thm}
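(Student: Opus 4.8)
The plan is to pass to the semiclassical limit $\hslash=\hslash_n\to0$ in the integral identity behind \cref{prop:weylrepres}, recognize the result as the $\mu$-average of the classical quantity appearing in \eqref{eq:weyllimit2}, and then remove the restriction $\xi\in\mathscr{C}_0^\infty(\mathbb{R}^d\setminus\{0\})$ by density. First, for the concentration of $\mu$ on $\mathscr{X}$: the hypothesis \eqref{hyp:states} bounds $\langle\Psi_\hslash,\mathrm{d}\Gamma_\hslash^{(1)}(-\Delta+V)\Psi_\hslash\rangle$, $\langle\Psi_\hslash,\mathrm{d}\Gamma_\hslash^{(2)}(\omega)\Psi_\hslash\rangle$ and $\langle\Psi_\hslash,N_1^2\Psi_\hslash\rangle$ uniformly in $\hslash$, so the lower semicontinuity of such quadratic (and quartic) observables along Wigner-measure convergence \cite{ammari2008ahp} yields $\int_{\mathscr{Z}}\mathscr{E}_0(u,z)\,\mathrm{d}\mu<\infty$ and $\int_{\mathscr{Z}}\|u\|_{\mathfrak{H}}^4\,\mathrm{d}\mu<\infty$; hence $\mu(\mathscr{X})=1$ and, by \eqref{eq.estenrg1}, $\int_{\mathscr{X}}\mathscr{E}(u,z)\,\mathrm{d}\mu<\infty$. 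In particular both sides of \eqref{eq:weyllimit2} and \eqref{eq:weyllimit1} are meaningful, since $\Lambda^{\pm}$ is defined on $\mathscr{X}$ and $\ran\Lambda^{\pm}\subset\mathscr{D}(\sqrt{\omega})$ by \cref{cor:2}.

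Now fix $\xi\in\mathscr{C}_0^\infty(\mathbb{R}^d\setminus\{0\})$ and put $\hslash=\hslash_n$ in the identity \eqref{eq.pr.int.f} of the proof of \cref{prop:weylrepres}; letting $t\to\pm\infty$ there (licit by \eqref{eq.estdecapr}) and then $n\to\infty$, the leading term converges by the very definition of Wigner measure, $\langle\Psi_{\hslash_n},W_{\hslash_n}(\xi)\Psi_{\hslash_n}\rangle\to\int_{\mathscr{Z}}e^{i\sqrt{2}\Re\langle\xi,z\rangle}\,\mathrm{d}\mu$ (apply \eqref{wigner} with $\eta=0\oplus\xi$), while the $\tau$-integral is handled by dominated convergence since, by \eqref{eq.estdecapr}, the integrand is bounded by $c\langle\tau\rangle^{-1-\nu}$ uniformly in $n$. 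It then remains to identify, for each fixed $\tau$, the limit of $\langle\Psi_{\hslash_n}(\tau),\mathrm{d}\Gamma_{\hslash_n}^{(1)}(\Im\langle\xi_\tau,\lambda_{(\cdot)}\rangle)\,W_{\hslash_n}(\xi_\tau)\Psi_{\hslash_n}(\tau)\rangle$. Here $\Im\langle\xi_\tau,\lambda_{(\cdot)}\rangle$ is multiplication by a Schwartz function of $x$, hence a bounded operator that is relatively compact with respect to $-\Delta+V$; combining the propagation of Wigner measures for the Yukawa flow, $\mathscr{M}\bigl(\Psi_{\hslash_n}(\tau);n\in\mathbb{N}\bigr)=\{\mu_\tau\}$ with $\mu_\tau=(\Phi_\tau)_{*}\mu$ \cite[Theorem~1.1]{ammari2014jsp}, the uniform bounds $\langle\Psi_\hslash(\tau),S\Psi_\hslash(\tau)\rangle\le c_3$ of \cref{lem:controltime} (which control $N_1^2$ and $\mathrm{d}\Gamma_\hslash^{(1)}(-\Delta+V)$), and the semiclassical calculus of \cite{ammari2008ahp} for Wick observables multiplied by Weyl operators, this limit equals $\int_{\mathscr{X}}e^{i\sqrt{2}\Re\langle\xi_\tau,z\rangle}\,\langle u,\Im\langle\xi_\tau,\lambda_{(\cdot)}\rangle u\rangle_{\mathfrak{H}}\,\mathrm{d}\mu_\tau(u,z)$. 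Since $\langle u,\Im\langle\xi_\tau,\lambda_{(\cdot)}\rangle u\rangle_{\mathfrak{H}}=\Im\langle\xi_\tau,\omega^{-1/2}\chi\,\widehat{|u|^2}\rangle_{\mathfrak{H}}$, this establishes \eqref{eq:weyllimit1}.

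To pass from \eqref{eq:weyllimit1} to \eqref{eq:weyllimit2}, fix $(u_0,z_0)\in\mathscr{X}$, set $(u(t),z(t))=\Phi_t(u_0,z_0)$, and differentiate $t\mapsto e^{i\sqrt{2}\Re\langle\xi,e^{it\omega}z(t)\rangle}$ using the second equation of \eqref{eq:skg} exactly as in \eqref{eq:cook}; identifying the limit at $\pm\infty$ via the weak convergence $e^{it\omega}z(t)\to\Lambda^{\pm}(u_0,z_0)$ of \cref{def.waveop}, the fundamental theorem of calculus gives
\[
e^{i\sqrt{2}\Re\langle\xi,\Lambda^{\pm}(u_0,z_0)\rangle}=e^{i\sqrt{2}\Re\langle\xi,z_0\rangle}+\sqrt{2}\,i\int_0^{\pm\infty}e^{i\sqrt{2}\Re\langle\xi_\tau,z(\tau)\rangle}\,\Im\langle\xi_\tau,\omega^{-1/2}\chi\,\widehat{|u(\tau)|^2}\rangle\,\mathrm{d}\tau\,.
\]
Integrating against $\mathrm{d}\mu(u_0,z_0)$, applying Fubini (legitimate because, by \cref{lem:controldecay}, mass/energy conservation and $\int\mathscr{E}\,\mathrm{d}\mu<\infty$, the double integral converges absolutely) and the change of variables $\int h(\Phi_\tau(u_0,z_0))\,\mathrm{d}\mu=\int h\,\mathrm{d}\mu_\tau$ on each $\tau$-slice, one recovers precisely the right-hand side of \eqref{eq:weyllimit1}; hence \eqref{eq:weyllimit2} holds for $\xi\in\mathscr{C}_0^\infty(\mathbb{R}^d\setminus\{0\})$. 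Finally, both sides of \eqref{eq:weyllimit2} are continuous in $\xi$ for the $\mathfrak{H}$-norm — the left-hand side uniformly in $n$, via the Weyl relations, the bound \eqref{eq:4}, and the uniform estimate $\langle\Psi_\hslash,H_\hslash\Psi_\hslash\rangle\le C'$ that follows from \eqref{eq:10}, and the right-hand side by dominated convergence since $\Lambda^{\pm}(u,z)\in\mathfrak{H}$ — and $\mathscr{C}_0^\infty(\mathbb{R}^d\setminus\{0\})$ is dense in $L^2(\mathbb{R}^d)$, so a three-$\varepsilon$ argument extends \eqref{eq:weyllimit2} to all $\xi\in\mathfrak{H}$.

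The main difficulty lies in the second step: making rigorous, for each fixed $\tau$, the joint semiclassical limit of the mixed observable $\mathrm{d}\Gamma_{\hslash_n}^{(1)}(\Im\langle\xi_\tau,\lambda_{(\cdot)}\rangle)\,W_{\hslash_n}(\xi_\tau)$ along the propagated states $\Psi_{\hslash_n}(\tau)$, which is exactly the point where the $\hslash$-uniform energy estimates (propagated by \cref{lem:controltime}), the $\langle\tau\rangle^{-1-\nu}$ dispersive decay (\cref{prop:timedecay}), and the Wigner-measure machinery of \cite{ammari2008ahp,ammari2014jsp} must all be combined; the remaining steps are either soft (concentration, density) or reduce to the classical Duhamel computation already carried out in \cref{prop:1}.
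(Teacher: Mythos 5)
Your proposal is correct and follows essentially the same route as the paper: pass to the limit in the integral identity of \cref{prop:weylrepres} using the propagation of Wigner measures from \citep{ammari2014jsp}, the uniform dispersive bound \eqref{eq.estdecapr} with dominated convergence, and the semiclassical calculus of \citep{ammari2008ahp} to identify the fixed-$\tau$ integrand, then recover \eqref{eq:weyllimit2} by the classical Duhamel/fundamental-theorem computation along the S-KG flow, and finally extend to all $\xi\in\mathfrak{H}$ by a density argument based on uniform Lipschitz continuity of the asymptotic Weyl expectations (your use of \eqref{eq:4} plus the uniform energy bound is the same mechanism as the paper's estimate \eqref{eq.ws}).
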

\begin{proof}
  Firstly, let us fix $\xi\in \mathscr{C}_0^\infty(\mathbb{R}^d \setminus
  \{0\})$.  By \citep[Theorem 1.1]{ammari2014jsp}, it follows that
  $\Psi_{\hslash_n}\rightharpoondown \mu$ if and only if for all $t\in
  \mathbb{R}$, $\Psi_{\hslash_n}(t)\rightharpoondown \mu_t= (\Phi_t) \,
  _{*}\,\mu$. In addition, thanks to \eqref{hyp:states}, it also follows that
  $\mu$ is concentrated on $\mathscr{X}$. Now, it is sufficient to take the
  limit $n\to \infty$ on both sides of \eqref{eq:weylrepres} (with
  $\hslash_n$ in place of $\hslash$) to obtain \eqref{eq:weyllimit1}. The
  strategy for the proof is analogous to the one used to prove the
  aforementioned \citep[Theorem 1.1]{ammari2014jsp}, let us however briefly
  comment on how to deal with the term containing the integral over all
  positive times $\tau$. The crucial point is that the bound for the
  integrand given in \eqref{eq.estdecapr} is uniform with respect to
  $\hslash$. Therefore, Lebesgue's dominated convergence theorem shall be
  applied, exchanging the integral with the limit $n\to \infty$, and then the
  convergence follows from the same arguments as in the proofs of Lemma 3.14
  and Proposition 4.10 in \citep{ammari2014jsp}.

  It remains to prove \eqref{eq:weyllimit2} for all $\xi\in
  \mathfrak{H}$. However, for the moment keep
  $\xi\in\mathscr{C}_0^\infty(\mathbb{R}^d \setminus \{0\})$ as above. Remark
  that, by \cref{def.waveop},
  \begin{equation}
    \label{eq:intermediatelimit}
    \begin{split}
      \int_{\mathscr{X} } e^{i\sqrt{2} \Re  \langle \xi,\Lambda^\pm(u,z) \rangle_{\mathfrak{H}}}&\mathrm{d} \mu(u,z) = \lim_{t \rightarrow \pm \infty} \int_{\mathscr{X}} e^{i\sqrt{2} \Re  \langle \xi_t,z(t)\rangle_{\mathfrak{H}}}\mathrm{d} \mu(u,z) \\
      =\int_{\mathscr{X}} &e^{i\sqrt{2} \Re  \langle \xi,z\rangle}\mathrm{d} \mu(u,z)  +\lim_{t\to \pm \infty} \int_0^t  \int_{\mathscr{X}}\frac{\mathrm{d}}{\mathrm{d} \tau} \biggl(e^{i\sqrt{2} \Re  \langle \xi_\tau , z(\tau) \rangle_{\mathfrak{H}}}\biggr)\mathrm{d} \mu(u,z)\mathrm{d} \tau\; .
    \end{split}
  \end{equation}
  Now, S-KG equation \eqref{eq:skg} yields precisely
  \begin{align*}
    \frac{\mathrm{d}}{\mathrm{d} \tau} e^{i\sqrt{2} \Re  \langle \xi_\tau , z(\tau)\rangle_{\mathfrak{H}}}  &= \sqrt{2}i e^{i\sqrt{2} \Re \langle \xi_\tau,z(\tau) \rangle_{\mathfrak{H}}} \Im \bigl\langle \xi_\tau  ,  \omega^{-1/2}\chi \,\widehat{\lvert u(\tau)  \rvert^2}\bigr\rangle_{\mathfrak{H}}\; .
  \end{align*}
  Hence \eqref{eq:weyllimit2} is proved for all
  $\xi\in\mathscr{C}_0^\infty(\mathbb{R}^d \setminus \{0\})$.  To extend the
  proof to all $\xi\in \mathfrak{H}$, a density argument is used, observing
  that by \cref{lem:formcontrol} and \citep[][Lemma 3.1]{ammari2008ahp} it
  follows that, uniformly with respect to $\hslash$, there exists $c>0$ such
  that
  \begin{equation}
    \label{eq.ws}
    \left\|\big(W_\hslash^\pm(\xi)-W_\hslash^\pm(\eta)\big) S^{-1/2}\right\|_{\mathscr{B}(\mathscr{H})} \leq c \; \|\xi-\eta\|_{\mathfrak{H}}\;.
  \end{equation}
\end{proof}
\begin{remark}
  The asymptotic Weyl operators have the following intertwining property
  \citep[see][Theorem 5.1]{derezinski1999rmp}:
$$
e^{i\frac{t}{\hslash}H_{\hslash}} \;W_\hslash^\pm(\xi) \;
e^{-i\frac{t}{\hslash}H_{\hslash}} = W_\hslash^\pm(\xi_{-t})\,.
$$
\cref{thm:1} yields that the intertwining property is also inherited by the
classical wave operator:
$$
\Lambda^{\pm}(\Phi_t(u,z))= e^{it \omega} \Lambda^{\pm}(u,z)\,.
$$
Of course, such relation can also be proved using directly \cref{def.waveop}
and \cref{prop:1}.
\end{remark}

The previous results connect, via the semiclassical limit, the quantum and
classical scattering theories of the Yukawa model and the S-KG system,
respectively. For the moment, we have shown that the time asymptotic limit of
the Fourier-Wigner transforms of quantum states converge to the
characteristic functions of probability measures, that are none other than the
initial Wigner measures at time $t=0$ pushed forward by the classical wave
operators $\Lambda^\pm$. Now, we prove that quantum asymptotic meson fields converge
to classical asymptotic Klein-Gordon fields.

Recall that starting from the asymptotic Weyl operators $W^{\pm}_{\hslash}$
it is possible to define the asymptotic fields $\phi_{\hslash}^{\pm}$ as
their self-adjoint generators:
\begin{equation*}
  W^{\pm}_{\hslash}(\eta)=e^{\frac{i}{\sqrt{2}}\phi^{\pm}_{\hslash}(\eta)}\; .
\end{equation*}
These operators have a common dense core, given by the square root of $S$:
for all $\xi\in\mathfrak{H}$, $\mathscr{D}(\sqrt{S})\subset \mathscr{D}(\phi^{\pm}_{\hslash}(\xi))$.

\begin{proposition}\label{prop:segalfield}
  Let $\{\Psi_{\hslash}\}_{\hslash \in (0,1)} \subseteq \mathscr{H}$ be a
  family of normalized vectors satisfying \eqref{hyp:states} and assume that
  $\Psi_{\hslash_n}\rightharpoondown \mu$ for a sequence $\hslash_n\to
  0$. Then for all $\xi\in\mathfrak{H}$,
  \begin{align}
    \label{eq.asphi}
    \lim_{n\to \infty} \langle \Psi_{\hslash_n},\phi^{\pm}_{\hslash_n}(\xi)\Psi_{\hslash_n} \rangle = 2\int_{\mathscr{X}}  \Re \langle \xi, \Lambda^\pm(u,z) \rangle_{\mathfrak{H}} \,\mathrm{d} \mu(u,z)\; .
  \end{align}
\end{proposition}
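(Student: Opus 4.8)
The plan is to deduce \eqref{eq.asphi} by differentiating, at the origin, the semiclassical limit \eqref{eq:weyllimit2} for the asymptotic Weyl operators. Since $\xi\mapsto\phi^\pm_\hslash(\xi)$ is real-linear and $W^\pm_\hslash(\eta)=e^{\frac{i}{\sqrt2}\phi^\pm_\hslash(\eta)}$, one has $W^\pm_\hslash(s\xi)=e^{\frac{is}{\sqrt2}\phi^\pm_\hslash(\xi)}$ for every $s\in\mathbb{R}$, so that $r\mapsto W^\pm_\hslash(r\xi)$ is the one-parameter unitary group generated by $\tfrac{1}{\sqrt2}\phi^\pm_\hslash(\xi)$. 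The first thing I would record is a uniform field bound: letting $s\to0$ in \eqref{eq.ws} with $\eta=0$ (so that $W^\pm_\hslash(0)=\mathds{1}$) yields
\[
  \bigl\|\phi^\pm_\hslash(\xi)\, S^{-1/2}\bigr\|_{\mathscr{B}(\mathscr{H})}\le \sqrt{2}\,c\,\|\xi\|_{\mathfrak{H}}\,,
\]
uniformly in $\hslash\in(0,1)$; together with \eqref{hyp:states} this gives $\|\phi^\pm_\hslash(\xi)\Psi_\hslash\|_{\mathscr{H}}\le c'\|\xi\|_{\mathfrak{H}}$ uniformly, hence the scalars $\langle\Psi_\hslash,\phi^\pm_\hslash(\xi)\Psi_\hslash\rangle$ stay in a bounded set as $\hslash\to0$. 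Note that, since both \eqref{eq:weyllimit2} and \eqref{eq.ws} already hold for all $\xi\in\mathfrak{H}$, no density argument in $\xi$ will be needed here.

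Next I would expand the Weyl expectation to first order in $s$ with an $\hslash$-uniform remainder. Because $\Psi_\hslash\in\mathscr{D}(\sqrt S)\subset\mathscr{D}(\phi^\pm_\hslash(\xi))$ by \eqref{hyp:states}, Stone's theorem shows that $r\mapsto\langle\Psi_\hslash,W^\pm_\hslash(r\xi)\Psi_\hslash\rangle$ is $C^1$ with derivative $\tfrac{i}{\sqrt2}\langle\phi^\pm_\hslash(\xi)\Psi_\hslash,W^\pm_\hslash(r\xi)\Psi_\hslash\rangle$; the fundamental theorem of calculus and the splitting $W^\pm_\hslash(r\xi)=\mathds{1}+\bigl(W^\pm_\hslash(r\xi)-\mathds{1}\bigr)$ then give
\[
  \langle\Psi_\hslash,W^\pm_\hslash(s\xi)\Psi_\hslash\rangle-1=\frac{is}{\sqrt2}\,\langle\Psi_\hslash,\phi^\pm_\hslash(\xi)\Psi_\hslash\rangle+r_\hslash(s)\,,\qquad r_\hslash(s)=\frac{i}{\sqrt2}\int_0^s\bigl\langle\phi^\pm_\hslash(\xi)\Psi_\hslash,\,\bigl(W^\pm_\hslash(r\xi)-\mathds{1}\bigr)\Psi_\hslash\bigr\rangle\,\mathrm{d}r\,.
\]
Bounding $\|(W^\pm_\hslash(r\xi)-\mathds{1})\Psi_\hslash\|_{\mathscr{H}}\le c\,|r|\,\|\xi\|_{\mathfrak{H}}\,\|\sqrt S\,\Psi_\hslash\|_{\mathscr{H}}$ by \eqref{eq.ws} and \eqref{hyp:states}, and using the field bound above, one obtains $|r_\hslash(s)|\le K\,s^2$ with $K$ depending only on $\|\xi\|_{\mathfrak{H}}$ and the constant in \eqref{hyp:states}, never on $\hslash$ nor on the nucleon sector $n$. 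Dividing by $\tfrac{is}{\sqrt2}$ gives, for every $s\neq0$,
\[
  \langle\Psi_\hslash,\phi^\pm_\hslash(\xi)\Psi_\hslash\rangle=\frac{\sqrt2}{is}\Bigl(\langle\Psi_\hslash,W^\pm_\hslash(s\xi)\Psi_\hslash\rangle-1\Bigr)+\OO(|s|)\,,
\]
with the error uniform in $\hslash\in(0,1)$.

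Finally I would pass to the limit $n\to\infty$ and then $s\to0$. Fix $s\neq0$. By the uniform bound of the first step, any subsequence of $\bigl(\langle\Psi_{\hslash_n},\phi^\pm_{\hslash_n}(\xi)\Psi_{\hslash_n}\rangle\bigr)_n$ has a further subsequence converging to some $L\in\CC$; applying \eqref{eq:weyllimit2} (with $s\xi$ in place of $\xi$) along that subsequence in the displayed identity yields
\[
  L=\frac{\sqrt2}{is}\Bigl(\int_{\mathscr{X}}e^{i\sqrt2\,s\,\Re\langle\xi,\Lambda^\pm(u,z)\rangle_{\mathfrak{H}}}\,\mathrm{d}\mu(u,z)-1\Bigr)+\OO(|s|)\,.
\]
Now $\mu$ is carried by $\mathscr{X}$ and integrates $\mathscr{E}(u,z)+\|u\|_{L^2}^4$ — a consequence of \eqref{hyp:states} together with the lower semicontinuity properties of Wigner measures already invoked in \cref{thm:1} — while the estimates in the proof of \cref{prop:1} and \eqref{eq.estenrg2} give $\|\Lambda^\pm(u,z)\|_{\mathfrak{H}}^2\lesssim\mathscr{E}(u,z)+\|u\|_{L^2}^4$; hence $\Lambda^\pm(\cdot)\in L^1(\mathscr{X},\mathrm{d}\mu;\mathfrak{H})$, and since $|\tfrac1{is}(e^{isa}-1)|\le|a|$ with pointwise limit $a$ as $s\to0$, dominated convergence lets me send $s\to0$, obtaining $L=2\int_{\mathscr{X}}\Re\langle\xi,\Lambda^\pm(u,z)\rangle_{\mathfrak{H}}\,\mathrm{d}\mu(u,z)$. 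As this value is independent of the chosen subsequence, the full sequence converges to it, which is \eqref{eq.asphi}. The step I expect to require the most care is the \emph{$\hslash$-uniformity of the first-order remainder} $r_\hslash(s)$ — which rests entirely on the uniform estimates of \cref{sec.2}, channeled through \eqref{eq.ws} — together with the $\mu$-integrability of $\Lambda^\pm$ needed to interchange $\lim_{s\to0}$ with the phase-space integral.
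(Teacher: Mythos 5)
Your argument is correct, but it follows a genuinely different route from the paper's. The paper proves \eqref{eq.asphi} by redoing, for the field itself, the propagation argument behind \cref{prop:weylrepres}: it differentiates $e^{i\frac{t}{\hslash}H_{\hslash}}\phi_{\hslash}(\xi_t)e^{-i\frac{t}{\hslash}H_{\hslash}}$ as a quadratic form on $\mathscr{D}(\sqrt S)$, obtains the time-integral identity \eqref{eq:timesegal}, uses the decay estimate of \cref{prop:timedecay} for integrability, passes to the limit $\hslash_n\to 0$ term by term with the machinery of \citep{ammari2014jsp}, reassembles the answer through the representation \eqref{eq:waverepres} of $\Lambda^{\pm}$, and finally extends from $\xi\in\mathscr{C}^\infty_0(\mathbb{R}^d\setminus\{0\})$ to $\xi\in\mathfrak{H}$ by density via \eqref{eq:27}. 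You instead bootstrap from the already established Weyl-operator limit \eqref{eq:weyllimit2} of \cref{thm:1}, differentiating the asymptotic characteristic function $s\mapsto\langle\Psi_{\hslash},W^{\pm}_{\hslash}(s\xi)\Psi_{\hslash}\rangle$ at $s=0$ with a remainder that is $O(s^2)$ \emph{uniformly in} $\hslash$, thanks to \eqref{eq.ws} (equivalently \eqref{eq:27}) combined with \eqref{hyp:states}; the interchange of $s\to 0$ with the phase-space integral is legitimate because $\|\Lambda^{\pm}(u,z)\|_{\mathfrak{H}}$ is $\mu$-integrable, which is exactly the integrability fact the paper itself invokes at the end of its proof (concentration on $\mathscr{X}$, \eqref{eq.estenrg2}, and $\int\mathscr{E}_0\,\mathrm{d}\mu<\infty$ from \eqref{hyp:states}), and the subsequence-compactness step correctly upgrades to convergence of the full sequence since the limiting value is unique. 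What your route buys is economy: no second Duhamel/decay argument and no separate density step in $\xi$, since both are inherited from \cref{thm:1}. What the paper's route buys is the intermediate formula \eqref{eq:timesegal}, which is the template reused for the multi-point correlation functions in \cref{prop:asymptcreat}; your differentiation trick yields only the first moment, and pushing it to higher moments would require uniform control of higher derivatives of the characteristic function, i.e.\ stronger hypotheses of the type \eqref{eq:hyppsin}, so it does not supersede the paper's method later in \cref{sec:class-limit-asympt}.
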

\begin{proof}
  The function $t \mapsto e^{i\frac{t}{\hslash} H_{\hslash}}\,\phi_{\hslash}
  (\xi_t)\, e^{-i\frac{t}{\hslash}H_{\hslash}}$ is differentiable as a
  quadratic form on $\mathscr{D}(\sqrt{S})$, and
  \[\frac{\mathrm{d} }{\mathrm{d} t }e^{i\frac{t}{\hslash}
      H_{\hslash}}\,\phi_{\hslash} (\xi_{t})\, e^{-i\frac{t}{\hslash}
      H_{\hslash}} = 2 \,e^{i\frac{t}{\hslash} H_{\hslash}}\, \mathrm{d}
    \Gamma_{\hslash}^{(1)}(\Im \langle \xi_t,\lambda_x
    \rangle_{\mathfrak{H}}) \, e^{-i\frac{t}{\hslash} H_{\hslash}}\,.
  \]
  Therefore, we get
  \begin{align*}
    \bigl\langle \Psi_{\hslash},e^{i\frac{t}{\hslash} H_{\hslash}}\,\phi_{\hslash} (\xi_{t})\, e^{-i\frac{t}{\hslash} H_{\hslash}}\Psi_{\hslash} \bigr\rangle = \bigl\langle \Psi_{\hslash},\phi_{\hslash}(\xi)\Psi_{\hslash} \bigr\rangle + 2\int_0^t \bigl\langle \Psi_{\hslash}(\tau), \mathrm{d} \Gamma_{\hslash}^{(1)}(\Im  \langle \xi_\tau,\lambda_x \rangle_{\mathfrak{H}})\Psi_{\hslash}(\tau) \bigr\rangle\mathrm{d} \tau\;.
  \end{align*}
  Using \cref{prop:timedecay}, the integrand in the previous expression is in
  $L^1_\tau(\mathbb{R}^{\pm})$ for all $\xi\in
  \mathscr{C}_0^{\infty}(\mathbb{R}^d\smallsetminus \{0\})$. Hence, taking
  the limit $t \rightarrow \pm \infty$ yields:
  \begin{equation}\label{eq:timesegal}
    \begin{split}
      \bigl\langle \Psi_{\hslash_n},\phi^\pm_{\hslash_n} (\xi)\Psi_{\hslash_n}\bigr\rangle =& \bigl\langle \Psi_{\hslash_n},\phi_{\hslash_n}(\xi)\Psi_{\hslash_n} \bigr\rangle \\&+ 2 \int_0^{\pm \infty} \bigl\langle \Psi_{\hslash_n}(\tau), \mathrm{d} \Gamma_{\hslash_n}^{(1)}(\Im  \langle \xi_\tau,\lambda_x \rangle_{\mathfrak{H}})\Psi_{\hslash_n}(\tau) \bigr\rangle\,\mathrm{d} \tau\;.
    \end{split}
  \end{equation}
  Using again \cite[][Theorem 1.1]{ammari2014jsp}, we observe that
  $\mathscr{M}(\Psi_{\hslash_n}(\tau) ; n\in \mathbb{N})=\{\mu_\tau =
  (\Phi_\tau)\, _{*}\,\mu\}$. Hence, by \cref{lem:controltime}, and the
  analogous of \citep[][Lemma 3.15 and Proposition 4.10]{ammari2014jsp}, the
  limit $\hslash_n \rightarrow 0$ of \eqref{eq:timesegal} yields:
  \begin{equation*}
    \lim_{n\to \infty} \bigl\langle \Psi_{\hslash_n},\phi^\pm_{\hslash_n} (\xi)\Psi_{\hslash_n} \bigr\rangle = 2 \int_{\mathscr{X}}  \Re  \langle \xi,z\rangle_\mathfrak{H}\mathrm{d} \mu(u,z) + 2\int_0^{\pm \infty} \mspace{-7mu}\int_{\mathscr{X} }  \Im \bigl\langle \xi_\tau  ,  \omega^{-1/2}\chi \,\widehat{\lvert u  \rvert^2}\bigr\rangle_{\mathfrak{H}} \mathrm{d} \mu_\tau(u,z)\mathrm{d} \tau\; .
  \end{equation*}
  Recalling the representation formula \eqref{eq:waverepres} for the
  classical wave operators, it follows that
  \begin{equation}
    \Re  \langle \xi,z \rangle_\mathfrak{H} + \int_0^{\pm\infty}  \Im \bigl\langle \xi_\tau  ,  \omega^{-1/2}\chi \,\widehat{\lvert u(\tau)  \rvert^2}\bigr\rangle_{\mathfrak{H}} \mathrm{d} \tau =  \Re  \langle \xi, \Lambda^\pm(u,z)\rangle_\mathfrak{H}\;.
  \end{equation}
  The result extends to any $\xi \in \mathfrak{H}$ by means of the uniform
  bound
  \begin{equation}
    \label{eq:27}
    \left\|\big( \phi^\pm_{\hslash} (\xi)-\phi^\pm_{\hslash} (\eta)\big) S^{-1/2}\right\|\leq c \,\|\xi-\eta\|_{\mathfrak{H}}\; .
  \end{equation}
  Let us remark that the energy estimates
  \eqref{eq.estenrg1}-\eqref{eq.estenrg2} imply the bound
  \[
    \|\Lambda^\pm(u,z)\|_{\mathfrak{H}} \leq \frac{1}{m}
    \mathscr{E}_0(u,z)\leq c\; ( \mathscr{E}(u,z)+\alpha
    \|u\|^4_\mathfrak{H})\,,
  \]
  and that by assumption \eqref{hyp:states} we have that
$$
\int_{\mathscr{X}}\mathscr{E}_0(u,z) \,{\rm d}\mu(u,z) <\infty\;.
$$
\end{proof}

\begin{corollary}\label{cor:asymptcreat}
  Under the same assumptions of \cref{prop:segalfield}, we have that for any
  $\xi \in \mathfrak{H} $,
  \begin{equation}\label{eq:creatlimit}
    \begin{split}
      \lim_{n\to \infty} \langle \Psi_{\hslash_n},a^{\pm }_{\hslash_n} (\xi)\Psi_{\hslash_n}\rangle &= \int_{\mathscr{X} }  \langle \xi,\Lambda^\pm(u,z) \rangle_\mathfrak{H} \;\mathrm{d} \mu(u,z)\;,\\
      \lim_{n\to \infty} \langle \Psi_{\hslash_n},a^{\pm,*}_{\hslash_n} (\xi)\Psi_{\hslash_n}\rangle &= \int_{\mathscr{X} }  \langle \Lambda^\pm(u,z), \xi \rangle_\mathfrak{H} \;\mathrm{d} \mu(u,z)\; .
    \end{split}
  \end{equation}
\end{corollary}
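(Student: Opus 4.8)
The plan is to obtain the corollary as an immediate consequence of \cref{prop:segalfield}, by expressing the asymptotic creation and annihilation operators through the asymptotic field operators. Recall that, fibrewise by \eqref{eq:2}--\eqref{eq:3} and hence globally by \eqref{eq:14} and \eqref{eq:22}, one has
$$
a^{\pm}_{\hslash}(\xi) = \tfrac12\bigl(\phi^{\pm}_{\hslash}(\xi) + i\,\phi^{\pm}_{\hslash}(i\xi)\bigr)\,,\qquad a^{\pm,*}_{\hslash}(\xi) = \tfrac12\bigl(\phi^{\pm}_{\hslash}(\xi) - i\,\phi^{\pm}_{\hslash}(i\xi)\bigr)\,,
$$
valid on $\mathscr{D}(\phi^{\pm}_{\hslash}(\xi))\cap\mathscr{D}(\phi^{\pm}_{\hslash}(i\xi))$, a domain containing $\mathscr{D}(\sqrt S)$ and therefore each $\Psi_{\hslash_n}$ thanks to \eqref{hyp:states}. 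Evaluating this identity as a quadratic form on $\Psi_{\hslash_n}$ gives, for every $n$,
$$
\langle \Psi_{\hslash_n},a^{\pm}_{\hslash_n}(\xi)\Psi_{\hslash_n} \rangle = \tfrac12\langle \Psi_{\hslash_n},\phi^{\pm}_{\hslash_n}(\xi)\Psi_{\hslash_n} \rangle + \tfrac{i}{2}\langle \Psi_{\hslash_n},\phi^{\pm}_{\hslash_n}(i\xi)\Psi_{\hslash_n} \rangle\,,
$$
and the analogous identity with a minus sign for $a^{\pm,*}_{\hslash_n}(\xi)$.

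I would then apply \cref{prop:segalfield} to both test functions $\xi\in\mathfrak{H}$ and $i\xi\in\mathfrak{H}$, which yields $\lim_n\langle \Psi_{\hslash_n},\phi^{\pm}_{\hslash_n}(\xi)\Psi_{\hslash_n} \rangle = 2\int_{\mathscr{X}}\Re\langle\xi,\Lambda^{\pm}(u,z)\rangle_{\mathfrak{H}}\,\mathrm{d}\mu$ and the same formula with $\xi$ replaced by $i\xi$. Since the $L^2$-inner product is antilinear in its first slot, $\Re\langle i\xi,w\rangle_{\mathfrak{H}} = \Re\bigl(-i\langle\xi,w\rangle_{\mathfrak{H}}\bigr) = \Im\langle\xi,w\rangle_{\mathfrak{H}}$; feeding this into the displayed decomposition and passing to the limit $n\to\infty$ gives
$$
\lim_{n\to\infty}\langle \Psi_{\hslash_n},a^{\pm}_{\hslash_n}(\xi)\Psi_{\hslash_n} \rangle = \int_{\mathscr{X}}\Bigl(\Re\langle\xi,\Lambda^{\pm}(u,z)\rangle_{\mathfrak{H}} + i\,\Im\langle\xi,\Lambda^{\pm}(u,z)\rangle_{\mathfrak{H}}\Bigr)\mathrm{d}\mu(u,z) = \int_{\mathscr{X}}\langle\xi,\Lambda^{\pm}(u,z)\rangle_{\mathfrak{H}}\,\mathrm{d}\mu(u,z)\,,
$$
and the creation-operator case is identical, the minus sign turning the integrand into $\overline{\langle\xi,\Lambda^{\pm}(u,z)\rangle_{\mathfrak{H}}} = \langle\Lambda^{\pm}(u,z),\xi\rangle_{\mathfrak{H}}$. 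Finiteness of the right-hand sides, and hence well-posedness of the limiting (conjugate-)linear functionals of $\xi$, follows from $\|\Lambda^{\pm}(u,z)\|_{\mathfrak{H}}\le \tfrac1m\mathscr{E}_0(u,z)$ together with $\mu$-integrability of $\mathscr{E}_0$, both already recorded in the proof of \cref{prop:segalfield}.

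Since no analytic input is needed beyond \cref{prop:segalfield}, I do not anticipate a real obstacle here; the only points to handle with a little care are the justification that \eqref{eq:2}--\eqref{eq:3} may indeed be read as an identity of matrix elements on $\Psi_{\hslash_n}$ --- which is precisely the inclusion $\mathscr{D}(\sqrt S)\subset\mathscr{D}(\phi^{\pm}_{\hslash}(\xi))$ noted before \cref{prop:segalfield}, combined with \eqref{hyp:states} --- and the consistent use of the inner-product convention so that $\Re\langle i\xi,\cdot\rangle_{\mathfrak{H}}$ is correctly identified with $\Im\langle\xi,\cdot\rangle_{\mathfrak{H}}$ when reassembling the two limits.
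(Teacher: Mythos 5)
Your argument is correct: polarizing the asymptotic field via $a^{\pm}_{\hslash}(\xi)=\tfrac12(\phi^{\pm}_{\hslash}(\xi)+i\,\phi^{\pm}_{\hslash}(i\xi))$, applying \cref{prop:segalfield} to $\xi$ and $i\xi$, and using $\Re\langle i\xi,\cdot\rangle_{\mathfrak{H}}=\Im\langle\xi,\cdot\rangle_{\mathfrak{H}}$ (with the domain inclusion $\mathscr{D}(\sqrt S)\subset\mathscr{D}(\phi^{\pm}_{\hslash}(\xi))\cap\mathscr{D}(\phi^{\pm}_{\hslash}(i\xi))$ guaranteed by \eqref{hyp:states}) is exactly the intended deduction, which the paper leaves implicit. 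No gaps.
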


\subsection{Correlation functions and transition amplitudes}
\label{sec:class-limit-asympt}

In the quantum Yukawa model, the number of nucleons is invariant
(\emph{i.e.}, the corresponding operator $N_1$ strongly commutes with the
Hamiltonian $H_{\hslash}$). As a consequence, the asymptotic operators $W^{\pm}_{\hslash},
\phi^\pm_{\hslash}, a^{\pm,\natural }_{\hslash} $ all commute in a strong sense with $N_1$, and they
can be decomposed to a direct sum of operators on each fiber
$\mathscr{H}^{(n)}$, see \eqref{eq:1}, \eqref{eq:22} and
\eqref{eq:14}. Hence, it is natural to consider the following type of quantum
states.
\begin{hyp}
  Let $\big\{\Psi^{(n_{k})}_{\hslash_{k}}\big\}_{k\in \N}$ be a family of normalized
  vectors on $\mathscr{H}$ such that there exists $\delta>0$ such that
  \begin{equation}
    \label{hyp:secwf} \tag{A5}
    N_1 \Psi^{(n_k)}_{\hslash_k}= \hslash_k \;n_k\; \Psi^{(n_k)}_{\hslash_k}\,,\qquad \lim_{k\to\infty} \hslash_k=0\,, \qquad  \lim_{k\to\infty} \hslash_kn_k=\delta^2\,, \qquad \text{and} \qquad  \Psi^{(n_k)}_{\hslash_k}\rightharpoondown \mu.
  \end{equation}
\end{hyp}
Recall that the last statement $\Psi^{(n_k)}_{\hslash_k}\rightharpoondown \mu$ means that the sequence
$\big\{\Psi^{(n_{k})}_{\hslash_{k}}\big\}_{k\in \N}$ admits a unique Wigner measure
according to \eqref{wigner}.  On such families of vectors it is possible to
study the semiclassical behavior of the asymptotic $p$-point correlation
functions and to deduce relevant informations on asymptotic vacuum vectors,
bound states (see \cref{subsec:bds}) and ground states (see
\cref{sec:semicl-limit-ground}).  The asymptotic $p$-point correlation
functions are defined as follows.  Let $\eta\in \mathscr{S}(\mathbb{R}^d, \mathbb{R})\subset
L^2(\mathbb{R}^d)=\mathfrak{H}$, then by \eqref{eq:27}, or analogously \eqref{eq:4},
the asymptotic fields $\phi^{\pm}_{\hslash}(\cdot )$ can be seen as operator valued
distributions in momentum space. That is usually written as
\begin{equation*}
  \phi^{\pm}_{\hslash}(\eta)= \int_{\mathbb{R}^d}^{}\phi^{\pm}_{\hslash}(k)\eta(k)  \mathrm{d}k\; ,
\end{equation*}
with $\phi^{\pm}_{\hslash}(k)$ the aforementioned operator-valued distributions. Taking
the Fourier transform on all $L^2$-wavefunctions, a unitary transformation
$\mathcal{F}$ is induced on the meson field's Fock space. Using such a
unitary transformation, it is possible to define the fields in position space
as operator valued distributions:
\begin{equation*}
  \varphi^{\pm}_{\hslash}(h)= \int_{\mathbb{R}^d}^{}\varphi^{\pm}_{\hslash}(x)h(x)  \mathrm{d}x\; ,
\end{equation*}
where
\begin{equation}
  \label{eq:18}
  \varphi^{\pm}_{\hslash}(\check{\eta})= \mathcal{F}^{-1}\phi^{\pm}_{\hslash}(\eta) \mathcal{F}\; .
\end{equation}
We have here denoted by $\check{\eta}$ the inverse Fourier transform of $\eta$.

Given a vector $\Psi_{\hslash}^{(n)}\in \mathscr{H}^{(n)}$ (in the field's
momentum Fock space), the $p$-point asymptotic correlation functions for the
meson field are distributions in $\mathscr{S}'(\mathbb{R}^{dp})$ usually
defined as
\begin{equation}
  \label{eq:15}
  \langle\varphi^{\pm}_{\hslash}(x_1)\dotsm\varphi^{\pm}_{\hslash}(x_p)\rangle_{\Psi^{(n)}_{\hslash}}=\langle \mathcal{F}^{-1}\Psi^{(n)}_{\hslash}  , \varphi^{\pm}_{\hslash}(x_1)\dotsm\varphi^{\pm}_{\hslash}(x_p) \mathcal{F}^{-1}\Psi^{(n)}_{\hslash}\rangle_{}\; .
\end{equation}
We remark that in \eqref{eq:15} the signs are either all $+$ or all
$-$. Using again \eqref{eq:4}, it follows that the distribution
$\langle\varphi^{\pm}_{\hslash}(x_1)\dotsm\varphi^{\pm}_{\hslash}(x_p)\rangle_{\Psi^{(n)}_{\hslash}}$ is well defined for all
$\Psi^{(n)}_{\hslash}\in \mathscr{Q}((H_{\hslash}^{(n)})^{p})$, and it is a square integrable
function:
\begin{equation}
  \label{eq:16}
  \langle\varphi^{\pm}_{\hslash}(x_1)\dotsm\varphi^{\pm}_{\hslash}(x_p)\rangle_{\Psi^{(n)}_{\hslash}}\in L^2_{x_1,\dotsc,x_p}(\mathbb{R}^{dp})\; .
\end{equation}
We can characterize explicitly, using the tools introduced above, the leading
order (\emph{i.e.}, the $\hslash^0$ contribution) of the asymptotic correlation
functions for the meson field.

\begin{proposition}\label{prop:asymptcreat}
  Let $\big\{\Psi^{(n_k)}_{\hslash_k}\big\}_{k\in\mathbb{N}}$ be family of
  normalized vectors satisfying \eqref{hyp:secwf}. Assume that there exist
  $p\geq 1$ and $c>0$ such that:
  \begin{equation}
    \label{eq:hyppsin}
    \forall k\in\mathbb{N}, \qquad \langle \Psi^{(n_k)}_{\hslash_k}, \,(H_{\hslash_k}^{(n_k)}+b)^{p}\;\Psi^{(n_k)}_{\hslash_k}\rangle\leq c\; .
  \end{equation}
  Then the semiclassical measure $\mu$ in \eqref{hyp:secwf} is concentrated
  on the set $\{(u,z)\in \mathscr{X}\,,\, \lVert u
  \rVert_{\mathfrak{H}}^{}=\delta\}\subset \mathscr{X} $. Moreover, for all
  integers $\underline{p}\in\N$, $0<\underline{p}\leq 2p-1$ and all
  $\xi_1,\dotsc,\xi_{\underline{p}}\in \mathfrak{H}$,
  \begin{equation}
    \label{eq:26}
    \lim_{k \rightarrow \infty } \bigl\langle \Psi_{\hslash_k}^{(n_k)},\prod_{j=1}^{\underline{p}} \phi^{\pm}_{\hslash_k} (\xi_j)\Psi_{\hslash_k}^{(n_k)} \bigr\rangle=\int_{\mathscr{X}}^{}\prod_{j=1}^{\underline{p}}\Bigl(\langle \xi_j  , \Lambda^{\pm}(u,z) \rangle_{\mathfrak{H}}+ \langle \Lambda^{\pm}(u,z)  , \xi_j \rangle_{\mathfrak{H}}\Bigr)  \mathrm{d}\mu(u,z)\; ,
  \end{equation}
  and
  \begin{equation}
    \label{eq:17}
    \langle\varphi^{\pm}_{\hslash_k}(x_1)\dotsm\varphi^{\pm}_{\hslash_k}(x_{\underline{p}})\rangle_{\Psi^{(n_k)}_{\hslash_k}}= \int_{\mathscr{X}}^{}\prod_{j=1}^{\underline{p}} \Bigl(\check{\Lambda}^{\pm}(u,z)+\bar{\hat{\Lambda}}^{\pm}(u,z)\Bigr)(x_j)  \,\mathrm{d}\mu(u,z) + o_{\hslash_k}(1)\; ,
  \end{equation}
  where $o_{\hslash_k}(1)$ is converging to zero in the weak
  $L^2_{x_1,\dotsc,x_{\underline{p}}}(\mathbb{R}^{d \underline{p}})$
  topology.
\end{proposition}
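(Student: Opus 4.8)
The plan is to establish, in order: (i) concentration of $\mu$ on the mass shell $\{\lVert u\rVert_{\mathfrak H}=\delta\}$; (ii) the limit \eqref{eq:26} for products of asymptotic meson fields; and (iii) the weak-$L^2$ statement \eqref{eq:17}, deduced from (ii) by duality. First note that \eqref{eq:hyppsin} (with $p\ge 1$) implies \eqref{hyp:states}: for $k$ large, $n_k\hslash_k$ lies in a fixed compact subset of $(0,\infty)$, so \cref{lem:formcontrol} and the boundedness of $N_1=\hslash_k n_k$ give $S\le C\,(H^{(n_k)}_{\hslash_k}+b)\le C'\,(H^{(n_k)}_{\hslash_k}+b)^{p}$; hence \cref{thm:1}, \cref{prop:segalfield} and \cref{cor:asymptcreat} are available here. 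For (i), since $\Psi^{(n_k)}_{\hslash_k}$ is an exact eigenvector of $N_1$ with eigenvalue $\hslash_k n_k\to\delta^{2}$, every moment $\langle\Psi^{(n_k)}_{\hslash_k},N_1^{m}\Psi^{(n_k)}_{\hslash_k}\rangle=(\hslash_k n_k)^{m}$ is bounded in $k$ and converges to $\delta^{2m}$; since $N_1$ is the Wick quantization of $\lVert u\rVert_{\mathfrak H}^{2}$ and $\Psi^{(n_k)}_{\hslash_k}\rightharpoondown\mu$, the moment-convergence analysis of \citep[][Section~6]{ammari2008ahp} gives $\int_{\mathscr Z}\lVert u\rVert_{\mathfrak H}^{2m}\,\mathrm d\mu=\delta^{2m}$ for all $m\in\N$. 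The image of $\mu$ under $(u,z)\mapsto\lVert u\rVert_{\mathfrak H}^{2}$ thus has the (determinate) moment sequence of the Dirac mass at $\delta^{2}$, so it equals that mass; i.e.\ $\lVert u\rVert_{\mathfrak H}=\delta$ for $\mu$-a.e.\ $(u,z)$.

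For (ii), I first take $\xi_1,\dots,\xi_{\underline{p}}\in\mathscr{C}_0^{\infty}(\mathbb R^d\setminus\{0\})$. Writing each asymptotic field on the common core $\mathscr{Q}((H^{(n)}_{\hslash})^{p})$ as $\phi^{\pm}_{\hslash}(\xi_j)=\slim_{t\to\pm\infty}e^{i\frac{t}{\hslash}H_{\hslash}}\phi_{\hslash}((\xi_j)_t)e^{-i\frac{t}{\hslash}H_{\hslash}}$ and differentiating the product of the $\underline{p}$ Heisenberg-rotated fields as a quadratic form, each $t$-derivative inserts a factor $2\,\mathrm d\Gamma^{(1)}_{\hslash}(\Im\langle(\xi_l)_t,\lambda_{(\cdot)}\rangle_{\mathfrak H})$ between the remaining fields; moving it to one end generates finitely many terms, each a product of $\phi_{\hslash}(\cdot)$'s and one $\mathrm d\Gamma^{(1)}_{\hslash}(\cdot)$-operator, the commutators costing one extra inverse power of $(H^{(n)}_{\hslash}+b)$. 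Using \cref{prop:timedecay} for the $\mathrm d\Gamma^{(1)}$-factor, \cref{lem:controltime}, and the uniform bound $\lvert\langle\Psi^{(n)}_{\hslash},\prod_{j}\phi^{\pm}_{\hslash}(\xi_j)\Psi^{(n)}_{\hslash}\rangle\rvert\le C\prod_j\lVert\xi_j\rVert_{\mathfrak H}\,\langle\Psi^{(n)}_{\hslash},(H^{(n)}_{\hslash}+b)^{p}\Psi^{(n)}_{\hslash}\rangle$ — obtained by splitting the $\underline{p}\le 2p-1$ factors into two groups of sizes $\lfloor\underline{p}/2\rfloor,\lceil\underline{p}/2\rceil\le p$ and iterating \eqref{eq:4} while absorbing commutators via \cref{prop:enrgynumberest} — I obtain
\[
\Bigl\langle\Psi^{(n_k)}_{\hslash_k},\prod_{j=1}^{\underline{p}}\phi^{\pm}_{\hslash_k}(\xi_j)\Psi^{(n_k)}_{\hslash_k}\Bigr\rangle=\Bigl\langle\Psi^{(n_k)}_{\hslash_k},\prod_{j=1}^{\underline{p}}\phi_{\hslash_k}(\xi_j)\Psi^{(n_k)}_{\hslash_k}\Bigr\rangle+\int_0^{\pm\infty}R_{\hslash_k}(\tau)\,\mathrm d\tau\,,
\]
with $\lvert R_{\hslash_k}(\tau)\rvert\le c\,\langle\tau\rangle^{-1-\nu}$ uniformly in $k$. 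Letting $k\to\infty$: by \citep[][Theorem~1.1]{ammari2014jsp}, $\Psi^{(n_k)}_{\hslash_k}(\tau)\rightharpoondown\mu_\tau=(\Phi_\tau)_{*}\mu$, and the finite-time semiclassical convergence of correlation functions \citep[cf.][Lemma~3.15 and Proposition~4.10]{ammari2014jsp} identifies the limit of each finite-time quantity with the integral of its classical symbol against $\mu_\tau$; the $\hslash$-uniform decay bound justifies, by dominated convergence, exchanging $\lim_k$ with $\int_0^{\pm\infty}\mathrm d\tau$. Collecting terms, using
\[
\frac{\mathrm d}{\mathrm d\tau}\prod_{j=1}^{\underline{p}}2\Re\langle(\xi_j)_\tau,z(\tau)\rangle_{\mathfrak H}=\sum_{l=1}^{\underline{p}}2\Im\bigl\langle(\xi_l)_\tau,\omega^{-1/2}\chi\,\widehat{\lvert u(\tau)\rvert^{2}}\bigr\rangle_{\mathfrak H}\prod_{j\ne l}2\Re\langle(\xi_j)_\tau,z(\tau)\rangle_{\mathfrak H}
\]
and the representation \eqref{eq:waverepres}, the limit equals $\int_{\mathscr X}\prod_{j=1}^{\underline{p}}2\Re\langle\xi_j,\Lambda^{\pm}(u,z)\rangle_{\mathfrak H}\,\mathrm d\mu$, which is the right-hand side of \eqref{eq:26}. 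The extension to arbitrary $\xi_j\in\mathfrak H$ is by density, using on the quantum side the bound $\lvert\langle\Psi^{(n)}_{\hslash},(\prod_j\phi^{\pm}_{\hslash}(\xi_j)-\prod_j\phi^{\pm}_{\hslash}(\eta_j))\Psi^{(n)}_{\hslash}\rangle\rvert\le C_R\sum_j\lVert\xi_j-\eta_j\rVert_{\mathfrak H}\,\langle\Psi^{(n)}_{\hslash},(H^{(n)}_{\hslash}+b)^{p}\Psi^{(n)}_{\hslash}\rangle$ for $\lVert\xi_j\rVert_{\mathfrak H},\lVert\eta_j\rVert_{\mathfrak H}\le R$ (telescoping plus the split bound) together with \eqref{eq:hyppsin}, and on the classical side the energy estimate $\lVert\Lambda^{\pm}(u,z)\rVert_{\mathfrak H}^{2}\le c\,(\mathscr{E}(u,z)+\alpha\lVert u\rVert_{\mathfrak H}^{4})$ (cf.\ the proof of \cref{cor:2} and \eqref{eq.estenrg2}) together with $\int_{\mathscr X}(\mathscr{E}(u,z)+b)^{p}\,\mathrm d\mu<\infty$ (from \eqref{eq:hyppsin} and lower semicontinuity), which, as $\underline{p}\le 2p$, make $\prod_j\lvert\langle\xi_j,\Lambda^{\pm}(u,z)\rangle_{\mathfrak H}\rvert$ $\mu$-integrable.

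For (iii), the same split bound shows that the $\underline{p}$-linear form $(\eta_1,\dots,\eta_{\underline{p}})\mapsto\langle\Psi^{(n_k)}_{\hslash_k},\prod_{j}\phi^{\pm}_{\hslash_k}(\eta_j)\Psi^{(n_k)}_{\hslash_k}\rangle$ is bounded on $\mathfrak H^{\underline{p}}$ with norm $\le C\langle\Psi^{(n_k)}_{\hslash_k},(H^{(n_k)}_{\hslash_k}+b)^{p}\Psi^{(n_k)}_{\hslash_k}\rangle\le Cc$, hence represented by an $L^2_{k_1,\dots,k_{\underline{p}}}(\mathbb R^{d\underline{p}})$-kernel of norm $\le Cc$; via $\mathcal F$ this yields a uniform (in $k$) $L^2$-bound on $\langle\varphi^{\pm}_{\hslash_k}(x_1)\dotsm\varphi^{\pm}_{\hslash_k}(x_{\underline{p}})\rangle_{\Psi^{(n_k)}_{\hslash_k}}$, so it suffices to identify its weak limit against the dense family of tensor products $h_1\otimes\dots\otimes h_{\underline{p}}$ with $h_j\in\mathscr{S}(\mathbb R^d)$. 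For such a test function, \eqref{eq:18} and unitarity of $\mathcal F$ give $\langle h_1\otimes\dots\otimes h_{\underline{p}},\langle\varphi^{\pm}_{\hslash_k}(x_1)\dotsm\varphi^{\pm}_{\hslash_k}(x_{\underline{p}})\rangle_{\Psi^{(n_k)}_{\hslash_k}}\rangle=\bigl\langle\Psi^{(n_k)}_{\hslash_k},\prod_{j=1}^{\underline{p}}\phi^{\pm}_{\hslash_k}(\widehat{h_j})\Psi^{(n_k)}_{\hslash_k}\bigr\rangle$, which by \eqref{eq:26} converges to $\int_{\mathscr X}\prod_{j}\bigl(\langle\widehat{h_j},\Lambda^{\pm}(u,z)\rangle_{\mathfrak H}+\langle\Lambda^{\pm}(u,z),\widehat{h_j}\rangle_{\mathfrak H}\bigr)\mathrm d\mu$; an elementary Parseval computation rewrites this as $\langle h_1\otimes\dots\otimes h_{\underline{p}},\int_{\mathscr X}\prod_{j=1}^{\underline{p}}\bigl(\check{\Lambda}^{\pm}(u,z)+\bar{\hat{\Lambda}}^{\pm}(u,z)\bigr)(x_j)\,\mathrm d\mu\rangle$, which is \eqref{eq:17} (the $o_{\hslash_k}(1)$ being meant in the weak $L^2_{x_1,\dots,x_{\underline{p}}}$ topology).

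The main obstacle is the multi-field integral representation in step (ii): one has to commute each factor $\mathrm d\Gamma^{(1)}_{\hslash}(\Im\langle(\xi_l)_\tau,\lambda_{(\cdot)}\rangle_{\mathfrak H})$ through the remaining asymptotic fields while tracking \emph{every} power of $(H^{(n)}_{\hslash}+b)$, so that the $\hslash$- and $n$-uniform decay of \cref{prop:timedecay} is preserved and all commutator terms stay dominated by \eqref{eq:hyppsin}. It is precisely this bookkeeping — one extra inverse power of $(H^{(n)}_{\hslash}+b)$ per commutator, against the budget of $p$ powers supplied by \eqref{eq:hyppsin} after splitting the $\underline{p}$ fields into two groups — that forces the admissible range $\underline{p}\le 2p-1$.
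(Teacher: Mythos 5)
Your strategy for the core of the statement is essentially the paper's own: concentration of $\mu$ on $\{\lVert u\rVert_{\mathfrak H}=\delta\}$ is obtained from the exact $N_1$-eigenvalue relation (the paper simply cites the same circle of results, e.g.\ \citep{ammari2008ahp,ammari2014jsp}); and for \eqref{eq:26} you use the same Heisenberg-picture integral formula, whose integrand is $\tfrac{i}{\hslash}\langle\Psi(\tau),[H^{I,(n)}_{\hslash},\prod_j\phi_{\hslash}(\xi_{j,\tau})]\Psi(\tau)\rangle$, the same uniform $\langle\tau\rangle^{-1-\nu}$ bound via \cref{lem:controldecay}, \cref{prop:enrgynumberest} and \eqref{eq:hyppsin}, dominated convergence combined with the finite-time semiclassical results of \citep{ammari2014jsp}, recombination through the representation \eqref{eq:waverepres}, and a final density argument; your accounting leading to $\underline{p}\leq 2p-1$ is the paper's. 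One imprecision is harmless: the inserted factor $\mathrm d\Gamma^{(1)}_{\hslash}\bigl(\Im\langle(\xi_l)_\tau,\lambda_{(\cdot)}\rangle_{\mathfrak H}\bigr)$ is multiplication in the particle variables and therefore commutes \emph{exactly} with the remaining meson fields, so ``moving it to one end'' produces no commutator terms and costs no powers of $(H^{(n)}_{\hslash}+b)$; the real budget is in powers of $N_2+1$ needed to dominate the $\underline{p}-1$ remaining fields, which is what you correctly identify in your closing paragraph.

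There is, however, one genuinely incorrect step in your part (iii): the claim that the uniformly bounded $\underline{p}$-linear form $(\eta_1,\dotsc,\eta_{\underline{p}})\mapsto\langle\Psi^{(n_k)}_{\hslash_k},\prod_j\phi^{\pm}_{\hslash_k}(\eta_j)\Psi^{(n_k)}_{\hslash_k}\rangle$ is therefore ``represented by an $L^2$ kernel of norm $\leq Cc$''. Boundedness of a multilinear form on $\mathfrak H^{\underline{p}}$ does not produce a square-integrable kernel: already for $\underline{p}=2$ a bounded form corresponds to a bounded operator, not a Hilbert--Schmidt one (the form $(\eta_1,\eta_2)\mapsto\langle\eta_1,\eta_2\rangle_{\mathfrak H}$ has kernel $\delta(k_1-k_2)\notin L^2(\mathbb R^{2d})$). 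Since weak-$L^2$ convergence of the sequence in \eqref{eq:17} in particular forces a uniform $L^2$ bound, this bound must be supplied, but by a different argument: expand $\phi^{\pm}_{\hslash}$ into $a^{\pm}_{\hslash},a^{\pm,*}_{\hslash}$, normal order, and control quantities like $\int\lVert a^{\pm}_{\hslash}(k_1)\dotsm a^{\pm}_{\hslash}(k_m)\Psi\rVert^2\,\mathrm dk$ using the \eqref{eq:4}-type and number estimates underlying \eqref{eq:16}, made uniform in $\hslash$ through \cref{prop:enrgynumberest} and \eqref{eq:hyppsin}. With that substitution your duality computation against $h_1\otimes\dotsm\otimes h_{\underline{p}}$ coincides with the paper's reduction of \eqref{eq:17} to \eqref{eq:26}. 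A related caution: your a priori bound on products of \emph{asymptotic} fields ``by iterating \eqref{eq:4}'' implicitly commutes powers of $(H^{(n)}_{\hslash}+b)$ through asymptotic fields, which \cref{prop:enrgynumberest} does not directly provide; it is cleaner to perform the splitting and the density estimate at finite time, where those estimates apply, and pass to the limit afterwards, which is how the paper's proof is organized.
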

\begin{proof}
  The concentration property of the measure $\mu$ follows from well-known
  semiclassical considerations
  \citep[see][]{ammari2008ahp,ammari2014jsp,ammari2016cms}. Using
  \eqref{eq:18}, it is straightforward to see that, in order to prove
  \eqref{eq:17}, it suffices to prove the convergence \eqref{eq:26}.

  Let us omit, for convenience, the explicit $k$-dependence of $n_k$ and
  $\hslash_k$, and denote
  \begin{equation*}
    \Psi(t)=e^{-i\frac{t}{\hslash}H_{\hslash}^{(n)}}\Psi_{\hslash}^{(n)}\; .
  \end{equation*}
  Let us also remark that
  \begin{eqnarray}
    \partial_t \prod_{j=1}^{\underline{p}} \phi_\hslash(\xi_{j,t}) = -\frac{i}{\hslash} \bigl[ H_\hslash^{0,(n)} , \prod_{j=1}^{\underline{p}} \phi_\hslash(\xi_{j,t})\bigr]\;.
  \end{eqnarray}
  Hence
  \begin{equation}
    \label{eq:integformpiphi}
    \begin{split}
      \bigl\langle \Psi(t),\prod_{j=1}^{\underline{p}} \phi_{\hslash}(\xi_{j,t})\Psi (t)\bigr\rangle = \bigl\langle \Psi,\prod_{j=1}^{\underline{p}} \phi_{\hslash}(\xi_{j})\Psi \bigr\rangle+\frac{i}{\hslash}
      \int_0^t \bigl\langle \Psi(\tau),\bigl[ H_{\hslash}^{I,(n)},  \prod_{j=1}^{\underline{p}} \phi_{\hslash}(\xi_{j,\tau})\bigr] \Psi(\tau)\bigr\rangle \mathrm{d}\tau  .
    \end{split}
  \end{equation}
  The commutator yields
  \begin{equation*}
    \bigl[ H_{\hslash}^{I,(n)},  \prod_{j=1}^{\underline{p}} \phi_{\hslash}(\xi_{j,\tau})\bigr]=\hslash \sum_{\ell=1}^n \bigl[  \phi_{\hslash}(\lambda_{x_\ell}) ,  \prod_{j=1}^{\underline{p}} \phi_{\hslash}(\xi_{j,\tau})\bigl]=i \hslash^2 \sum_{\ell=1}^n \sum_{j=1}^{\underline{p}} \Im (\langle\lambda_{x_\ell}, \xi_{j,\tau}\rangle) \prod_{r\neq j}  \phi_{\hslash}(\xi_{r,\tau})\;.
  \end{equation*}
  In particular, \cref{lem:controldecay,prop:enrgynumberest} and
  \eqref{eq:hyppsin} yield
  \begin{eqnarray}
    \label{eq.decay.p}
    \mathcal{A}(\tau)=    \Bigl\lvert \frac{i}{\hslash}\bigl\langle \Psi(\tau),\bigl[ H_\hslash^{I,(n)},\prod_{j=1}^{\underline{p}} \phi_\hslash(\xi_{j,\tau})\bigr]\Psi(\tau)\bigl\rangle\Bigr\rvert\leq c \langle \tau\rangle^{-1-\nu}\; .
  \end{eqnarray}
  Indeed, one has
  \begin{eqnarray*}
    \mathcal{A}(\tau)& \leq & \hslash \sum_{\ell=1}^n \sum_{j=1}^{\underline{p}} \Bigl \lvert \bigl\langle \Psi(\tau),  \Im (\langle\lambda_{x_\ell}, \xi_{j,\tau}\rangle) \prod_{r\neq j}  \phi_{\hslash}(\xi_{r,\tau})\Psi(\tau)\bigl\rangle\Bigr\rvert
    \\
    & \leq & c \, \hslash \sum_{\ell=1}^n \sum_{j=1}^{\underline{p}}  \| (H_{\hslash}^{(n)}+b)^{-\frac{p}{2}} \Im (\langle\lambda_{x_\ell}, \xi_{j,\tau}\rangle) \prod_{r\neq j}  \phi_{\hslash}(\xi_{r,\tau}) (H_{\hslash}^{(n)}+b)^{-\frac{p}{2}} \|\,.
  \end{eqnarray*}
  Moreover, using an interpolation argument, for instance Hadamard's three
  lines theorem, one deduces from Proposition \ref{prop:enrgynumberest} (iii)
  the inequality:
  \begin{align*}
    &\bigl\|(H_\hslash^{(n)}+b)^{-\frac{p}{2}} (H_\hslash^{0,(n)}+1)^{\frac{1}{2}}  (N_2+1)^{\frac{p-1}{2}} \bigl\|  \leq c\,.
  \end{align*}
  Hence by standard number estimates one gets
  \begin{align*}
    &  \| (H_{\hslash}^{(n)}+b)^{-\frac{p}{2}} \Im (\langle\lambda_{x_\ell}, \xi_{j,\tau}\rangle) \prod_{r\neq j}  \phi_{\hslash}(\xi_{r,\tau}) (H_{\hslash}^{(n)}+b)^{-\frac{p}{2}} \|   \\ & \leq  C
    \| (H_\hslash^{0,(n)}+1)^{-1/2} (N_2+1)^{-\frac{p-1}{2}} \Im (\langle\lambda_{x_\ell}, \xi_{j,\tau}\rangle) \prod_{r\neq j}  \phi_{\hslash}(\xi_{r,\tau}) (H_\hslash^{0,(n)}+1)^{-1/2} (N_2+1)^{-\frac{p-1}{2}}   \|\;,
  \end{align*}
  and finally by Lemma \ref{lem:controldecay}, one
  concludes (for possibly different constants $C>0$):
  \begin{align*}
    &  \| (H_{\hslash}^{(n)}+b)^{-\frac{p}{2}} \Im (\langle\lambda_{x_\ell}, \xi_{j,\tau}\rangle) \prod_{r\neq j}  \phi_{\hslash}(\xi_{r,\tau}) (H_{\hslash}^{(n)}+b)^{-\frac{p}{2}} \|
    \\ & \leq  C  \left\|\langle x_\ell\rangle^{-1-\nu}  \left\langle\lambda_{x_\ell} ,  \xi_{j,\tau}\right\rangle\right\|_{L^\infty}  \| (N_2+1)^{-\frac{p-1}{2}}  \prod_{r\neq j}  \phi_{\hslash}(\xi_{r,\tau}) (N_2+1)^{-\frac{p-1}{2}}   \|
    \\ &
    \leq  \frac{C}{\langle \tau\rangle^{1+\nu}}\,.
  \end{align*}

  Thus, it is possible to exchange the limits $t\to \pm\infty$ and $k\to
  \infty$ in \eqref{eq:integformpiphi} for any
  $\xi_1,\dotsc,\xi_{\underline{p}}\in
  \mathscr{C}_0^{\infty}(\mathbb{R}^d\smallsetminus \{0\})$. This leads to
  \begin{equation*}
    \begin{split}
      \lim_{k \rightarrow 0 } \bigl\langle \Psi(t),\prod_{j=1}^{\underline{p}} \phi^{\pm}_{\hslash_k} (\xi_j)\Psi(t)\bigr\rangle = 2^{\underline{p}} \int_{\mathscr{X}} \prod_{j=1}^{\underline{p}} \Re  \langle \xi_j,z \rangle_{\mathfrak{H}} \mathrm{d} \mu(u,z) +
      2^{\underline{p}} \sum_{j=1}^{\underline{p}}   \int_0^{\pm \infty} \int_{\mathscr{X} }  \Im \bigl\langle \xi_{j,\tau}  ,  \omega^{-1/2}\chi \,\widehat{\lvert u  \rvert^2}\bigr\rangle_{\mathfrak{H}}\\
      \prod_{r\neq j}^{\underline{p}} \Re  \langle \xi_{r,\tau},z \rangle_{\mathfrak{H}}\mathrm{d} \mu_\tau(u,z)\mathrm{d} \tau \,.
    \end{split}
  \end{equation*}
  Recalling the property \eqref{eq:waverepres} of $\Lambda^\pm$, we have that
  \begin{equation*}
    \prod_{j=1}^{\underline{p}} \Re\langle \xi_j, \Lambda^\pm(u,z)\rangle_{\mathfrak{H}} =   \prod_{j=1}^{\underline{p}} \Re  \langle \xi_j,z\rangle_{\mathfrak{H}} + \sum_{j=1}^{\underline{p}}\int_0^{\pm \infty}\mspace{-10mu}\Im \bigl\langle \xi_{j,\tau}  ,  \omega^{-1/2}\chi \,\widehat{\lvert u(\tau)  \rvert^2}\bigr\rangle_{\mathfrak{H}}\prod_{k\neq j}^{\underline{p}} \Re  \langle \xi_{j,\tau},z(\tau)\rangle_{\mathfrak{H}}\mathrm{d}\tau \; ,
  \end{equation*}
  where $(u(\tau),z(\tau))=\Phi_\tau(u,z)$ is the solution at time $\tau$ of
  the S-KG equations \eqref{eq:skg} with initial datum $(u,z)$. A density
  argument concludes the proof.
\end{proof}

As a consequence of the above result, one obtains the $\hslash$-limit of the
transition amplitudes.
\begin{corollary}
  \label{cor:1}
  Under the same assumptions as in \cref{prop:asymptcreat}, for all integers
  $\underline{p}\in\N$, $0<\underline{p}\leq 2p-1$ and for all
  $\xi_1,\dotsc,\xi_{\underline{p}}\in \mathfrak{H}$:
  \begin{equation*}
    \begin{split}
      \lim_{k\to \infty}\langle \Psi^{(n_k)}_{\hslash_k}  , a^{\pm,\natural}_{\hslash_k}(\xi_1)\dotsm a^{\pm,\natural}_{\hslash_k}(\xi_{\underline{p}})\Psi^{(n_k)}_{\hslash_k} \rangle_{}&=\int_{\mathscr{X}}^{}\prod_{j=1}^{\underline{p}}\langle \xi_j  , \Lambda^{\pm}(u,z) \rangle_{\mathfrak{H}}^{\natural}\,  \mathrm{d}\mu(u,z)\; ,
    \end{split}
  \end{equation*}
  where the signs are either all $+$ or all $-$, $a^{\pm,\natural}$ is either
  $a^{\pm}$ or $a^{\pm,*}$ and respectively $\langle \xi_j ,
  \Lambda^{\pm}(\cdot) \rangle_{\mathfrak{H}}^{\natural}$ is $\langle \xi_j ,
  \Lambda^{\pm}(\cdot) \rangle_{\mathfrak{H}}$ or $\overline{\langle \xi_j ,
    \Lambda^{\pm}(\cdot) \rangle}_{\mathfrak{H}}$.
\end{corollary}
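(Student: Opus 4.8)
The plan is to deduce the corollary directly from \cref{prop:asymptcreat} by rewriting the asymptotic creation and annihilation operators in terms of the asymptotic fields. By the defining relations \eqref{eq:2}--\eqref{eq:3}, each factor $a^{\pm,\natural}_{\hslash_k}(\xi_j)$ is a fixed complex linear combination of $\phi^{\pm}_{\hslash_k}(\xi_j)$ and $\phi^{\pm}_{\hslash_k}(i\xi_j)$, namely $a^{\pm}_{\hslash_k}(\xi_j)=\tfrac12\bigl(\phi^{\pm}_{\hslash_k}(\xi_j)+i\,\phi^{\pm}_{\hslash_k}(i\xi_j)\bigr)$ and $a^{\pm,*}_{\hslash_k}(\xi_j)=\tfrac12\bigl(\phi^{\pm}_{\hslash_k}(\xi_j)-i\,\phi^{\pm}_{\hslash_k}(i\xi_j)\bigr)$. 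Expanding the ordered product $\prod_{j=1}^{\underline p}a^{\pm,\natural}_{\hslash_k}(\xi_j)$ then yields a finite sum of $2^{\underline p}$ terms, each a fixed scalar multiple of an ordered product $\prod_{j=1}^{\underline p}\phi^{\pm}_{\hslash_k}(\zeta_j)$ with $\zeta_j\in\{\xi_j,i\xi_j\}$ and $\|\zeta_j\|_{\mathfrak{H}}=\|\xi_j\|_{\mathfrak{H}}$. Since each such product is precisely of the type handled in \cref{prop:asymptcreat} (the energy hypothesis \eqref{eq:hyppsin} controlling it through \eqref{eq:4} exactly as there), the expansion is a legitimate identity when applied to the vector $\Psi^{(n_k)}_{\hslash_k}$.

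Next I would apply the limit \eqref{eq:26} of \cref{prop:asymptcreat} to each of the $2^{\underline p}$ terms, which is admissible because $0<\underline p\le 2p-1$; using $\langle\zeta,\Lambda^{\pm}\rangle_{\mathfrak{H}}+\langle\Lambda^{\pm},\zeta\rangle_{\mathfrak{H}}=2\Re\langle\zeta,\Lambda^{\pm}\rangle_{\mathfrak{H}}$ this gives, for every choice of $(\zeta_j)_j$,
\[
\lim_{k\to\infty}\bigl\langle\Psi^{(n_k)}_{\hslash_k},\,\textstyle\prod_{j=1}^{\underline p}\phi^{\pm}_{\hslash_k}(\zeta_j)\,\Psi^{(n_k)}_{\hslash_k}\bigr\rangle=\int_{\mathscr{X}}\prod_{j=1}^{\underline p}2\Re\langle\zeta_j,\Lambda^{\pm}(u,z)\rangle_{\mathfrak{H}}\,\mathrm{d}\mu(u,z)\,.
\]
As the sum over the $2^{\underline p}$ terms is finite it commutes with $\lim_{k\to\infty}$ and with $\int_{\mathscr{X}}\cdot\,\mathrm{d}\mu$, so it remains to recombine the classical symbols. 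Index by index, distributing the linear combination and using $\Re\langle i\xi,w\rangle_{\mathfrak{H}}=\Im\langle\xi,w\rangle_{\mathfrak{H}}$ (for the sesquilinear convention of the paper) gives
\[
\tfrac12\bigl(2\Re\langle\xi_j,\Lambda^{\pm}\rangle_{\mathfrak{H}}\pm i\,2\Re\langle i\xi_j,\Lambda^{\pm}\rangle_{\mathfrak{H}}\bigr)=\Re\langle\xi_j,\Lambda^{\pm}\rangle_{\mathfrak{H}}\pm i\,\Im\langle\xi_j,\Lambda^{\pm}\rangle_{\mathfrak{H}}\,,
\]
which is $\langle\xi_j,\Lambda^{\pm}(u,z)\rangle_{\mathfrak{H}}$ for $a^{\pm}$ and its complex conjugate $\langle\Lambda^{\pm}(u,z),\xi_j\rangle_{\mathfrak{H}}$ for $a^{\pm,*}$. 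Multiplying these $j$-by-$j$ identities --- the ordering of the factors being immaterial at the level of the scalar symbols, consistently with the permutation symmetry already visible in \eqref{eq:26} --- reproduces exactly $\prod_{j=1}^{\underline p}\langle\xi_j,\Lambda^{\pm}(u,z)\rangle_{\mathfrak{H}}^{\natural}$ under the integral sign, which is the asserted formula.

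I do not expect a genuine obstacle, this being a corollary: the whole content is the elementary algebraic recombination just described, together with the fact that a finite linear combination passes through the limit and the integral. The only points worth a line of justification are (i) the domain check, which is immediate since $a^{\pm,\natural}_{\hslash_k}(\xi_j)$ inherits from \eqref{eq:4} the same relative bound with respect to $(H_{\hslash_k}^{(n_k)}+i)^{1/2}$ as $\phi^{\pm}_{\hslash_k}(\xi_j)$, so that products of $\underline p\le 2p-1$ of them act on $\Psi^{(n_k)}_{\hslash_k}$ exactly under hypothesis \eqref{eq:hyppsin} as in \cref{prop:asymptcreat}; and (ii) fixing the sign in $\Re\langle i\xi,w\rangle_{\mathfrak{H}}=\Im\langle\xi,w\rangle_{\mathfrak{H}}$, so that $a^{\pm}$ produces $\langle\xi_j,\Lambda^{\pm}\rangle_{\mathfrak{H}}$ and $a^{\pm,*}$ its conjugate, in agreement with the $\underline p=1$ instance already recorded in \cref{cor:asymptcreat}.
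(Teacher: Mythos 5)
Your proof is correct and is exactly the argument the paper leaves implicit: the corollary is stated as a direct consequence of \cref{prop:asymptcreat}, obtained by expanding each $a^{\pm,\natural}_{\hslash_k}(\xi_j)$ via \eqref{eq:2}--\eqref{eq:3} into $\phi^{\pm}_{\hslash_k}(\xi_j)$ and $\phi^{\pm}_{\hslash_k}(i\xi_j)$, applying \eqref{eq:26} to each of the finitely many ordered field products, and recombining the scalar symbols using $\Re\langle i\xi,w\rangle_{\mathfrak{H}}=\Im\langle\xi,w\rangle_{\mathfrak{H}}$, consistently with \cref{cor:asymptcreat}. No gaps; the domain remark via \eqref{eq:4} and hypothesis \eqref{eq:hyppsin} is the same control already used in the proposition.
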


\section{Semiclassical properties of asymptotic vacuum states}
\label{sec:5}

\subsection{Bound states}
\label{subsec:bds}
In this section we study the semiclassical concentration properties of
\emph{bound states} (\emph{i.e.}, states belonging to ${\rm
  Ran}\,\mathds{1}_{\mathrm{pp}}(H_{\hslash}^{(n)})$). Recall that, as proved in
\citep{derezinski1999rmp} and already mentioned in \eqref{eq:20}, these bound
states correspond exactly to the asymptotic vacuum states in
$\mathscr{K}^{\pm,(n)}_{\hslash}$ defined in \eqref{eq:19}.  Hence, we are able to
prove the following semiclassical characterization.

\begin{thm}\label{prop:measvacuum}
  Let $\big\{\Psi_{\hslash_k}^{(n_{k})}\big\}_{k\in\N}$ be a sequence of
  normalized bound states satisfying \eqref{hyp:secwf}. Assume further that
  there exists $c>0$ such that:
  \begin{equation}
    \label{eq:hyppsinbd}
    \forall k\in\mathbb{N}, \qquad \langle \Psi^{(n_k)}_{\hslash_k}, \,(H_{\hslash_k}^{(n_k)}+b)^{3/2}
    \;\Psi^{(n_k)}_{\hslash_k}\rangle\leq c\; .
  \end{equation}
  Then its Wigner measure $\mu$ concentrates on the set
  \begin{equation*}
    \mathscr{K}^{+}_0\cap  \mathscr{K}^{-}_0\cap \{ (u,z)\in \mathscr{X}\,,\, \lVert u  \rVert_{\mathfrak{H}}^{}=\delta \}\; ,
  \end{equation*}
  where we recall that the space of classical asymptotic radiationless states
  is defined by
  \begin{equation}
    \mathscr{K}^{\pm}_0 =\{(u,z) \in \mathscr{X} \,|\, \Lambda^\pm(u,z) = 0\}\; .
  \end{equation}
\end{thm}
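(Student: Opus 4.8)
The plan is to establish the two concentration properties separately. The mass constraint $\lVert u\rVert_{\mathfrak{H}}=\delta$ is already part of the conclusion of \cref{prop:asymptcreat}, whose hypotheses hold here since \eqref{eq:hyppsinbd} is exactly \eqref{eq:hyppsin} with $p=3/2$ and \eqref{hyp:secwf} is assumed; so I would simply invoke it. The substance of the theorem is to show that $\mu$ is carried by $\mathscr{K}^{+}_0\cap\mathscr{K}^{-}_0$, and the key structural input is that bound states are asymptotic vacua: by \citep{derezinski1999rmp} (see also \eqref{eq:19}--\eqref{eq:20}), every $\Psi^{(n_k)}_{\hslash_k}\in\ran\mathds{1}_{\mathrm{pp}}(H^{(n_k)}_{\hslash_k})$ satisfies $a^{\pm}_{\hslash_k}(\xi)^{(n_k)}\Psi^{(n_k)}_{\hslash_k}=0$ for every $\xi\in\mathfrak{H}$.

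Fix $\xi\in\mathfrak{H}$. Since $a^{\pm,*}_{\hslash_k}(\xi)$ and $a^{\pm}_{\hslash_k}(\xi)$ are adjoint to one another on the form domain of $H^{(n_k)}_{\hslash_k}$, the vacuum relation yields $\langle\Psi^{(n_k)}_{\hslash_k},a^{\pm,*}_{\hslash_k}(\xi)a^{\pm}_{\hslash_k}(\xi)\Psi^{(n_k)}_{\hslash_k}\rangle=\lVert a^{\pm}_{\hslash_k}(\xi)\Psi^{(n_k)}_{\hslash_k}\rVert^{2}=0$ for all $k$. The left-hand side is a two-point asymptotic correlation function of exactly the type handled by \cref{cor:1} with $\underline{p}=2$, an admissible choice since \eqref{eq:hyppsinbd} furnishes the regularity \eqref{eq:hyppsin} with $p=3/2$, so that $2p-1=2$. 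Passing to the limit $k\to\infty$ therefore gives $\int_{\mathscr{X}}\lvert\langle\xi,\Lambda^{\pm}(u,z)\rangle_{\mathfrak{H}}\rvert^{2}\,\mathrm{d}\mu(u,z)=0$; hence $\langle\xi,\Lambda^{\pm}(u,z)\rangle_{\mathfrak{H}}=0$ for $\mu$-almost every $(u,z)$, for each fixed $\xi$. (If one prefers to avoid any question about operator ordering in \cref{cor:1}, the same conclusion follows from \eqref{eq:26} of \cref{prop:asymptcreat} applied to $\phi^{\pm}_{\hslash_k}(\xi)^{2}$ for the test functions $\xi$ and $i\xi$, using that $\langle\Psi^{(n_k)}_{\hslash_k},\phi^{\pm}_{\hslash_k}(\xi)^{2}\Psi^{(n_k)}_{\hslash_k}\rangle=\hslash_k\lVert\xi\rVert_{\mathfrak{H}}^{2}$ for an asymptotic vacuum.)

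To upgrade this to $\Lambda^{\pm}(u,z)=0$ for $\mu$-a.e.\ $(u,z)$, I would fix a countable set $\{\xi_j\}_{j\in\mathbb{N}}\subset\mathfrak{H}$ dense in $\mathfrak{H}=L^{2}(\mathbb{R}^{d})$ and discard the countable union of the associated $\mu$-null sets; on the remaining set of full $\mu$-measure all the pairings $\langle\xi_j,\Lambda^{\pm}(u,z)\rangle_{\mathfrak{H}}$ vanish, which forces $\Lambda^{\pm}(u,z)=0$ because $\Lambda^{\pm}(u,z)\in\mathfrak{H}$ (indeed $\ran\Lambda^{\pm}\subseteq\mathscr{D}(\sqrt{\omega})$ by \cref{cor:2}). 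Thus $\mu(\mathscr{K}^{\pm}_0)=1$; carrying this out for both signs and intersecting with the mass shell $\{\lVert u\rVert_{\mathfrak{H}}=\delta\}$ obtained from \cref{prop:asymptcreat} yields the asserted concentration.

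I expect the only genuine obstacle to be the bookkeeping needed to apply \cref{cor:1} (equivalently \cref{prop:asymptcreat}) at precisely the regularity level $(H^{(n_k)}_{\hslash_k}+b)^{3/2}$ supplied by \eqref{eq:hyppsinbd} — that is, checking that $\underline{p}=2$ is exactly reachable and that the relevant second-order correlation function (in the ordering $a^{\pm,*}_{\hslash_k}(\xi)a^{\pm}_{\hslash_k}(\xi)$, or equivalently the square $\phi^{\pm}_{\hslash_k}(\xi)^{2}$) is covered; this rests entirely on the $\hslash$-uniform energy--number and dispersive estimates of \cref{sec.2}. The remaining ingredients — identifying bound states with asymptotic vacua and the measure-theoretic density argument — are soft.
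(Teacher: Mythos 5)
Your proposal is correct and follows essentially the same route as the paper: identify bound states with asymptotic vacua so that $\langle \Psi^{(n_k)}_{\hslash_k}, a^{\pm,*}_{\hslash_k}(\xi)a^{\pm}_{\hslash_k}(\xi)\Psi^{(n_k)}_{\hslash_k}\rangle=0$, pass to the limit via \cref{cor:1} with $p=3/2$ (so $\underline{p}=2$ is admissible), and combine with the mass concentration from \cref{prop:asymptcreat}. Your explicit countable-density argument to pass from ``for each fixed $\xi$, $\mu$-a.e.'' to ``$\mu$-a.e., for all $\xi$'' is a welcome refinement of a step the paper leaves implicit, but it does not change the substance of the argument.
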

\begin{proof}
  On one hand, by \cref{cor:1} (applied with $p=3/2$ ) for all $\eta\in
  \mathfrak{H}$ if $\Psi^{(n_{k})}_{\hslash_k}\rightharpoondown \mu$,
  \begin{equation*}
    \lim_{k\to \infty} \langle \Psi^{(n_{k})}_{\hslash_k}  , a^{\pm,*}_{\hslash_k}(\eta) a^{\pm}_{\hslash_k}(\eta)\Psi^{(n_{k})}_{\hslash_k} \rangle_{}=\int_{\mathscr{X}}^{}\bigl\lvert \langle \eta  , \Lambda^{\pm}(u,z) \rangle_{\mathfrak{H}}  \bigr\rvert_{}^2  \mathrm{d}\mu(u,z)\; .
  \end{equation*}
  On the other hand, since $\Psi^{(n_{k})}_{\hslash_k}$ is an asymptotic
  vacuum,
  \begin{align*}
    \langle \Psi^{(n_{k})}_{\hslash_k}  , a^{\pm,*}_{\hslash_k}(\eta) a^{\pm}_{\hslash_k}(\eta)\Psi^{(n_{k})}_{\hslash_k} \rangle=\lVert a^{\pm}_{\hslash_k}(\eta)\Psi^{(n_{k})}_{\hslash_k}  \rVert_{}^{2}=0\; .
  \end{align*}
  Hence, for $\mu$-a.a.\ $(u,z)\in \mathscr{X}$, and for all $\eta\in
  \mathfrak{H}$,
  \begin{equation*}
    \langle \eta  , \Lambda^{\pm}(u,z) \rangle_{\mathfrak{H}}=0\; .
  \end{equation*}
  This proves the result, since as discussed in \cref{prop:asymptcreat} the
  measure $\mu$ is concentrated on $\{ (u,z)\in \mathscr{X}\,,\, \lVert u
  \rVert_{\mathfrak{H}}^{}=\delta \}$.
\end{proof}

\begin{corollary}
  \label{cor:4}
  Let $\big\{\Psi_{\hslash_k}^{(n_{k})}\big\}_{k\in\N}$ be a sequence of
  normalized bound states satisfying \eqref{hyp:secwf} and the bound
  \eqref{eq:hyppsin} for some $p\geq 1$. Then all asymptotic correlation
  functions are purely quantum:
  \begin{equation*}
    \langle\varphi^{\pm}_{\hslash}(x_1)\dotsm \varphi^{\pm}_{\hslash}(x_p) \rangle_{\Psi^{(n_{k})}_{\hslash_k}}= o_{\hslash}(1)\; .
  \end{equation*}
\end{corollary}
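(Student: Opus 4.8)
The plan is to obtain \cref{cor:4} by combining two results already at hand: the semiclassical identification of the asymptotic $\underline p$-point correlation functions proved in \cref{prop:asymptcreat}, and the concentration of the Wigner measure of a sequence of bound states on the classical radiationless set, proved in \cref{prop:measvacuum}. Recall that, by \eqref{eq:20}, a bound state in $\mathrm{Ran}\,\mathds{1}_{\mathrm{pp}}(H^{(n_k)}_{\hslash_k})$ is the same thing as an asymptotic vacuum in $\mathscr{K}^{\pm,(n_k)}_{\hslash_k}$, so both results are applicable to the family $\{\Psi^{(n_k)}_{\hslash_k}\}_{k\in\N}$.

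First I would apply \cref{prop:asymptcreat}. The hypotheses \eqref{hyp:secwf} and \eqref{eq:hyppsin} (with exponent $p\ge 1$) are assumed, and since $p\le 2p-1$ for $p\ge 1$ we may take $\underline p=p$ in that proposition, obtaining the representation \eqref{eq:17}:
\[
\langle\varphi^{\pm}_{\hslash_k}(x_1)\dotsm\varphi^{\pm}_{\hslash_k}(x_{p})\rangle_{\Psi^{(n_k)}_{\hslash_k}}
= \int_{\mathscr{X}}\prod_{j=1}^{p}\Bigl(\check{\Lambda}^{\pm}(u,z)+\bar{\hat{\Lambda}}^{\pm}(u,z)\Bigr)(x_j)\,\mathrm{d}\mu(u,z) + o_{\hslash_k}(1)\,,
\]
the remainder tending to $0$ in the weak $L^2_{x_1,\dots,x_p}(\mathbb{R}^{dp})$ topology.

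Next I would invoke \cref{prop:measvacuum}: since the $\Psi^{(n_k)}_{\hslash_k}$ are bound states satisfying \eqref{hyp:secwf} together with the regularity bound \eqref{eq:hyppsinbd}, its Wigner measure $\mu$ is concentrated on $\mathscr{K}^{+}_0\cap\mathscr{K}^{-}_0\cap\{\lVert u\rVert_{\mathfrak{H}}=\delta\}$; in particular $\Lambda^{\pm}(u,z)=0$ for $\mu$-almost every $(u,z)$. Hence the integrand $\prod_{j}(\check{\Lambda}^{\pm}+\bar{\hat{\Lambda}}^{\pm})(x_j)$ vanishes $\mu$-a.e., the classical leading term in the displayed identity equals $0$, and only the remainder $o_{\hslash_k}(1)$ is left --- which is precisely the assertion. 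The argument is essentially immediate; the single point requiring attention is matching the regularity exponents, namely that \eqref{eq:hyppsin} be strong enough to feed both \cref{prop:asymptcreat} (exponent $p$, for the order-$p$ correlator) and \cref{prop:measvacuum} (exponent $3/2$, i.e.\ \eqref{eq:hyppsinbd}); this holds automatically when $p\ge 3/2$ since $H^{(n_k)}_{\hslash_k}+b\ge 1$, and in the remaining low-order case one simply applies \cref{prop:measvacuum} directly, as it only involves one- and two-point correlators.
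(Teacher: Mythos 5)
Your argument is exactly the intended one: the corollary follows by inserting the representation \eqref{eq:17} from \cref{prop:asymptcreat} (with $\underline{p}=p\leq 2p-1$) and then killing the classical term via the concentration of $\mu$ on $\mathscr{K}^{+}_0\cap\mathscr{K}^{-}_0$ from \cref{prop:measvacuum}, so in substance the proof is correct and matches the paper. The only weak spot is your patch for the low-order case $1\leq p<3/2$: there you cannot ``apply \cref{prop:measvacuum} directly'', since that theorem genuinely requires the exponent-$3/2$ bound \eqref{eq:hyppsinbd}, which is not implied by \eqref{eq:hyppsin} with $p=1$; the correct fix is simply that for integer $p\geq 2$ one automatically has $p\geq 3/2$, while for $p=1$ the one-point function vanishes identically because $a^{\pm}_{\hslash_k}(\eta)\Psi^{(n_k)}_{\hslash_k}=0$ and $\phi^{\pm}_{\hslash_k}(\eta)=a^{\pm}_{\hslash_k}(\eta)+a^{\pm,*}_{\hslash_k}(\eta)$, so no appeal to \cref{prop:measvacuum} is needed at all in that case.
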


\subsection{Ground states}
\label{sec:semicl-limit-ground}

We study in this subsection the semiclassical behavior of \emph{ground
  states}. It is well-known that the massive quantum Yukawa model under
consideration, with trapped particles, has ground states by a HVZ type
theorem \citep[see, \emph{e.g.},][Theorem 4.1]{derezinski1999rmp}.  Let us
denote
\begin{equation}
  \label{eq:21}
  E_{\delta}=\inf_{\substack{(u,z)\in \mathscr{X}\\\lVert u  \rVert_{\mathfrak{H}}^{}=\delta}} \mathscr{E}(u,z)\; .
\end{equation}

\begin{thm}
  \label{thm:2}
  Let $\big\{\Phi^{(n_{k})}_{\hslash_k}\big\}_{k\in \N}$ be a sequence of
  ground states of $H_{\hslash_k}^{(n_{k})}$ such that there exists
  $\delta>0$,
  \begin{equation*}
    N_1 \Phi^{(n_k)}_{\hslash_k}= \hslash_k \;n_k\; \Phi^{(n_k)}_{\hslash_k}\,,\qquad \lim_{k\to\infty} \hslash_k=0\,, \qquad  \lim_{k\to\infty} \hslash_kn_k=\delta^2\,.
  \end{equation*}
  Then any Wigner measure $\mu\in
  \mathscr{M}\bigl(\Phi^{(n_{k})}_{\hslash_k};k\in\N\bigr)$ concentrates on
  the set
  \begin{equation*}
    \{(u,z)\in \mathscr{X}\,,\, \mathscr{E}(u,z)=E_{\delta}\,,\,\text{and } \lVert u  \rVert_{\mathfrak{H}}^{}=\delta \}\; .
  \end{equation*}
  In particular, it follows that the set $\{(u,z)\in \mathscr{X}\,,\,
  \mathscr{E}(u,z)=E_{\delta}\,,\,\text{and } \lVert u
  \rVert_{\mathfrak{H}}^{}=\delta \}$ is not empty, and thus the variational
  problem \eqref{eq:21} has minimizers for all $\delta>0$.
\end{thm}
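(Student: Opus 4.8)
The plan is to sandwich the ground state energies $E_{\hslash_k}^{(n_k)}:=\inf\sigma\bigl(H_{\hslash_k}^{(n_k)}\bigr)$ between $E_\delta$ from both sides, and then read off the concentration of $\mu$. First I would prove the upper bound $\limsup_{k}E_{\hslash_k}^{(n_k)}\le E_\delta$ by a variational trial-state argument: for each $\varphi$ in the form domain of $-\Delta+V$ with $\|\varphi\|_{\mathfrak{H}}=1$ and each $z_0\in\mathscr{D}(\sqrt{\omega})$, one tests $H_{\hslash_k}^{(n_k)}$ on $\varphi^{\otimes n_k}$ tensored with a meson coherent state of semiclassical amplitude $z_0$. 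A direct computation, using only the regularity of $\lambda_x$ and the scaling $\hslash_k n_k\to\delta^2$, shows that the expectation of $H_{\hslash_k}^{(n_k)}$ in this state converges to $\mathscr{E}(\delta\varphi,z_0)$; here one uses the $U(1)$-invariance $\mathscr{E}(e^{i\theta}u,z)=\mathscr{E}(u,z)$, since the Wigner measure of $\varphi^{\otimes n_k}$ is the uniform measure on the circle $\{e^{i\theta}\delta\varphi\,|\,\theta\in[0,2\pi)\}$. Taking the infimum over $\varphi$ and $z_0$ yields the bound; note also that $E_\delta\ge-\alpha\delta^4>-\infty$ by \eqref{eq.estenrg2}, so $E_\delta$ is a finite real number.

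Second, I would exploit that a ground state is an honest eigenvector, $H_{\hslash_k}^{(n_k)}\Phi_{\hslash_k}^{(n_k)}=E_{\hslash_k}^{(n_k)}\Phi_{\hslash_k}^{(n_k)}$ with $E_{\hslash_k}^{(n_k)}$ bounded (from above by the trial-state estimate, from below by \cref{lem:KR.est}). Hence $\bigl\|(H_{\hslash_k}^{(n_k)}+b)^{\ell}\Phi_{\hslash_k}^{(n_k)}\bigr\|=(E_{\hslash_k}^{(n_k)}+b)^{\ell}$ stays bounded for every $\ell$, and combining this with \cref{prop:enrgynumberest}\,(i) and with $\|H_{\hslash_k}^{0,(n_k)}(H_{\hslash_k}^{(n_k)}+b)^{-1}\|\le C$ from \cref{lem:KR.est}, all moments $\langle N_2^{\ell}\rangle$ as well as $\langle(H_{\hslash_k}^{0})^2\rangle$ are uniformly bounded; since moreover $N_1\Phi_{\hslash_k}^{(n_k)}=\hslash_k n_k\,\Phi_{\hslash_k}^{(n_k)}$ with $\hslash_k n_k\to\delta^2$, every moment of $N$ is uniformly bounded. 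In particular \eqref{number-cond} and \eqref{hyp:states} hold, so Wigner measures exist and, as in \cref{thm:1}, every $\mu\in\mathscr{M}\bigl(\Phi_{\hslash_k}^{(n_k)};k\in\N\bigr)$ is concentrated on $\mathscr{X}$ with $\int_{\mathscr{X}}\mathscr{E}_0\,\mathrm{d}\mu<\infty$. Finally, from $\langle N_1\rangle=\hslash_k n_k\to\delta^2$, $\langle N_1^2\rangle=(\hslash_k n_k)^2\to\delta^4$ and the higher moment bounds, the usual argument (convergence of number moments to phase-space moments, followed by the equality case in the Cauchy--Schwarz inequality, exactly as in the proof of \cref{prop:asymptcreat}) shows that $\mu$ is concentrated on $\{(u,z)\in\mathscr{X}\,|\,\|u\|_{\mathfrak{H}}=\delta\}$.

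Third, I would pass the energy itself to the limit. Since all number moments and $\langle(H_{\hslash_k}^{0})^2\rangle$ are uniformly bounded, the semiclassical transition-to-the-limit results of \citep{ammari2008ahp,ammari2014jsp} apply: the free energy converges (and would in any case be weakly lower semicontinuous), while the sign-indefinite interaction $H_{\hslash_k}^{I}=\mathrm{d}\Gamma_{\hslash_k}^{(1)}(\phi_{\hslash_k}(\lambda_x))$, controlled by \eqref{eq:10} of \cref{lem:formcontrol}, also converges along the subsequence defining $\mu$; hence $\lim_k E_{\hslash_k}^{(n_k)}=\lim_k\bigl\langle\Phi_{\hslash_k}^{(n_k)},H_{\hslash_k}^{(n_k)}\Phi_{\hslash_k}^{(n_k)}\bigr\rangle=\int_{\mathscr{X}}\mathscr{E}(u,z)\,\mathrm{d}\mu(u,z)$. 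Combining the three steps, $\int_{\mathscr{X}}\mathscr{E}\,\mathrm{d}\mu=\lim_k E_{\hslash_k}^{(n_k)}\le E_\delta$, while $\mathscr{E}(u,z)\ge E_\delta$ for $\mu$-almost every $(u,z)$ because $\mu$ is supported on $\{\|u\|_{\mathfrak{H}}=\delta\}$ and $E_\delta$ is the infimum there. Therefore $\int_{\mathscr{X}}(\mathscr{E}-E_\delta)\,\mathrm{d}\mu=0$ with nonnegative integrand, which forces $\mathscr{E}(u,z)=E_\delta$ and $\|u\|_{\mathfrak{H}}=\delta$ for $\mu$-almost every $(u,z)$; since $\mu$ is a probability measure this set has full measure, in particular it is nonempty, and any of its points solves the variational problem \eqref{eq:21}.

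The step I expect to be the main obstacle is the last one: unlike the asymptotic-field limits established earlier in the paper, passing $\langle H_{\hslash_k}^{I}\rangle$ to the limit (not merely bounding it) along Wigner convergence genuinely uses the uniform energy-number estimates of Section~\ref{sec.2}, and this is precisely where the exact-eigenvector moment bounds of the second step become indispensable. A secondary technical point is making the trial-state energy computation uniform in $\hslash_k$ under the correct scaling $\hslash_k n_k\to\delta^2$; this is routine but must be carried out keeping the model's $\hslash$-dependent second quantizations in mind.
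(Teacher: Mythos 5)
Your argument is correct and follows the same overall strategy as the paper's proof: concentrate the Wigner measure on the sphere $\{\lVert u\rVert_{\mathfrak{H}}=\delta\}$ using the $N_1$-eigenvalue, bound the quantum ground-state energies by $E_\delta$, show $\int_{\mathscr{X}}\mathscr{E}\,\mathrm{d}\mu\le E_\delta$, and conclude by the probability-measure (convexity) argument. The difference is one of packaging: the paper imports its three key ingredients from \citep{ammari2014jsp} (Lemma 5.6 for the concentration on the sphere, Theorem 1.2 for $E_{\hslash_k}\to E_\delta$, and a Lemma 5.7-type \emph{a posteriori} argument for the inequality $\int\mathscr{E}\,\mathrm{d}\mu\le E_\delta$), whereas you reconstruct them: a Hartree-times-coherent trial state for the upper bound, and eigenvector moment bounds via \cref{lem:KR.est} and \cref{prop:enrgynumberest} to secure \eqref{hyp:states} and the passage to the limit of the interaction term; this buys a more self-contained proof at the cost of redoing work already available. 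Two small caveats. First, your assertion that $\lim_k E_{\hslash_k}^{(n_k)}=\int_{\mathscr{X}}\mathscr{E}\,\mathrm{d}\mu$ (exact convergence of the free part) is neither needed nor justified a priori; what the final step uses is only $\int\mathscr{E}\,\mathrm{d}\mu\le\liminf_k E_{\hslash_k}^{(n_k)}\le E_\delta$, which follows from weak lower semicontinuity of $\mathscr{E}_0$ together with the convergence of $\langle H^{I}_{\hslash_k}\rangle$ — your own parenthetical fallback covers exactly this, so no harm is done. Second, concentration on the sphere rather than the ball does not follow from moment bounds and the Cauchy--Schwarz equality case alone: one must exclude loss of mass in the $u$-variable, which here comes from the uniform bound on $\langle H^0_{\hslash_k}\rangle$ combined with the compactness of $(-\Delta+V)^{-1}$ ensured by \eqref{hyp:pot}; your estimates do supply this ingredient, but it is the tightness step, encoded in \citep[Lemma 5.6]{ammari2014jsp}, that makes the moments of $N_1$ converge to the phase-space moments in the first place.
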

\begin{proof}
  First of all, let us remark that
  $\mathscr{M}\bigl(\Phi^{(n_{k})}_{\hslash_k}; k\in\N\bigr)\neq \varnothing
  $ since ground states satisfy \eqref{hyp:states}. In addition, thanks to
  the identity
  $$
  \langle \Phi_{\hslash_k}^{(n_{k})} , N_1\Phi_{\hslash_k}^{(n_{k})}
  \rangle_{}=\hslash_kn_k\to \delta^2\,,
 $$
 one deduces from \citep[Lemma 5.6]{ammari2014jsp} that any Wigner measure
 $\mu\in\mathscr{M}\bigl(\Phi^{(n_{k})}_{\hslash_k}; k\in\N\bigr) $
 concentrates on the set
  $$ \mathscr{X}_\delta:=\{(u,z):  \lVert u  \rVert_{\mathfrak{H}}^{}=\delta\}\,.
  $$

  Let now $E_{\hslash_k}=\langle \Phi_{\hslash_k}^{(n_{k})} ,
  H_{\hslash_k}^{(n_{k})}\Phi_{\hslash_k}^{(n_{k})} \rangle_{}$. In
  \citep[Theorem 1.2]{ammari2014jsp} it was proved that
  \begin{equation*}
    \lim_{k\to \infty}E_{\hslash_k}= E_{\delta}\; .
  \end{equation*}
  In addition, an \emph{a posteriori} argument as the one used in obtaining a
  lower bound (\citep[Lemma 5.7]{ammari2014jsp}) shows that if one takes a
  subsequence of ground states
  $\big\{\Phi_{\hslash_k}^{(n_{k})}\big\}_{k\in\N}$, which we still denote
  the same, such that $\Phi_{\hslash_k}^{(n_k)}\rightharpoondown \mu$, then
  \begin{equation*}
    \int_{\mathscr{X}_\delta}^{}\mathscr{E}(u,z)  \mathrm{d}\mu(u,z)\leq E_{\delta}\; .
  \end{equation*}
  This concludes the proof, since $\mu$ is a probability measure, and thus
  its action is that of a convex combination of energies $\mathscr{E}(u,z)$.
\end{proof}

The existence of minimizers for \eqref{eq:21} can also be proved by direct
investigation of the functional $\mathscr{E}$, as showed by the next
proposition. Beforehand, let us prove a preparatory result.

\begin{lemma}\label{lem:functionalineq}
  For any $(u,z) \in \mathscr{X}$ with $\lVert u
  \rVert_{\mathfrak{H}}^{}=\delta$, the following inequalities are satisfied:
  \begin{enumerate}
  \item\label{item:4} $ \mathscr{E}_0(u,z) - 2\delta^2
    \|\omega^{-1/2}\chi\|_{\mathfrak{H}} \|z\|_{\mathfrak{H}}\leq
    \mathscr{E}(u,z) \leq \mathscr{E}_0(u,z) + 2\delta^2
    \|\omega^{-1/2}\chi\|_{\mathfrak{H}} \|z\|_{\mathfrak{H}}\,,$
  \item\label{item:5} $$ \mathscr{E}(u,z) \geq \inf_{\substack{u \in
        \mathscr{D}(\sqrt{-\Delta + V}),\\ \|u\|_{\mathfrak{H}}= \delta }}
    \langle u ,(-\Delta + V)u \rangle_{\mathfrak{H}} - \delta^4
    \|\omega^{-1}\chi\|_{\mathfrak{H}} \,.$$
  \end{enumerate}
\end{lemma}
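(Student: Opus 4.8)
The plan is to reduce both inequalities to the explicit shape of the interaction term in the energy $\mathscr{E}$ of \eqref{eq.skg-eng}. Writing $g_u:=\widehat{\lvert u\rvert^2}$ for the Fourier transform of $\lvert u\rvert^2\in L^1(\R^d)$, assumption \eqref{hyp:lambda} gives $\int_{\R^d}\lambda_x(k)\lvert u(x)\rvert^2\,\mathrm{d}x=\omega^{-1/2}(k)\chi(k)\,g_u(k)$, so that the interaction term is simply
\[
  \mathscr{E}(u,z)-\mathscr{E}_0(u,z)=2\,\Re\,\bigl\langle z,\omega^{-1/2}\chi\,g_u\bigr\rangle_{\mathfrak{H}}\,.
\]
The single elementary fact used throughout is $\lVert g_u\rVert_{L^\infty(\R^d)}\le\bigl\lVert\lvert u\rvert^2\bigr\rVert_{L^1(\R^d)}=\lVert u\rVert_{\mathfrak{H}}^2=\delta^2$; moreover $\omega^{-1/2}\chi,\omega^{-1}\chi\in\mathfrak{H}$ since $\chi\in C^\infty_0(\R^d)$ and, by \eqref{hyp:omega}, $\omega^{-1}$ is bounded, so every quantity below is finite.

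For part \ref{item:4} I would apply the Cauchy--Schwarz inequality in $\mathfrak{H}$ directly to the identity above:
\[
  \bigl\lvert\mathscr{E}(u,z)-\mathscr{E}_0(u,z)\bigr\rvert\le 2\,\lVert z\rVert_{\mathfrak{H}}\,\lVert\omega^{-1/2}\chi\,g_u\rVert_{\mathfrak{H}}\le 2\,\lVert z\rVert_{\mathfrak{H}}\,\lVert g_u\rVert_{L^\infty}\,\lVert\omega^{-1/2}\chi\rVert_{\mathfrak{H}}\le 2\delta^2\,\lVert\omega^{-1/2}\chi\rVert_{\mathfrak{H}}\,\lVert z\rVert_{\mathfrak{H}}\,,
\]
which is exactly the two-sided bound asserted in \ref{item:4}.

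For part \ref{item:5} the idea is to complete the square in the field variable $z$. Since $z\in\mathscr{D}(\sqrt{\omega})$ on $\mathscr{X}$, one has $\langle z,\omega^{-1/2}\chi g_u\rangle_{\mathfrak{H}}=\langle\sqrt{\omega}\,z,\omega^{-1}\chi g_u\rangle_{\mathfrak{H}}$, hence
\[
  \langle z,\omega z\rangle_{\mathfrak{H}}+2\,\Re\langle z,\omega^{-1/2}\chi g_u\rangle_{\mathfrak{H}}=\bigl\lVert\sqrt{\omega}\,z+\omega^{-1}\chi g_u\bigr\rVert_{\mathfrak{H}}^2-\lVert\omega^{-1}\chi g_u\rVert_{\mathfrak{H}}^2\ge-\lVert\omega^{-1}\chi g_u\rVert_{\mathfrak{H}}^2\,.
\]
Bounding $\lVert\omega^{-1}\chi g_u\rVert_{\mathfrak{H}}^2\le\lVert g_u\rVert_{L^\infty}^2\,\lVert\omega^{-1}\chi\rVert_{\mathfrak{H}}^2\le\delta^4\,\lVert\omega^{-1}\chi\rVert_{\mathfrak{H}}^2$ and then minimizing the remaining term $\langle u,(-\Delta+V)u\rangle_{\mathfrak{H}}$ over all $u\in\mathscr{D}(\sqrt{-\Delta+V})$ with $\lVert u\rVert_{\mathfrak{H}}=\delta$ yields the claimed lower bound for $\mathscr{E}(u,z)$.

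There is no genuine obstacle here; the only points deserving a line of comment are the legitimacy of completing the square (which requires exactly $\sqrt{\omega}\,z\in\mathfrak{H}$, i.e.\ $(u,z)\in\mathscr{X}$) and the finiteness of $\lVert\omega^{-1}\chi\rVert_{\mathfrak{H}}$ and $\lVert\omega^{-1/2}\chi\rVert_{\mathfrak{H}}$, both immediate from \eqref{hyp:omega}--\eqref{hyp:lambda}. If a sharp constant were wanted one could instead optimize exactly over $z$, the optimizer being $z=-\omega^{-3/2}\chi g_u$; but the completion-of-the-square estimate already suffices, in particular for establishing existence of minimizers of \eqref{eq:21} in the subsequent proposition.
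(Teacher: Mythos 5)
Your proof is correct and follows essentially the same route as the paper: Cauchy--Schwarz for \ref{item:4}, and completion of the square in the field variable for \ref{item:5}, the only cosmetic difference being that you complete the square in $\mathfrak{H}$ after expressing the interaction through $\widehat{|u|^2}$, whereas the paper completes it pointwise under the $\mathrm{d}x\,\mathrm{d}k$ integral with the weight $|u(x)|^2/\delta^2$. Note that, exactly like the paper's own proof, your argument yields the constant $\delta^4\lVert\omega^{-1}\chi\rVert_{\mathfrak{H}}^2$ (squared norm), while the lemma as stated displays $\delta^4\lVert\omega^{-1}\chi\rVert_{\mathfrak{H}}$ --- a discrepancy already present between the paper's statement and its proof, not a flaw in your reasoning.
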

\begin{proof}
  \ref{item:4}: Such inequalities are a consequence of Cauchy-Schwarz
  inequality:
  \begin{align*}
    \Bigl\lvert2 \Re\bigl\langle u(x),  \langle \lambda_x(k),z(k) \rangle_{\mathfrak{H}_k} u(x) \bigr\rangle_{\mathfrak{H}_x}\Bigr\rvert &\leq 2 \int_{\mathbb{R}^{2d}} |u(x)|^2|\omega^{-1/2}(k) \chi(k) z(k)| \mathrm{d}x \mathrm{d}k\\
    &\leq 2\|u\|^2_{\mathfrak{H}}\, \|\omega^{-1/2} \chi \|_{\mathfrak{H}} \, \|z\|_{\mathfrak{H}} = 2\delta^2 \|\omega^{-1/2} \chi \|_{\mathfrak{H}} \, \|z\|_{\mathfrak{H}}.
  \end{align*}
  \ref{item:5}: It is convenient to complete the square in the expression of
  the energy functional
  \begin{equation*}
    \begin{split}
      \langle z ,\omega z \rangle_{\mathfrak{H}} + 2\Re\bigl\langle u,  \langle \lambda_x, z  \rangle_{\mathfrak{H}} u \bigr\rangle_{\mathfrak{H}} =\mspace{-5mu}\int_{\mathbb{R}^{2d}}  \mspace{-10mu}|u(x)|^2 \Bigl( \bigl(\overline{z(k)}+ \delta^2 e^{-ik\cdot x} \tfrac{\bar{\chi}(k)}{\omega^{3/2}(k)} \bigr) \tfrac{\omega(k)}{\delta^2} \bigl(z(k)\\+ \delta^2 e^{ik\cdot x} \tfrac{\chi(k)}{\omega^{3/2}(k)} \bigr) \Bigr)\mathrm{d} x \mathrm{d}k  -\delta^4 \|\omega^{-1} \chi\|^2_{\mathfrak{H}}\geq - \delta^4 \|\omega^{-1}\chi\|^2_{\mathfrak{H}}\; .
    \end{split}
  \end{equation*}
  Therefore
  \begin{equation}\label{ineq:laplaceboundfunct}
    \mathscr{E}(u,z)  \geq  \langle u,(-\Delta + V)u \rangle_{\mathfrak{H}} - \delta^4 \|\omega^{-1} \chi\|^2_{\mathfrak{H}}\; .
  \end{equation}
\end{proof}
\begin{proposition}\label{prop:existencemin}
  For any $\delta>0$ there exists $(u_0,z_0) \in \{(u,z)\in \mathscr{X}\,,\,
  \lVert u \rVert_{\mathfrak{H}}^{}=\delta\}$ such that
  \begin{equation}\label{eq:minimizationgse}
    \mathscr{E}(u_0,z_0) = \inf_{\substack{(u,z) \in \mathscr{X}\\\lVert u  \rVert_{\mathfrak{H}}^{}=\delta}} \mathscr{E}(u,z)\; .
  \end{equation}
\end{proposition}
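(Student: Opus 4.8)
The plan is to establish \eqref{eq:minimizationgse} by the direct method of the calculus of variations, the point being that the confining potential $V$ restores the compactness that is absent in the translation-invariant diffusion scheme. Write $E_\delta$ for the infimum on the right-hand side of \eqref{eq:minimizationgse}. First I would check that $E_\delta$ is a finite real number: it is bounded below on the constraint set by part~\ref{item:5} of \cref{lem:functionalineq} together with $V\geq 0$, and it is finite from above since $(u,0)\in\mathscr{X}$ for any $u\in C_0^{\infty}(\mathbb{R}^d)$ with $\|u\|_{\mathfrak{H}}=\delta$.

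Next I would fix a minimizing sequence $\{(u_n,z_n)\}_n\subset\mathscr{X}$ with $\|u_n\|_{\mathfrak{H}}=\delta$ and $\mathscr{E}(u_n,z_n)\to E_\delta$, and derive uniform bounds. By part~\ref{item:4} of \cref{lem:functionalineq} one has $\mathscr{E}_0(u_n,z_n)\leq\mathscr{E}(u_n,z_n)+2\delta^2\|\omega^{-1/2}\chi\|_{\mathfrak{H}}\|z_n\|_{\mathfrak{H}}$, while $\mathscr{E}_0(u_n,z_n)\geq\langle z_n,\omega z_n\rangle_{\mathfrak{H}}\geq m\|z_n\|_{\mathfrak{H}}^2$ by \eqref{hyp:omega}; a Young inequality then bounds $\|z_n\|_{\mathfrak{H}}$ uniformly, hence $\mathscr{E}_0(u_n,z_n)$, so that $\{u_n\}_n$ is bounded in $\mathscr{D}(\sqrt{-\Delta+V})$ and $\{z_n\}_n$ in $\mathscr{D}(\sqrt{\omega})$. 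Since \eqref{hyp:pot} forces $V(x)\to\infty$ as $|x|\to\infty$, the operator $-\Delta+V$ has compact resolvent, so the embedding $\mathscr{D}(\sqrt{-\Delta+V})\hookrightarrow\mathfrak{H}$ is compact; passing to a subsequence (not relabelled) I would then have $u_n\to u_0$ strongly in $\mathfrak{H}$ and weakly in $\mathscr{D}(\sqrt{-\Delta+V})$, and $z_n\rightharpoonup z_0$ weakly in $\mathscr{D}(\sqrt{\omega})$, hence in $\mathfrak{H}$. The strong $L^2$ convergence gives $\|u_0\|_{\mathfrak{H}}=\lim_n\|u_n\|_{\mathfrak{H}}=\delta$, so $(u_0,z_0)$ is admissible in \eqref{eq:minimizationgse}.

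Finally I would show $\mathscr{E}(u_0,z_0)\leq\liminf_n\mathscr{E}(u_n,z_n)$. The nonnegative quadratic forms $u\mapsto\langle u,(-\Delta+V)u\rangle_{\mathfrak{H}}$ and $z\mapsto\langle z,\omega z\rangle_{\mathfrak{H}}$ are weakly lower semicontinuous on $\mathfrak{H}$, so $\mathscr{E}_0(u_0,z_0)\leq\liminf_n\mathscr{E}_0(u_n,z_n)$. For the interaction term $\mathscr{E}(u,z)-\mathscr{E}_0(u,z)=2\Re\int_{\mathbb{R}^d}|u(x)|^2\langle\lambda_x,z\rangle_{\mathfrak{H}}\,\mathrm{d}x$ I would argue actual convergence along the sequence: splitting $|u_n|^2=|u_0|^2+(|u_n|^2-|u_0|^2)$, the second piece tends to $0$ in $L^1(\mathbb{R}^d)$ when paired against $x\mapsto\langle\lambda_x,z_n\rangle_{\mathfrak{H}}$, whose $L^\infty$-norm is bounded by $\|\omega^{-1/2}\chi\|_{\mathfrak{H}}\|z_n\|_{\mathfrak{H}}$ uniformly; and $\int_{\mathbb{R}^d}|u_0(x)|^2\langle\lambda_x,z_n\rangle_{\mathfrak{H}}\,\mathrm{d}x$ equals, by Plancherel, the $\mathfrak{H}$-inner product of $z_n$ with the \emph{fixed} function $\omega^{-1/2}\overline{\chi}\,\widehat{|u_0|^2}$, which lies in $\mathfrak{H}$ precisely because the ultraviolet cutoff $\chi$ has compact support and $\widehat{|u_0|^2}\in L^\infty(\mathbb{R}^d)$ (Hausdorff--Young), so that $z_n\rightharpoonup z_0$ in $\mathfrak{H}$ forces convergence to $\int|u_0|^2\langle\lambda_x,z_0\rangle_{\mathfrak{H}}$. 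Putting the pieces together yields $\mathscr{E}(u_0,z_0)\leq\liminf_n\mathscr{E}(u_n,z_n)=E_\delta$, and since $(u_0,z_0)$ is admissible this forces equality, which is \eqref{eq:minimizationgse}.

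The only delicate point, and the one I would treat with care, is the passage to the limit in the interaction term: since we obtain merely weak convergence of $\{z_n\}$ in $\mathfrak{H}$, one must simultaneously use the $L^2$-compactness of $\{u_n\}$ coming from the confining potential \eqref{hyp:pot} and the square-integrability of $\omega^{-1/2}\chi$ coming from the ultraviolet cutoff \eqref{hyp:lambda}; everything else is the standard coercivity-plus-lower-semicontinuity scheme.
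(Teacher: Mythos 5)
Your proposal is correct and follows essentially the same direct-method argument as the paper: coercivity from \cref{lem:functionalineq}, strong $\mathfrak{H}$-compactness of the minimizing sequence $u_n$ from the confining potential \eqref{hyp:pot}, weak convergence of $z_n$ in $\mathscr{D}(\sqrt{\omega})$, lower semicontinuity of the quadratic parts, and passage to the limit in the interaction term by combining the strong convergence of $u_n$ with the weak pairing of $z_n$ against the fixed function built from $u_0$ and the cutoff. Your Plancherel reformulation of that last step is just a more explicit version of the paper's ``Fubini plus weak convergence'' argument, so the two proofs coincide in substance.
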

\begin{proof}
  The functional $\mathscr{E}$ is bounded from below, thanks to
  \cref{lem:functionalineq}. Now consider a minimizing sequence $(u_n,z_n)$:
  \begin{equation}
    \mathscr{E}(u_n,z_n) = E_{\delta} +\frac{1}{n} \; .
  \end{equation}
  Since $\|u_n\|_{\mathfrak{H}}^2 = \delta$, there exists a subsequence
  $\{u_{n_k}\}_{k \in \NN} $ and $u_0 \in \mathfrak{H}$ such that $u_{n_k}
  \rightharpoonup u_0$ converges weakly in $\mathfrak{H}$. By
  \eqref{ineq:laplaceboundfunct} it also exists $C>0$ such that
  \begin{equation}
    \langle u_{n_k} ,(-\Delta + V) u_{n_k} \rangle_{\mathfrak{H}} \leq C\; .
  \end{equation}
  By lower semi-continuity of the induced norm, this yields
  \begin{equation}\label{liminf:laplacian}
    \langle u_0 ,(-\Delta + V) u_0\rangle_{\mathfrak{H}} \leq \liminf_{k \rightarrow + \infty}\, \langle u_{n_k} ,(-\Delta + V)u_{n_k} \rangle_{\mathfrak{H}} \leq C\; .
  \end{equation}
  Therefore, $u_0 \in \mathscr{D}(\sqrt{-\Delta + V})$. In addition,
  $\|(-\Delta + V)^{1/2}u_{n_k}\|_{\mathfrak{H}}^2\leq C$, thus there exists
  a subsubsequence $\{u_{n_{k_j}}\}_{j \in \NN}$ and $v \in \mathfrak{H}$
  such that
  \begin{equation}
    (-\Delta+V)^{1/2} u_{n_{k_j}} \rightharpoonup v\; .
  \end{equation}
  Let now $\varphi \in \mathfrak{H}$:
  \begin{equation}
    0 = \lim_{j \rightarrow + \infty } \langle \varphi,u_{n_{k_j}} - u_0 \rangle_{\mathfrak{H}} = \lim_{j \rightarrow + \infty }  \langle (-\Delta + V)^{-1/2} \varphi,(-\Delta + V)^{1/2}(u_{n_{k_j}} - u_0) \rangle_{\mathfrak{H}}\; ,
  \end{equation}
  that implies $v = (-\Delta + V)^{1/2} u_0$. Since the potential $V$ is
  confining and $(-\Delta + V)^{-1/2}$ is compact, one has
  \begin{equation}\label{eq:limitu}
    \underset{j\to \infty}{\mathfrak{H}\text{-lim}}\: u_{n_{k_j}}=u_0\; .
  \end{equation}
  In addition, $\|u_0\|_{\mathfrak{H}} = \delta$.

  Consider now the sequence $\{z_{n_{k_j}}\}_{j \in \NN}$. By
  \cref{lem:functionalineq} together with the fact that for all $k\in
  \mathbb{R}^d$, $\omega(k) \geq m >0$ we have that for all $n\in
  \mathbb{N}$:
  \begin{align*}
    C > \mathscr{E}(u_n,z_n) &\geq \mathscr{E}_0(u_n,z_n) - 2\delta^2 \|\omega^{-1/2}\chi\|_{\mathfrak{H}} \|z_n\|_{\mathfrak{H}} \geq m \|z_n\|^2_{\mathfrak{H}}  - 2\delta^2 \|\omega^{-1/2}\chi\|_{\mathfrak{H}} \|z_n\|_{\mathfrak{H}}\; .
  \end{align*}
  Hence $\|z_{n_{k_j}}\|_{\mathfrak{H}}$ is bounded, so there exist a
  subsubsequence which we still denote by $\{z_{n_{k_{j}}}\}_{j \in \NN}$ and
  a function $z_0 \in \mathfrak{H}$ such that $z_{n_{k_{j}}} \rightharpoonup
  z_0$. As a byproduct, lower semi-continuity for $\langle z , \omega z
  \rangle_{\mathfrak{H}}$ yields
  \begin{equation}\label{liminf:fieldenergy}
    \langle z_0 ,\omega z_0 \rangle_{\mathfrak{H}} \leq \liminf_{j \rightarrow + \infty} \langle z_{n_{k_{j}}} ,\omega z_{n_{k_{j}}} \rangle_{\mathfrak{H}}\; ,
  \end{equation}
  \emph{i.e.}, $z_0 \in \mathscr{D}(\sqrt{\omega})$. In addition, thanks to
  Fubini's theorem and weak convergence,
  \begin{equation}\label{eq:limitrealpart}
    \lim_{j \rightarrow + \infty} \bigl\langle u_0 ,\langle z_{n_{k_{j}}} , \lambda_x \rangle_{\mathfrak{H}} u_0 \bigr\rangle_{\mathfrak{H}} = \bigl\langle u_0 ,\langle z_0 , \lambda_x \rangle_{\mathfrak{H}} u_0 \bigr\rangle_{\mathfrak{H}}\,.
  \end{equation}
  Furthermore,
  \begin{equation*}
    \begin{split}
      \Bigl\lvert\bigl\langle u_{n_{k_{j}}} , \Re\langle z_{n_{k_{j}}}, \lambda_{x} \rangle_{\mathfrak{H}} u_{n_{k_{j}}} \bigr\rangle_{\mathfrak{H}}  - \bigl\langle u_0 ,\Re\langle z_0,\lambda_{x} \rangle_{\mathfrak{H}} u_0 \bigr\rangle_{\mathfrak{H}}\Bigr\rvert  \leq C \bigl\|\lambda_{(\cdot)}\bigr\|_{L^{\infty}(\RR^d; \mathfrak{H})} \bigl\lVert u_{n_{k_{j}}}-u_0  \bigr\rVert_{\mathfrak{H}}^2  \\+ \bigl\|\langle z_{n_{k_{j}}} - z_0,\lambda_{(\cdot)} \rangle_{\mathfrak{H}}\bigr\|_{L^{\infty}_x} \bigl\|u_{0}\bigr\|^2_{\mathfrak{H}}\; ,
    \end{split}
  \end{equation*}
  where the right hand side converges to zero thanks to the strong
  convergence of $u_{n_{k_j}}$ and \eqref{eq:limitrealpart}. Thus,
  \begin{equation}\label{liminf:interaction}
    \lim_{j \rightarrow + \infty} \bigl\langle u_{n_{k_{j}}} ,  \Re\langle z_{n_{k_{j}}}, \lambda_{x} \rangle_{\mathfrak{H}} u_{n_{k_{j}}} \bigr\rangle_{\mathfrak{H}} = \bigl\langle u_0 , \Re \langle z_0, \lambda_{x} \rangle_{\mathfrak{H}} u_0 \bigr\rangle_{\mathfrak{H}}\; .
  \end{equation}
  Hence, we conclude the proof:
  \begin{equation}
    E_{\delta}\leq \mathscr{E}(u_0,z_0) \leq \liminf_{j\rightarrow + \infty} \mathscr{E}(u_{n_{k_{j}}},z_{n_{k_{j}}})=E_{\delta}+\liminf_{j \rightarrow + \infty}\tfrac{1}{n_{k_{j}}} = E_{\delta}\; .
  \end{equation}
\end{proof}

The next natural question is whether the minimizer $(u_0,z_0)$ is unique,
apart from the trivial $U(1)$ symmetry on the nucleon part. That amounts to
ask whether the ground state is invariant with respect to the eventual
symmetries of the external potential $V$. In order to investigate such
question it is convenient to recast our minimization problem in the variables
$(u,z)$ in an equivalent way involving only the variable $u$. In order to do
that, let us state (without proof) a preparatory lemma.

\begin{lemma}
  The energy functional $\mathscr{E}: \mathscr{X} \longrightarrow \RR$ is
  continuous and G\^ateaux differentiable.
\end{lemma}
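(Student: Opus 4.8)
The plan is to split the functional as $\mathscr{E} = \mathscr{E}_0 + \mathscr{E}_I$, where $\mathscr{E}_0(u,z) = \langle u,(-\Delta+V)u\rangle_{\mathfrak{H}} + \langle z,\omega z\rangle_{\mathfrak{H}}$ is the free energy and
\[
  \mathscr{E}_I(u,z) = 2\,\Re\,\bigl\langle u, \langle \lambda_{(\cdot)}, z\rangle_{\mathfrak{H}}\, u\bigr\rangle_{\mathfrak{H}} = 2\,\Re\int_{\mathbb{R}^{2d}}\overline{\lambda_x(k)}\,z(k)\,\lvert u(x)\rvert^2\,\mathrm{d}x\,\mathrm{d}k\,,
\]
and to regard $\mathscr{X}$ as the Hilbert space carrying the form norm $\|(u,z)\|_{\mathscr{X}}^2 := \mathscr{E}_0(u,z)$, which is a genuine norm since $-\Delta+V\geq C_0>0$ and $\omega\geq m>0$ by \eqref{hyp:pot}--\eqref{hyp:omega}, and is equivalent to the graph norm of $\sqrt{-\Delta+V}\oplus\sqrt{\omega}$. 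With this normalization $\mathscr{E}_0$ is the squared norm of $\mathscr{X}$, so it is trivially continuous; and from the elementary identity $\mathscr{E}_0(u+tv,z+tw) = \mathscr{E}_0(u,z) + 2t\,\Re\bigl(\langle v,(-\Delta+V)u\rangle_{\mathfrak{H}} + \langle w,\omega z\rangle_{\mathfrak{H}}\bigr) + t^2\mathscr{E}_0(v,w)$ --- whose linear term is finite on $\mathscr{X}$, being $\langle\sqrt{-\Delta+V}\,v,\sqrt{-\Delta+V}\,u\rangle + \langle\sqrt\omega\,w,\sqrt\omega\,z\rangle$ --- one reads off that $\mathscr{E}_0$ is G\^ateaux differentiable, with bounded linear differential $\mathrm{d}\mathscr{E}_0(u,z)[(v,w)] = 2\,\Re\bigl(\langle v,(-\Delta+V)u\rangle_{\mathfrak{H}} + \langle w,\omega z\rangle_{\mathfrak{H}}\bigr)$.

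The only analytic ingredient needed for the interaction term is the uniform-in-$x$ bound $\|\lambda_{(\cdot)}\|_{L^\infty(\mathbb{R}^d;\mathfrak{H})} = \|\omega^{-1/2}\chi\|_{\mathfrak{H}}<\infty$ furnished by \eqref{hyp:lambda}. It gives immediately $\lvert\mathscr{E}_I(u,z)\rvert \leq 2\,\|\lambda_{(\cdot)}\|_{L^\infty(\mathbb{R}^d;\mathfrak{H})}\,\|z\|_{\mathfrak{H}}\,\|u\|_{\mathfrak{H}}^2$, so $\mathscr{E}_I$ is locally bounded on $\mathscr{X}$, the embedding $\mathscr{X}\hookrightarrow\mathscr{Z}$ being continuous (alternatively one may invoke \eqref{eq.estenrg1}). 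For continuity I would take $(u_n,z_n)\to(u,z)$ in $\mathscr{X}$, hence in $\mathscr{Z}$, decompose
\[
  \mathscr{E}_I(u_n,z_n) - \mathscr{E}_I(u,z) = 2\,\Re\int_{\mathbb{R}^{2d}}\overline{\lambda_x(k)}\Bigl[z_n(k)\bigl(\lvert u_n(x)\rvert^2-\lvert u(x)\rvert^2\bigr) + \bigl(z_n(k)-z(k)\bigr)\lvert u(x)\rvert^2\Bigr]\mathrm{d}x\,\mathrm{d}k\,,
\]
and bound the first summand by $\|\lambda_{(\cdot)}\|_{L^\infty(\mathbb{R}^d;\mathfrak{H})}\,\|z_n\|_{\mathfrak{H}}\bigl(\|u_n\|_{\mathfrak{H}}+\|u\|_{\mathfrak{H}}\bigr)\|u_n-u\|_{\mathfrak{H}}$ --- via Cauchy--Schwarz together with $\bigl\|\lvert u_n\rvert^2-\lvert u\rvert^2\bigr\|_{L^1}\leq\bigl(\|u_n\|_{\mathfrak{H}}+\|u\|_{\mathfrak{H}}\bigr)\|u_n-u\|_{\mathfrak{H}}$ --- and the second by $\|\lambda_{(\cdot)}\|_{L^\infty(\mathbb{R}^d;\mathfrak{H})}\,\|z_n-z\|_{\mathfrak{H}}\,\|u\|_{\mathfrak{H}}^2$; both vanish in the limit.

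For G\^ateaux differentiability of $\mathscr{E}_I$ at $(u,z)$ in a direction $(v,w)\in\mathscr{X}$, I would insert $\lvert u+tv\rvert^2 = \lvert u\rvert^2 + 2t\,\Re(\bar u v) + t^2\lvert v\rvert^2$ and expand into a polynomial in $t$,
\[
  \mathscr{E}_I(u+tv,z+tw) = \mathscr{E}_I(u,z) + t\,L(v,w) + t^2 R_2 + t^3 R_3\,,
\]
with $L(v,w) = 2\,\Re\int_{\mathbb{R}^{2d}}\overline{\lambda_x(k)}\bigl(w(k)\lvert u(x)\rvert^2 + 2 z(k)\,\Re(\overline{u(x)}\,v(x))\bigr)\mathrm{d}x\,\mathrm{d}k$, and $R_2,R_3$ finite and controlled, through the bound above, by products of $\|u\|_{\mathfrak{H}},\|v\|_{\mathfrak{H}},\|z\|_{\mathfrak{H}},\|w\|_{\mathfrak{H}}$. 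Then the difference quotient tends to $L(v,w)$ as $t\to0$; the map $(v,w)\mapsto L(v,w)$ is linear; and $\lvert L(v,w)\rvert\leq 2\,\|\lambda_{(\cdot)}\|_{L^\infty(\mathbb{R}^d;\mathfrak{H})}\bigl(\|w\|_{\mathfrak{H}}\|u\|_{\mathfrak{H}}^2 + 2\|z\|_{\mathfrak{H}}\|u\|_{\mathfrak{H}}\|v\|_{\mathfrak{H}}\bigr)\leq C(u,z)\,\|(v,w)\|_{\mathscr{X}}$, so it is a bounded linear functional on $\mathscr{X}$. Combining with the free part yields $\mathrm{d}\mathscr{E}(u,z)[(v,w)] = 2\,\Re\bigl(\langle v,(-\Delta+V)u\rangle_{\mathfrak{H}} + \langle w,\omega z\rangle_{\mathfrak{H}}\bigr) + L(v,w)$. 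I do not expect any genuine obstacle: the only point requiring care is the cubic $\lvert u\rvert^2$-dependence of $\mathscr{E}_I$, which forces systematic use of the algebraic identity for $\lvert u+tv\rvert^2$ and of the uniform-in-$x$ bound on $\|\lambda_x\|_{\mathfrak{H}}$ from \eqref{hyp:lambda}, together with the observation that $\mathscr{E}_I$ and its differential are already controlled by the weaker $\mathscr{Z}$-norm of their arguments.
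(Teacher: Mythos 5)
Your proof is correct: the paper states this lemma \emph{without proof}, and your splitting $\mathscr{E}=\mathscr{E}_0+\mathscr{E}_I$ together with the uniform bound $\sup_{x}\lVert \lambda_x\rVert_{\mathfrak{H}}=\lVert\omega^{-1/2}\chi\rVert_{\mathfrak{H}}<\infty$ and the polynomial-in-$t$ expansion is precisely the routine argument the authors take for granted. The real-linear differential you obtain agrees, using $\Re(\bar v u)=\Re(\bar u v)$ and $\Re(\bar y\lambda_x)=\Re(\bar\lambda_x y)$, with the formula \eqref{eq:gateauxderenergy} displayed immediately after the lemma.
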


The derivative of $\mathscr{E}$ yields:
\begin{equation}
  \label{eq:gateauxderenergy}
  \begin{split}
    \frac{\der}{\der \alpha}\bigg|_{\alpha =0} \mathscr{E}(u + \alpha v, z + \alpha y) =  2 \Re \langle u ,(-\Delta + V)v \rangle_{\mathfrak{H}} + 2 \Re \langle z, \omega y \rangle_{\mathfrak{H}} +2\Re\bigl\langle u , \langle y, \lambda_x \rangle_{\mathfrak{H}} u \bigr\rangle_{\mathfrak{H}} \\+ 2\Re \bigl\langle v, 2\Re \langle \lambda_x, z \rangle_{\mathfrak{H}} u \bigr\rangle_{\mathfrak{H}}\; .
  \end{split}
\end{equation}
The above is true for any variation $(v,y) \in \mathscr{X}$, and in
particular for $v=0$:
\begin{equation}\label{eq:criticalpoints}
  \frac{\der}{\der \alpha}\bigg|_{\alpha =0} \mathscr{E}(u, z + \alpha y) =  2 \Re \langle z, \omega y \rangle_{\mathfrak{H}} + 2\Re\bigl\langle u , \langle y, \lambda_x \rangle_{\mathfrak{H}}u \bigr\rangle_{\mathfrak{H}}\; .
\end{equation}

\begin{proposition}\label{prop:equivminproblems}
  If $(u_0,z_0) \in \mathscr{X}$ is a minimizer of the energy functional
  $\mathscr{E}(u,z)$ under the constraint $\lVert u_0
  \rVert_{\mathfrak{H}}^{}=\delta$, then
  \begin{equation}\label{eq:z_0expression}
    z_0(\,\cdot\,) = -\frac{\chi}{\omega^{3/2}}(-\,\cdot\,)\;  \overline{\widehat{|u_0|^2} }(\,\cdot\,)\; .
  \end{equation}
  Furthermore,
  \begin{equation}
    \mathscr{E}(u_0,z_0)=\inf_{\substack{(u,z)\in \mathscr{X}\\\lVert u  \rVert_{\mathfrak{H}}^{}=\delta}} \mathscr{E}(u,z) = \inf_{\substack{u \in \mathscr{D}(\sqrt{-\Delta  +V}) \\ \|u\|_{\mathfrak{H}}= \delta}} \mathcal{E}(u)=\mathcal{E}(u_0)\; ,
  \end{equation}
  where $\mathcal{E}$ is the Hartree functional
  \begin{equation}
    \label{eq:hartree}
    \mathcal{E}(u) = \langle u, (-\Delta + V)  u \rangle_{\mathfrak{H}} - \int_{\mathbb{R}^{2d}}^{} \lvert u(x)  \rvert_{}^2 \,W(x-y)\,\lvert u(y)  \rvert_{}^2 \mathrm{d}x \mathrm{d}y
  \end{equation}
  with pair potential $W\in \mathscr{S}(\mathbb{R}^d,\mathbb{R})$ satisfying
  \begin{equation*}
    W=\widehat{\bigg( \frac{\bar{\chi}(\,\cdot\,)\chi(-\,\cdot\,)}{\omega^2}\bigg)}\; .
  \end{equation*}
\end{proposition}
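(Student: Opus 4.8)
The plan is to solve the constrained minimization in two steps: first minimize over the Klein--Gordon variable $z$, which is \emph{unconstrained} since the condition $\lVert u\rVert_{\mathfrak{H}}=\delta$ involves only $u$, and then minimize over $u$. The first step is a Gaussian-type optimization that can be performed explicitly by completing the square, and it produces both \eqref{eq:z_0expression} and the Hartree functional $\mathcal{E}$. Precisely, using \eqref{hyp:lambda} and Fubini, for every $(u,z)\in\mathscr{X}$ the interaction term of $\mathscr{E}$ equals $2\Re\langle z,G_u\rangle_{\mathfrak{H}}$, where
\begin{equation*}
  G_u(k):=\int_{\mathbb{R}^d}\lvert u(x)\rvert^2\,\lambda_x(k)\,\mathrm{d}x=\omega^{-1/2}(k)\,\chi(k)\,\widehat{\lvert u\rvert^2}(k)\; .
\end{equation*}
Since $\chi\in\mathscr{C}_0^\infty(\mathbb{R}^d)$ and $\widehat{\lvert u\rvert^2}\in L^\infty$, the function $G_u$ lies in $\mathfrak{H}$ with support contained in $\supp\chi$, so $\omega^{-1/2}G_u$ and $\omega^{-1}G_u$ belong to $\mathscr{D}(\sqrt{\omega})$. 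Completing the square then gives the identity, valid on all of $\mathscr{X}$,
\begin{equation*}
  \mathscr{E}(u,z)=\langle u,(-\Delta+V)u\rangle_{\mathfrak{H}}+\bigl\lVert\omega^{1/2}z+\omega^{-1/2}G_u\bigr\rVert_{\mathfrak{H}}^{2}-\langle G_u,\omega^{-1}G_u\rangle_{\mathfrak{H}}\; ,
\end{equation*}
so that, for each fixed $u$, the map $z\mapsto\mathscr{E}(u,z)$ is finite on $\mathscr{D}(\sqrt{\omega})$, strictly convex, and attains its unique minimum at $z=-\omega^{-1}G_u=-\omega^{-3/2}\chi\,\widehat{\lvert u\rvert^2}$; for $u=u_0$ this is exactly \eqref{eq:z_0expression}. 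Alternatively, $z_0=-\omega^{-1}G_{u_0}$ can be obtained directly from the critical point equation \eqref{eq:criticalpoints}, which must vanish for all $y\in\mathscr{D}(\sqrt{\omega})$.

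For the reduced functional, I compute the last term of the displayed identity. Using $\lvert G_u(k)\rvert^2=\omega^{-1}(k)\lvert\chi(k)\rvert^2\,\lvert\widehat{\lvert u\rvert^2}(k)\rvert^2$ together with $\lvert\widehat{\lvert u\rvert^2}(k)\rvert^2=\int_{\mathbb{R}^{2d}}\lvert u(x)\rvert^2\lvert u(y)\rvert^2\,e^{ik\cdot(x-y)}\,\mathrm{d}x\,\mathrm{d}y$, Fubini gives
\begin{equation*}
  \langle G_u,\omega^{-1}G_u\rangle_{\mathfrak{H}}=\int_{\mathbb{R}^{2d}}\lvert u(x)\rvert^2\,W(x-y)\,\lvert u(y)\rvert^2\,\mathrm{d}x\,\mathrm{d}y\; ,
\end{equation*}
where $W$ is the Fourier transform of $\lvert\chi\rvert^2\omega^{-2}$, i.e. $W=\widehat{\bigl(\bar\chi(\cdot)\chi(-\cdot)\,\omega^{-2}\bigr)}$ in the convention of \eqref{eq:skg}. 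Since the symbol is in $\mathscr{C}_0^\infty(\mathbb{R}^d)$, $W\in\mathscr{S}(\mathbb{R}^d)$, and it is real-valued because the symbol $g$ satisfies $\overline{g(-k)}=g(k)$. Plugging $z=-\omega^{-1}G_u$ and this expression into the completed-square identity yields $\mathscr{E}(u,-\omega^{-1}G_u)=\mathcal{E}(u)$, with $\mathcal{E}$ as in \eqref{eq:hartree}.

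It remains to assemble the chain of equalities. Minimizing the completed-square identity first over $z\in\mathscr{D}(\sqrt{\omega})$ — the minimum $\mathcal{E}(u)$ being attained at the admissible point $-\omega^{-1}G_u$ — and then over $\{u\in\mathscr{D}(\sqrt{-\Delta+V}):\lVert u\rVert_{\mathfrak{H}}=\delta\}$ shows that the infimum of $\mathscr{E}$ over $\{(u,z)\in\mathscr{X}:\lVert u\rVert_{\mathfrak{H}}=\delta\}$ equals the infimum of $\mathcal{E}$ over $\{u\in\mathscr{D}(\sqrt{-\Delta+V}):\lVert u\rVert_{\mathfrak{H}}=\delta\}$. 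If now $(u_0,z_0)$ is a minimizer of $\mathscr{E}$ under $\lVert u_0\rVert_{\mathfrak{H}}=\delta$, then $\mathscr{E}(u_0,z_0)\leq\mathscr{E}(u_0,-\omega^{-1}G_{u_0})$, and strict convexity of $z\mapsto\mathscr{E}(u_0,z)$ forces $z_0=-\omega^{-1}G_{u_0}$, which is \eqref{eq:z_0expression}; therefore $\mathcal{E}(u_0)=\mathscr{E}(u_0,z_0)$ is the common value of the two infima, and $u_0$ minimizes $\mathcal{E}$. The only mildly delicate points are the domain bookkeeping that makes the completed-square identity a genuine equality on $\mathscr{X}$, and the check that $W$ is Schwartz and real-valued; beyond that I do not anticipate any real obstacle.
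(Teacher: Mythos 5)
Your argument is correct and reaches the same reduction to the Hartree functional, but the mechanism differs from the paper's in how \eqref{eq:z_0expression} is obtained. The paper derives the relation between $z_0$ and $u_0$ from the first-order stationarity condition in the $z$-direction, i.e.\ the G\^ateaux-derivative identity \eqref{eq:criticalpoints}, then substitutes the constraint \eqref{eq:const-eq} into $\mathscr{E}$ to get $\mathcal{E}$, and closes with the inequality chain $\mathscr{E}(u_0,z_0)=\inf\mathscr{E}\leq \inf_{\eqref{eq:const-eq}}\mathscr{E}=\inf\mathcal{E}\leq\mathcal{E}(u_0)$. You instead exploit the exact quadratic structure in $z$: the completed-square identity
\begin{equation*}
  \mathscr{E}(u,z)=\langle u,(-\Delta+V)u\rangle_{\mathfrak{H}}+\bigl\lVert\omega^{1/2}z+\omega^{-1/2}G_u\bigr\rVert_{\mathfrak{H}}^{2}-\langle G_u,\omega^{-1}G_u\rangle_{\mathfrak{H}}
\end{equation*}
(the same device the paper uses for the lower bound in \cref{lem:functionalineq}) lets you minimize out $z$ explicitly for each fixed $u$, and the vanishing of the square (equivalently strict convexity) forces $z_0=-\omega^{-1}G_{u_0}$ with no differentiability discussion, yielding the whole chain of equalities in one stroke together with uniqueness of the optimal $z$ for fixed $u$. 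Your domain bookkeeping ($G_u\in\mathfrak{H}$ with compact support in $k$, hence $\omega^{-1}G_u\in\mathscr{D}(\sqrt{\omega})$, so the competitor is admissible and every term in the identity is finite) is exactly what is needed. What your route buys is a purely algebraic, self-contained argument; what the paper's route buys is that the Euler--Lagrange identity \eqref{eq:criticalpoints} is in any case needed later (\cref{lem.lagmul}, \cref{lemma:1}), so extracting \eqref{eq:z_0expression} from it is economical there. The only point to flag is notational: your computation gives $W$ as the Fourier transform of $\lvert\chi\rvert^{2}\omega^{-2}$, whereas the statement writes $\bar{\chi}(\cdot)\chi(-\cdot)\omega^{-2}$; these agree when $\chi(-k)=\chi(k)$ on the support of $\chi$, and in any case only the even part of $W$ contributes to the Hartree term, so this mirrors a convention-level discrepancy already present in the paper (compare $\chi(-k)$ in \eqref{eq:z_0expression} with the $\chi(k)$ produced by the critical-point computation) and is not a gap in your proof.
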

\begin{proof}
  According to \eqref{eq:criticalpoints}, $(u_0,z_0)\in\mathscr{X}$ are
  critical points of the functional $\mathscr{E}(u,z)$ and have to satisfy,
  for all $y\in \mathscr{D}(\sqrt{\omega})$,
  \begin{equation}
    \Re \bigl\langle y , \omega z_0 + \langle u_0, \lambda_x u_0 \rangle_{\mathfrak{H}_x} \bigr\rangle_{\mathfrak{H}} = 0\; .
  \end{equation}
  This implies
  \begin{align*}
    z_0 (k) = -\omega^{-1}(k) \langle u_0, \lambda_x (k) u_0 \rangle_{\mathfrak{H}_x} =  -\frac{\chi(-k)}{\omega^{3/2}(k)} \widehat{|u_0|^2} (-k)\in\mathscr{D}(\sqrt{\omega})\; ,
  \end{align*}
  and proves \eqref{eq:z_0expression}. Computing $\mathscr{E}(u,z)$ with the
  constraint
  \begin{equation}
    \label{eq:const-eq}
    z(\,\cdot\,) = -\frac{\chi}{\omega^{3/2}}(-\,\cdot\,) \;\widehat{|u|^2} (-\,\cdot\,)\; ,
  \end{equation}
  yields
  \begin{align*}
    \mathscr{E}(u,z)=\mathcal{E}(u)\;.
  \end{align*}
  Hence, it follows that
  \begin{align*}
    \mathscr{E}(u_0,z_0)= \inf_{\substack{(u,z)\in \mathscr{X}\\\lVert u  \rVert_{\mathfrak{H}}^{}=\delta}} \mathscr{E}(u,z) \leq \inf_{\substack{u\in \mathscr{D}(\sqrt{-\Delta  +V})\\ \eqref{eq:const-eq}\,, \,\lVert u  \rVert_{\mathfrak{H}}^{}=\delta}} \mathscr{E}(u,z)=
    \inf_{\substack{u \in \mathscr{D}(\sqrt{-\Delta  +V})\\ \|u\|_{\mathfrak{H}}= \delta}} \mathcal{E}(u)\leq  \mathcal{E}(u_0)\; .
  \end{align*}
\end{proof}

\begin{lemma}
  \label{lem.lagmul}
  If $u_0\in \mathscr{D}(\sqrt{-\Delta +V})$ is a minimizer of the Hartree energy
  \eqref{eq:hartree},
  \begin{equation}
    E_\delta=  \inf_{\substack{u \in \mathscr{D}(\sqrt{-\Delta  +V}) \\ \|u\|_{\mathfrak{H}}= \delta}} \mathcal{E}(u)\; ,
  \end{equation}
  then $u_0\in \mathscr{D}(-\Delta +V)$ and there exists $\lambda=\delta^{-2} E_\delta$ such that
  \begin{equation}
    (-\Delta  +V ) u_0- W*|u_0|^2 u_0= \lambda u_0\,.
  \end{equation}
\end{lemma}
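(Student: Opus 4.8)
The plan is to derive the Euler--Lagrange equation for the constrained minimization problem by the Lagrange multiplier rule on the form-domain Hilbert space $\mathscr{X}_1:=\mathscr{D}(\sqrt{-\Delta+V})$, equipped with the graph norm of $\sqrt{-\Delta+V+1}$, and then to bootstrap the regularity of $u_0$ from the resulting weak identity.

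First I would record that $\mathcal{E}$ is smooth on $\mathscr{X}_1$: the quadratic form $u\mapsto\langle u,(-\Delta+V)u\rangle_{\mathfrak{H}}$ is bounded (it is the leading part of the graph norm), hence smooth, while $W\in\mathscr{S}(\RR^d,\RR)\subset L^\infty$ makes the quartic interaction term bounded, $\bigl|\int_{\RR^{2d}}|u(x)|^2W(x-y)|u(y)|^2\,\mathrm{d}x\,\mathrm{d}y\bigr|\le\|W\|_{L^\infty}\|u\|_{\mathfrak{H}}^4$, and polynomial, hence smooth. (Alternatively one invokes the G\^ateaux-differentiability lemma stated above together with \cref{prop:equivminproblems}.) A direct computation gives, for $v\in\mathscr{X}_1$,
\[
\mathrm{d}\mathcal{E}(u_0)[v]=2\Re\langle v,(-\Delta+V)u_0\rangle_{\mathfrak{H}}-2\Re\langle v,(W*|u_0|^2)u_0\rangle_{\mathfrak{H}},
\]
the first term being read as the sesquilinear form of $-\Delta+V$. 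The constraint functional $G(u)=\|u\|_{\mathfrak{H}}^2-\delta^2$ is smooth with $\mathrm{d}G(u_0)[v]=2\Re\langle v,u_0\rangle_{\mathfrak{H}}$, which is a non-zero real-linear functional on $\mathscr{X}_1$ since $\mathrm{d}G(u_0)[u_0]=2\delta^2>0$. Thus the Lagrange multiplier rule yields $\lambda\in\RR$ with $\mathrm{d}\mathcal{E}(u_0)=\lambda\,\mathrm{d}G(u_0)$; replacing $v$ by $iv$ to capture both real and imaginary parts, this reads
\[
\langle\sqrt{-\Delta+V}\,v,\sqrt{-\Delta+V}\,u_0\rangle_{\mathfrak{H}}=\langle v,\lambda u_0+(W*|u_0|^2)u_0\rangle_{\mathfrak{H}}\qquad\text{for all }v\in\mathscr{X}_1.
\]

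Next I would upgrade the regularity of $u_0$. Since $|u_0|^2\in L^1(\RR^d)$ and $W\in L^\infty$, Young's inequality gives $W*|u_0|^2\in L^\infty(\RR^d)$, so $g:=\lambda u_0+(W*|u_0|^2)u_0\in\mathfrak{H}$, and $v\mapsto\langle v,g\rangle_{\mathfrak{H}}$ is $\mathfrak{H}$-bounded. The displayed identity then says that $u_0$ lies in the form domain of $-\Delta+V$ and the associated sesquilinear form, tested against $u_0$, is represented by $g\in\mathfrak{H}$; by the first representation theorem for the self-adjoint operator $-\Delta+V$ this forces $u_0\in\mathscr{D}(-\Delta+V)$ together with $(-\Delta+V)u_0=g$, i.e.
\[
(-\Delta+V)u_0-(W*|u_0|^2)u_0=\lambda u_0.
\]

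Finally, to identify the multiplier I would pair this equation with $u_0$ in $\mathfrak{H}$:
\[
\langle u_0,(-\Delta+V)u_0\rangle_{\mathfrak{H}}-\int_{\RR^{2d}}|u_0(x)|^2W(x-y)|u_0(y)|^2\,\mathrm{d}x\,\mathrm{d}y=\lambda\|u_0\|_{\mathfrak{H}}^2=\lambda\delta^2,
\]
and observe that, by the definition \eqref{eq:hartree} of the Hartree functional, the left-hand side equals $\mathcal{E}(u_0)=E_\delta$; hence $\lambda=\delta^{-2}E_\delta$, as claimed. I expect the only genuinely technical point to be the passage, via the first representation theorem, from the weak Euler--Lagrange identity valid on the form domain to membership of $u_0$ in the operator domain $\mathscr{D}(-\Delta+V)$; the rest is routine once the smoothness of $\mathcal{E}$ on $\mathscr{X}_1$ — which rests on the boundedness of $W$ — has been noted.
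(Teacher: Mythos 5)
Your overall route coincides with the paper's: the proof given there is a one-line appeal to the generalized Lagrange multiplier theorem on Banach spaces (Lusternik--Sobolev), and your argument is precisely that theorem carried out by hand on the form domain $\mathscr{D}(\sqrt{-\Delta+V})$, supplemented by two steps the citation leaves implicit, namely the upgrade from the weak Euler--Lagrange identity to $u_0\in\mathscr{D}(-\Delta+V)$ via the first representation theorem (your use of Young's inequality, $W\in L^\infty$ and $|u_0|^2\in L^1$ giving $W*|u_0|^2\in L^\infty$ and hence a right-hand side in $\mathfrak{H}$, is correct), and the identification of the multiplier by pairing the equation with $u_0$. Structurally there is nothing to object to, and these added details are genuinely useful.

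There is, however, one step that does not follow from \eqref{eq:hartree} as written: the displayed G\^ateaux derivative of the quartic term. Setting $Q(u)=\int_{\mathbb{R}^{2d}}|u(x)|^2W(x-y)|u(y)|^2\,\mathrm{d}x\,\mathrm{d}y$, a direct computation gives $\mathrm{d}Q(u_0)[v]=4\Re\langle v,(W_{\mathrm s}*|u_0|^2)u_0\rangle_{\mathfrak{H}}$ with $W_{\mathrm s}(x)=\tfrac{1}{2}\bigl(W(x)+W(-x)\bigr)$ (so $W_{\mathrm s}=W$ when $\chi$ is real and $W$ is even), not $2\Re\langle v,(W*|u_0|^2)u_0\rangle_{\mathfrak{H}}$ as in your proposal. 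Carried correctly through your argument, the multiplier rule yields $(-\Delta+V)u_0-2(W_{\mathrm s}*|u_0|^2)u_0=\lambda u_0$, and pairing with $u_0$ then gives $\lambda\delta^2=E_\delta-\int_{\mathbb{R}^{2d}}|u_0(x)|^2W_{\mathrm s}(x-y)|u_0(y)|^2\,\mathrm{d}x\,\mathrm{d}y$, so both the coefficient of the nonlinearity and the value of $\lambda$ come out differently from what you record; the same factor $2$ appears if one instead derives the stationarity condition from \eqref{eq:criticalpoints} and \eqref{eq:z_0expression}, where the induced Klein--Gordon potential is $2\Re\langle\lambda_x,z_0\rangle_{\mathfrak{H}}=-2\,(W*|u_0|^2)(x)$. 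In other words, your constants match the statement of \cref{lem.lagmul} only because of this differentiation slip, and the one point you must repair --- or reconcile with the normalization of $W$ implicit in that statement --- is precisely the derivative of the quartic term. Everything else (smoothness on the form domain, nondegeneracy of the constraint, the multiplier rule, the regularity bootstrap, and the pairing argument) is sound.
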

\begin{proof}
  Follows by the generalized Lagrange multiplier theorem on Banach spaces,
  see \citep[see \emph{e.g.}][Theorem 2 \S 8.11]{Lusternik1974}.
\end{proof}

The uniqueness of ground states for the Hartree functional $\mathcal{E}$ has
been thoroughly investigated, and it is related to symmetry breaking
phenomena \citep[see][and references therein
contained]{aschbacher2002jmp,correggi2007jmp,correggi2012jmp,seiringer2011cmp}.
\begin{lemma}
  \label{lemma:1}
  There exists $\delta^{*}>0$ such that for any $0<\delta<\delta^{*}$, the Hartree
  functional $\mathcal{E}(\cdot)$ given in \eqref{eq:hartree} with the constraint
  $\lVert \cdot \rVert_{\mathfrak{H}}^{}=\delta$ admits a unique minimizer up to $U(1)$
  symmetry.
\end{lemma}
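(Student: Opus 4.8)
The plan is to treat the quartic term as a small perturbation of the linear Schr\"odinger ground-state problem, which smallness of $\delta$ makes legitimate. Rescaling $u=\delta v$ with $\lVert v\rVert_{\mathfrak{H}}=1$ gives $\mathcal{E}(\delta v)=\delta^2\mathcal{G}_\delta(v)$, where
$$\mathcal{G}_\delta(v):=\langle v,(-\Delta+V)v\rangle_{\mathfrak{H}}-\delta^2 D(v)\,,\qquad D(v):=\int_{\R^{2d}}|v(x)|^2\,W(x-y)\,|v(y)|^2\,\diff x\,\diff y\,,$$
so minimizing $\mathcal{E}$ under $\lVert u\rVert_{\mathfrak{H}}=\delta$ amounts to minimizing $\mathcal{G}_\delta$ under $\lVert v\rVert_{\mathfrak{H}}=1$. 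Since $V$ is confining by \eqref{hyp:pot}, the operator $-\Delta+V$ has compact resolvent; let $\mu_0$ be its lowest eigenvalue, which is simple, and $\phi_0>0$ the associated $L^2$-normalized eigenfunction. Because $W\in\mathscr{S}(\R^d,\R)$ one has $\lVert W*|v|^2\rVert_{L^\infty}\le\lVert\widehat{W}\rVert_{L^1}$, hence $|D(v)|\le\lVert\widehat{W}\rVert_{L^1}$ uniformly for $\lVert v\rVert_{\mathfrak{H}}=1$.

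First I would show that, after fixing the $U(1)$ phase, every minimizer $v_\delta$ of $\mathcal{G}_\delta$ converges to $\phi_0$ in the form domain $\mathscr{D}(\sqrt{-\Delta+V})$ as $\delta\to0$. Testing with $\phi_0$ gives $\mathcal{G}_\delta(v_\delta)\le\mu_0-\delta^2 D(\phi_0)$, while $\mathcal{G}_\delta(v_\delta)\ge\langle v_\delta,(-\Delta+V)v_\delta\rangle_{\mathfrak{H}}-\delta^2\lVert\widehat{W}\rVert_{L^1}\ge\mu_0-\delta^2\lVert\widehat{W}\rVert_{L^1}$; hence $\mathcal{G}_\delta(v_\delta)\to\mu_0$, and, $D(v_\delta)$ being bounded, also $\langle v_\delta,(-\Delta+V)v_\delta\rangle_{\mathfrak{H}}\to\mu_0$. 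So $\{v_\delta\}$ is a minimizing family for $\min_{\lVert v\rVert_{\mathfrak{H}}=1}\langle v,(-\Delta+V)v\rangle_{\mathfrak{H}}$, and the compactness argument of \cref{prop:existencemin} (compactness of $(-\Delta+V)^{-1/2}$ and lower semicontinuity) forces $v_\delta$ to converge strongly in $\mathfrak{H}$, up to phase, to the linear ground state $\phi_0$; since moreover $\langle v_\delta,(-\Delta+V)v_\delta\rangle_{\mathfrak{H}}\to\mu_0=\langle\phi_0,(-\Delta+V)\phi_0\rangle_{\mathfrak{H}}$, the convergence is in fact in $\mathscr{D}(\sqrt{-\Delta+V})$.

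Next I would establish uniqueness in a small form-norm ball around $\phi_0$. By \cref{lem.lagmul} every minimizer solves $(-\Delta+V)v-\delta^2(W*|v|^2)v=\mu v$ with $\mu=\delta^{-2}E_\delta$, and by the first step $\mu\to\mu_0$. With the phase fixed as above, write $v=a\phi_0+w$, $w\perp\phi_0$, $a=\sqrt{1-\lVert w\rVert_{\mathfrak{H}}^2}$, and let $R(\mu)$ denote the reduced resolvent of $-\Delta+V$ at $\mu$ (the inverse of $-\Delta+V-\mu$ on $\{\phi_0\}^\perp$), which is uniformly bounded for $\mu$ near $\mu_0$ by the spectral gap. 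Projecting the equation onto $\phi_0$ and onto $\{\phi_0\}^\perp$ gives
$$\mu=\mu_0-\delta^2 a^{-1}\langle\phi_0,(W*|v|^2)v\rangle_{\mathfrak{H}}\,,\qquad w=\delta^2\,R(\mu)\,P^\perp\bigl[(W*|v|^2)v\bigr]\,.$$
Using $\lVert W*|v|^2\rVert_{W^{1,\infty}}\le\lVert W\rVert_{W^{1,1}}$ together with the energy bound to control $(W*|v|^2)v$ in $\mathscr{D}(\sqrt{-\Delta+V})$, the right-hand side maps a small ball $\{\,\lVert w\rVert_{\mathscr{D}(\sqrt{-\Delta+V})}+|\mu-\mu_0|\le\rho\,\}$ into itself and is a contraction there as soon as $\delta<\delta^{*}$; hence it has a unique fixed point $(w_\delta,\mu_\delta)$. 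Moreover the real subspace is invariant under this map — $\phi_0$, $W*|v|^2$ and $R(\mu)$ at real $\mu$ all preserve real-valuedness — so $w_\delta$, and with it $v_\delta=a_\delta\phi_0+w_\delta$, is real. Shrinking $\delta^{*}$ if needed so that the convergence of the previous step puts every minimizer (up to phase) into this ball, one concludes that for $0<\delta<\delta^{*}$ all minimizers of $\mathcal{G}_\delta$ — equivalently, via the rescaling, of $\mathcal{E}$ under $\lVert u\rVert_{\mathfrak{H}}=\delta$ — coincide up to a $U(1)$ phase.

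The main obstacle I anticipate is the contraction step: it must be run in the form domain $\mathscr{D}(\sqrt{-\Delta+V})$ rather than in $L^2$, with bounds on $(W*|v|^2)v$ and on $R(\mu)$ that are uniform in $\delta$, and with the coupling between the $w$-equation and the scalar $\mu$-equation handled through $a=\sqrt{1-\lVert w\rVert_{\mathfrak{H}}^2}$; the remaining estimates are routine. As an alternative, the statement may also be deduced from the literature on uniqueness of Hartree-type minimizers cited just before the lemma (e.g.\ \citep{aschbacher2002jmp,correggi2007jmp,correggi2012jmp,seiringer2011cmp}), applied in the small-mass regime $0<\delta<\delta^{*}$ where no symmetry breaking occurs.
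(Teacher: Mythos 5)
Your proof is correct in substance, but it takes a genuinely different route from the paper's. The paper performs the same rescaling to $\mathcal{E}_\delta$ on the unit sphere and likewise relies on \cref{lem.lagmul} and on the non-degeneracy of the ground state of $-\Delta+V$ (via \eqref{hyp:pot} and Faris--Simon), but it then argues globally rather than locally: for every unit vector $u$ it introduces the mean-field operator $H^{(u)}_\delta=-\Delta+V-\delta^2\,W*\lvert u\rvert^2$, notes that analytic perturbation theory applies uniformly in $u$ because $\lVert W*\lvert u\rvert^2\rVert_{L^\infty}\le\lVert W\rVert_{L^\infty}$, so the ground state of $H^{(u)}_\delta$ stays simple for all $\delta<\delta^{*}$, and shows that the map $u\mapsto P^{(u)}_\delta\phi_0/\lVert P^{(u)}_\delta\phi_0\rVert_{\mathfrak{H}}$ (with $P^{(u)}_\delta$ the ground-state projector of $H^{(u)}_\delta$ and $\phi_0$ the linear ground state) is a strict contraction on the whole unit sphere, whose unique fixed point must coincide, up to phase, with every minimizer. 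You instead localize first --- every minimizer is, up to phase, $O(\delta)$-close to $\phi_0$ by energy pinching and the spectral gap --- and then prove local uniqueness via a Lyapunov--Schmidt reduction and a contraction for $(w,\mu)$ near $(0,\mu_0)$. The trade-off: the paper's global contraction needs no localization but implicitly uses that a minimizer is the ground state, not merely an eigenfunction, of its own mean-field operator, whereas your version avoids that point at the price of the closeness step, which must hold uniformly over all minimizers at fixed $\delta$ --- your pinching argument does deliver this quantitatively through the gap, so there is no gap in the logic. Two small repairs: the bound $\lVert W*\lvert v\rvert^2\rVert_{W^{1,\infty}}\le\lVert W\rVert_{W^{1,1}}$ is the wrong Young pairing for $\lvert v\rvert^2\in L^1$; use $\lVert W*\lvert v\rvert^2\rVert_{W^{1,\infty}}\le\lVert W\rVert_{W^{1,\infty}}\lVert v\rVert_{\mathfrak{H}}^2$, harmless since $W\in\mathscr{S}(\mathbb{R}^d,\mathbb{R})$. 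Also, the contraction need not be run in the form domain: $R(\mu)P^{\perp}$ maps $\mathfrak{H}$ boundedly into $\mathscr{D}(\sqrt{-\Delta+V})$ uniformly for $\mu$ near $\mu_0$, and $v\mapsto(W*\lvert v\rvert^2)v$ is Lipschitz on bounded sets of $\mathfrak{H}$, so an $\mathfrak{H}\times\mathbb{R}$ contraction already suffices and the obstacle you anticipate is milder than stated. Your fallback to the literature is also legitimate: the paper itself points to \citep[Theorem 2]{aschbacher2002jmp} and then supplies the global contraction argument sketched above.
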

\begin{proof}
  The proof follows for instance by \citep[Theorem 2]{aschbacher2002jmp}
  although the assumptions there are slightly different. So for completeness,
  we sketch the main argument.  By rescaling, the original variational
  problem is put in a perturbative form,
$$
\inf_{\substack{u \in \mathscr{D}(\sqrt{-\Delta +V}) \\ \|u\|_{\mathfrak{H}}= \delta}}
\mathcal{E}(u)= \delta^2 \inf_{\substack{u \in \mathscr{D}(\sqrt{-\Delta +V}) \\
    \|u\|_{\mathfrak{H}}= 1}} \mathcal{E}_\delta(u)\,,
$$ 
with \begin{eqnarray*} \mathcal{E}_\delta(u)&=& \langle u, (-\Delta + V) u
  \rangle_{\mathfrak{H}} - \delta^2\int_{\mathbb{R}^{2d}}^{} \lvert u(x)
  \rvert_{}^2 \,W(x-y)\,\lvert u(y) \rvert_{}^2 \mathrm{d}x \mathrm{d}y\,.
\end{eqnarray*}
Define, for $u\in \mathfrak{H}$, the self-adjoint operator
$$
H_\delta^{(u)}= -\Delta + V- \delta^2 \,\bigl(W*|u|^2\bigr)\,.
$$
Thanks to the hypothesis \eqref{hyp:pot}, the ground state energy of the
Schr\"odinger operator $H_0^{(u)}=-\Delta+V$ is non-degenerate (see for
instance \cite[Theorem 1]{MR381571}). Moreover, for all $u\in \mathfrak{H},
\|u\|_\mathfrak{H}\leq 1$, $\delta\mapsto H_\delta^{(u)}$ is an analytic
family of type A with a potential satisfying
$$
\|W*|u|^2 \|_{L^\infty}\leq \|W\|_{L^\infty} \,.
$$
Hence, according to regular perturbation theory there exists $\delta^*>0$
sufficiently small and independent from $u$ such that the ground state energy
of $H_\delta^{(u)}$ stays non-degenerate for all $0<\delta<\delta^*$.  So,
let us denote by $P^{(u)}_\delta$ the one dimensional orthogonal projector on
the ground state of $H_\delta^{(u)}$.  Then using \cref{lem.lagmul}, one
remarks that any minimizer of $\mathcal{E}_\delta(\cdot)$ is up to a phase
factor a fixed point of the following map
\begin{eqnarray*}
  T: \mathcal{S}=\{u\in  \mathfrak{H}, \|u\|_\mathfrak{H}=1\} & \longrightarrow & \mathcal{S} \\
  u &\longrightarrow  & \frac{P_\delta^{(u)}\psi_0}{\|P_\delta^{(u)}\psi_0\|_\mathfrak{H}}\,,
\end{eqnarray*}
where $\psi_0$ is a ground state of $H_0^{(u)}$. Again thanks to perturbation
theory, one proves that $T$ is a well defined (strict) contraction admitting
therefore a unique fixed point. Combining this to the existence of minimizers
given in \cref{prop:equivminproblems}, proves the claimed result.

\end{proof}


Using \cref{lemma:1}, we can further characterize the Wigner measures of
quantum ground states when the classical minimizer is unique.
\begin{corollary}[of \cref{thm:2}]
  \label{cor:5}
  Consider $\big\{\Phi^{(n_k)}_{\hslash_k}\big\}_{k\in \N}$ to be a sequence of ground
  states of $H_{\hslash_k}^{(n_k)}$ such that there exists $0<\delta<\delta^{*}$ (with
  $\delta^{*}$ given in \cref{lemma:1}) such that:
  \begin{equation}
    \label{eq.gdshyp}
    N_1 \Phi^{(n_k)}_{\hslash_k}= \hslash_k \;n_k\; \Phi^{(n_k)}_{\hslash_k}\,,\qquad \lim_{k\to\infty} \hslash_k=0\,,\qquad \text{ and } \qquad  \lim_{k\to\infty} \hslash_kn_k=\delta^2\,.
  \end{equation}
  In addition, let $u_0$ be the unique minimizer modulo $U(1)$-invariance of
  $\mathcal{E}(u)$ with $\lVert u_0 \rVert_{\mathfrak{H}}^{}=\delta$. Then the corresponding
  Wigner measure is unique, and explicit:
  \begin{equation*}
    \begin{split}
      \mathscr{M}\bigl(\Phi^{(n_k)}_{\hslash_k}; k\in\N\bigr)=   \{\mu\} =   \Bigl\{  \frac{1}{2\pi} \int_0^{2\pi} \;  \delta_{e^{i\theta}u_0}\otimes \delta_{  -\frac{\chi(-\,\cdot\,)}{\omega^{3/2}} \overline{\widehat{|u_0|^2}}(\,\cdot\,)}
      \; {\rm d}\theta \Bigr\}\; .
    \end{split}
  \end{equation*}
\end{corollary}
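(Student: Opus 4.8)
The strategy is to combine the concentration statement of \cref{thm:2} with the variational identifications of \cref{prop:equivminproblems} and \cref{lemma:1}, and then to exploit the $U(1)$ gauge symmetry of the Yukawa model to pin down the distribution of the Wigner measure on the resulting set of minimizers. First, $\mathscr{M}\bigl(\Phi^{(n_k)}_{\hslash_k};k\in\N\bigr)\neq\varnothing$, because ground states satisfy \eqref{hyp:states}; together with $N_1\Phi^{(n_k)}_{\hslash_k}=\hslash_kn_k\Phi^{(n_k)}_{\hslash_k}$ and $\hslash_kn_k\to\delta^2$ this yields the uniform number bound \eqref{number-cond}, hence a Wigner measure exists along some subsequence. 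Fix any $\mu$ in this set and, relabelling, a subsequence along which $\Phi^{(n_k)}_{\hslash_k}\rightharpoondown\mu$. By \cref{thm:2}, $\mu$ is concentrated on the set of minimizers $\{(u,z)\in\mathscr{X}:\mathscr{E}(u,z)=E_\delta,\ \|u\|_{\mathfrak{H}}=\delta\}$. By \cref{prop:equivminproblems}, every such minimizer $(u,z)$ has $z$ given by \eqref{eq:z_0expression} in terms of $u$, with $u$ a minimizer of the Hartree functional $\mathcal{E}$ under $\|u\|_{\mathfrak{H}}=\delta$; since $0<\delta<\delta^{*}$, \cref{lemma:1} forces $u=e^{i\theta}u_0$ for some $\theta$, and because $|e^{i\theta}u_0|^2=|u_0|^2$ the field component is always the single function $z_0$ of \eqref{eq:z_0expression} attached to $u_0$. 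Hence $\mu$ is concentrated on the circle $C:=\{(e^{i\theta}u_0,z_0):\theta\in\RR/2\pi\ZZ\}$.

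It remains to identify $\mu$ as the uniform measure on $C$, and for this I would use gauge invariance. Set $U_\theta:=e^{\frac{i\theta}{\hslash}N_1}$, a unitary on $\mathscr{H}$. On the sector $\mathscr{H}^{(n_k)}$ the operator $N_1$ acts as the scalar $\hslash_kn_k$, so $U_\theta\Phi^{(n_k)}_{\hslash_k}=e^{i\theta n_k}\Phi^{(n_k)}_{\hslash_k}$ is a pure phase; on the other hand a direct computation with the canonical commutation relations gives $U_\theta\,W_\hslash(\eta_1\oplus\eta_2)\,U_\theta^{*}=W_\hslash(e^{i\theta}\eta_1\oplus\eta_2)$ (the $a_\hslash^{\natural}$-part of the Weyl operator commutes with $N_1$). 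Combining these,
\begin{equation*}
  \bigl\langle \Phi^{(n_k)}_{\hslash_k},W_{\hslash_k}(e^{i\theta}\eta_1\oplus\eta_2)\,\Phi^{(n_k)}_{\hslash_k}\bigr\rangle=\bigl\langle \Phi^{(n_k)}_{\hslash_k},W_{\hslash_k}(\eta_1\oplus\eta_2)\,\Phi^{(n_k)}_{\hslash_k}\bigr\rangle
\end{equation*}
for all $\theta\in\RR$ and $\eta_1,\eta_2\in\mathfrak{H}$. Letting $k\to\infty$ and using \eqref{wigner} together with $\Re\langle e^{i\theta}\eta_1,u\rangle_{\mathfrak{H}}=\Re\langle\eta_1,e^{-i\theta}u\rangle_{\mathfrak{H}}$, the characteristic functions of $\mu$ and of its pushforward $(r_\theta)_{*}\mu$ under $r_\theta(u,z):=(e^{i\theta}u,z)$ coincide; by the injectivity of the Fourier transform of Borel probability measures on $\mathscr{Z}$ (see \cite{ammari2008ahp}) this forces $(r_\theta)_{*}\mu=\mu$ for every $\theta\in\RR$.

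To finish, note that $u_0\neq0$ (since $\|u_0\|_{\mathfrak{H}}=\delta>0$), so $\phi:\RR/2\pi\ZZ\to C$, $\phi(\theta)=(e^{i\theta}u_0,z_0)$, is a continuous bijection from a compact space, hence a homeomorphism, and it conjugates the rotation $\theta\mapsto\theta+\psi$ to $r_\psi$. Thus $(\phi^{-1})_{*}\mu$ is a rotation-invariant Borel probability measure on the circle, which can only be the normalized Haar measure $\tfrac{d\theta}{2\pi}$; pushing forward by $\phi$ gives $\mu=\frac{1}{2\pi}\int_0^{2\pi}\delta_{e^{i\theta}u_0}\otimes\delta_{z_0}\,d\theta$, with $z_0$ as in \eqref{eq:z_0expression}. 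Since $\mu$ was an arbitrary element of $\mathscr{M}\bigl(\Phi^{(n_k)}_{\hslash_k};k\in\N\bigr)$, this set is the singleton claimed. The only ingredient not already available is the gauge identity for the Weyl operators, whose verification is routine; the point requiring care is that $N_1$ acts as a scalar on each fixed nucleon sector, so that $U_\theta$ merely multiplies $\Phi^{(n_k)}_{\hslash_k}$ by a phase — this is exactly what converts the gauge symmetry into rotation invariance of $\mu$, and hence into its uniformity on $C$.
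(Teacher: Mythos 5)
Your proposal is correct and follows essentially the same route as the paper: concentration of $\mu$ on the circle of minimizers via \cref{thm:2}, \cref{prop:equivminproblems} and \cref{lemma:1}, then $U(1)$-invariance of $\mu$ obtained from conjugation of the Weyl operators by the nucleon-number gauge group, which acts on the fixed-number ground states only by a phase, and finally identification with the uniform (Haar) measure on the circle. Your writing is in fact slightly more careful than the paper's on two points — you use $e^{i\theta N_1/\hslash}$ (the unscaled number operator), for which the conjugation relation $U_\theta W_\hslash(\eta_1\oplus\eta_2)U_\theta^{*}=W_\hslash(e^{i\theta}\eta_1\oplus\eta_2)$ holds exactly, and you justify the uniqueness of the rotation-invariant probability on the minimizer circle via the homeomorphism with $\RR/2\pi\ZZ$ — but these are refinements of the same argument, not a different one.
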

\begin{proof}
  According to \cref{thm:2} and Lemma \ref{lemma:1}, any Wigner measure $\mu$
  of a sequence of ground states $\big\{\Phi^{(n_k)}_{\hslash_k}\big\}_{k\in
    \N}$ satisfying \eqref{eq.gdshyp}, concentrates on the set of minimizers
$$
\mathscr{X}^0_\delta:=\Bigl\{\big(e^{i\theta} u_0, -\frac{\chi(-\,\cdot\,)}{\omega^{3/2}}
\overline{\widehat{|u_0|^2}}(\,\cdot\,)\big), \theta\in[0,2\pi[\Bigr\}\,.
$$
On the other hand, the probability measure $\mu$ is $U(1)$ invariant,
\emph{i.e.}, for all $\theta\in[0,2\pi[$
\begin{equation}
  \label{eq.invmu}
  (e^{i\theta})\, _{*}\,\mu=\mu\,.
\end{equation}
Such property is indeed a consequence of \eqref{wigner} and the following
relation which holds true at least for a subsequence:
\begin{eqnarray*}
  \widehat{(e^{i\theta})\, _{*}\,\mu} (\eta_1\oplus\eta_2)  &=& \lim_{k\to\infty} \langle e^{i\theta N_1} \Phi_{\hslash_k}^{(n_{k})} , W(\eta_1\oplus\eta_2) e^{i\theta N_1}\Phi_{\hslash_k}^{(n_{k})} \rangle_{}\\
  &=&\lim_{k\to\infty} \langle  \Phi_{\hslash_k}^{(n_{k})} , W(\eta_1\oplus\eta_2)\Phi_{\hslash_k}^{(n_{k})} \rangle_{}\\ &=& \widehat\mu(\eta_1\oplus\eta_2)\,\,,
\end{eqnarray*}
where here $\widehat{(e^{i\theta})\, _{*}\,\mu}$ and $\widehat\mu$ denote
respectively the generating functions of $(e^{i\theta})\, _{*}\,\mu$ and
$\mu$.  So, this implies that $\mu$ is translation-invariant with respect to
the variable $\theta$ and hence $\mu$ is uniformly distributed over the
circle $U(1)$. Therefore, one concludes
$$
\mu=\frac{1}{2\pi} \int_0^{2\pi} \; \delta_{e^{i\theta}u_0}\otimes \delta_{
  -\frac{\chi(-\,\cdot\,)}{\omega^{3/2}} \overline{\widehat{|u_0|^2}}(\,\cdot\,)} \; {\rm
  d}\theta \,.
$$

\end{proof}

\section{Some open problems}
\label{sec.op}
We end up this paper by discussing some open problems, in the hope they will
stimulate a renewed interest on such fundamental and mathematically
challenging topic.

\begin{enumerate}
\item \emph{Asymptotic completeness for the S-KG system:} The scattering
  theory for the S-KG equation \eqref{eq:skg} discussed in Section
  \ref{sec:3} is far from complete. In particular, it will be interesting to
  characterize the out/in classical asymptotic radiationless solutions in
  $\mathscr{K}^{\pm}_0$ and to prove the ``weak'' asymptotic completeness
 $$
 \mathscr{K}^{+}_0=\mathscr{K}^{-}_0\,,
  $$
  or the ``strong'' asymptotic completeness
  $$
  \mathscr{K}^{+}_0=\{\text{nonlinear bound states or
    solitons}\}=\mathscr{K}^{-}_0\,.
  $$
  Actually, in light of J.\ Dereziński and C.\ Gérard result
  \cite{derezinski1999rmp} for quantum asymptotic completeness, such
  statements are expected to hold true.

\item \emph{Radiative decay:} Radiative decay is the fundamental physical
  mechanism that explains relaxation of excited atoms to their radiationless
  ground states. According to the work of M.\ Hubnert and H.\ Spohn
  \cite{MR1326139}, it can be stated roughly as the convergence
  \begin{equation}
    \label{eq.op.1}
    \lim_{t\to  \infty} \langle e^{it H_\hslash^{(n)}} \Psi_\hslash^{(n)} ,\, A \, e^{it H_\hslash^{(n)}} \Psi_\hslash^{(n)}\rangle = \langle \Phi_\hslash^{(n)}, \, A \, \Phi_\hslash^{(n)}\rangle\,,
  \end{equation}
  where $A$ and $\Psi_\hslash^{(n)}$ are respectively some given observables and
  states, and $\Phi_\hslash^{(n)}$ is the ground state. For the above identity to be
  true, one assumes further that the pure point spectrum space
  $\mathscr{H}_{pp}^{(n)}=\mathds{1}_{\mathrm{pp}}(H_{\hslash}^{(n)}) \mathscr{H}^{(n)}$ is
  one dimensional. Thus, a formal semiclassical limit in \eqref{eq.op.1}
  suggests the identity
  \begin{equation}
    \label{eq.op.rad}
    \lim_{t\to\infty} \int_{\mathscr Z} a(z) \, d\mu_t(u,z)=\int_{\mathscr Z} a(z) \, d\mu_0(u,z)\,,
  \end{equation}
  where $\mu_0$ is a measure concentrated on the classical ground state of the
  S-KG equation, given for instance by \cref{cor:5}.  Proving rigorously
  \eqref{eq.op.rad} will highlight a relaxation phenomena at the level of the
  nonlinear classical system.

\item \emph{Ground states convergence:} Convergence of the ground state
  energy of the Yukawa model towards the infimum of the classical energy of
  the S-KG system is discussed in \cref{sec:semicl-limit-ground}. It is
  interesting to extend such result with the concentration property in
  \cref{thm:2} to various models of particle-field interactions, which could
  carry the following major difficulties:
  \begin{itemize}
  \item [a)] \emph{Infrared problems:} Ground states of models with massless
    fields are well studied \citep[see,
    \emph{e.g.},][]{MR1623746,MR2077252,MR2103217}.  The massless assumption
    amounts to consider $m=0$ in \eqref{hyp:omega} and it induces some lack
    of compactness with respect to the field variables. In particular, the
    proof of convergence in \cite{ammari2014jsp} fails, and a different
    argument needs to be worked out.
  \item [b)] \emph{Ultraviolet problems:} The removal of the ultraviolet
    cutoff (\emph{i.e.}, taking $\chi\equiv 1$ in assumption \eqref{hyp:lambda})
    requires a renormalization procedure, due to the nucleons' self-energy
    divergence. Here as well the semiclassical convergence of the ground
    state energy needs to be worked out with different arguments \citep[see,
    \emph{e.g.},][]{MR3912797}.
  \item [c)] \emph{Translation invariance problem:} In the absence of a
    confining potential $V$, the Yukawa model becomes translation-invariant
    and the quantum Hamiltonian as well as the classical energy can be
    fibered with respect to its total momentum. Therefore, it is natural to
    ask whether the ground state energy and ground states at fixed momentum
    converge, in the semiclassical limit, towards their classical
    counterparts.
  \end{itemize}

\item \emph{Transition and scattering amplitudes:} Is it possible to derive
  semiclassical limits for transition and scattering amplitudes of various
  models (carrying in particular the infrared, ultraviolet or
  translation-invariance difficulties discussed in the previous point)? It is
  worth noting that the quantum scattering theory is studied for instance in
  \cite{MR700181,MR1878286,MR2077252,MR3201223} for massless fields, in
  \cite{MR2764336,MR3411743} for the translation invariant Nelson model, and
  in \cite{MR1809881} for the renormalized Nelson model without ultraviolet
  cutoff.  Another model where such questions could be addressed, perhaps in
  a simpler way, is the so-called (massive or massless) spin-boson model
  \citep[see, \emph{e.g.},][]{MR4087255}.

\item \emph{Time decay and resolvent estimates:} By drawing further the
  parallel between semi-classical analysis of infinite and finite dimensional
  systems, we can speculate about the characterization of uniform time decay
  and resolvent estimates for the quantum Yukawa theory. In particular, it is
  quite interesting to prove the following uniform resolvent estimate for all
  $\hslash\in(0,1]$:
  \begin{equation}
    \label{eq.op.3}
    \|\langle A\rangle^{-s} (H_\hslash^{(n)}-\lambda\pm i0)^{-1}  \langle A\rangle^{-s}  \|_{\mathscr{B}(\mathscr{H}^{(n)})}  \leq C_1(\hslash) \,,
  \end{equation}
  where $A$ is a given selfadjoint operator and the $\hslash$-dependence
  $C_1(\hslash)$ is to be quantitatively determined. Moreover, it is also
  interesting to derive the following uniform decay estimate for all
  $t\in\mathbb R$ and $\hslash\in(0,1]$:
  \begin{equation}
    \label{eq.op.2}
    \|\langle A\rangle^{-s}\,\chi(H_\hslash^{(n)}) \, e^{it\hslash^{-1} H_\hslash^{(n)}}  \langle A\rangle^{-s}  \|_{\mathscr{B}(\mathscr{H}^{(n)})}  \leq C_2(\hslash) \,\langle t\rangle^{-\varepsilon}\,,
  \end{equation}
  where $\chi\in\mathscr{C}_0^\infty(\mathbb R)$. The exponents $s,\varepsilon>0$ as well as the
  $\hslash$-dependence $C_2(\hslash)$ need to be determined as well.  Such estimates
  \eqref{eq.op.3}-\eqref{eq.op.2} are well known for the Schr\"odinger
  operator \citep[see, \emph{e.g.},][and references therein]{MR927007,
    MR1221361}, and they are related to Mourre's theory, where $A$ represents
  a conjugate or a position operator. Note that
  \eqref{eq.op.3}-\eqref{eq.op.2} are expected to hold when the cutoff
  function $\chi$, respectively $\lambda$, are localized on radiating (``non
  trapping'') energy levels of the S-KG equation.  Indeed, the latter
  inequalities \eqref{eq.op.3}-\eqref{eq.op.2} can in principle be derived
  from a semiclassical positive commutator (Mourre) estimate
  $$
  \chi(H_\hslash^{(n)}) \,[H,iA]\, \chi(H_\hslash^{(n)})\geq C(\hslash)
  \,\chi^2(H_\hslash^{(n)}) \,, \quad \forall \hslash\in (0,1]\,,
  $$
  and a semiclassical limiting absorption principle. Note that a Mourre
  estimate is proved in \cite{derezinski1999rmp} for the ultraviolet-cutoff
  Nelson model with $\hslash=1$, and a fixed number of nucleons $n$.
\end{enumerate}

{\footnotesize \newcommand{\etalchar}[1]{$^{#1}$}


\begin{thebibliography}{CRDY07}
    \providecommand{\url}[1]{\texttt{#1}} \providecommand{\urlprefix}{URL }
    \expandafter\ifx\csname urlstyle\endcsname\relax
    \providecommand{\doi}[1]{doi:\discretionary{}{}{}#1}\else
    \providecommand{\doi}{doi:\discretionary{}{}{}\begingroup
      \urlstyle{rm}\Url}\fi \providecommand{\eprint}[2][]{\url{#2}}


  \bibitem[Amm00]{MR1809881} Z.~Ammari.
    \newblock Asymptotic completeness for a renormalized nonrelativistic
    {H}amiltonian in quantum field theory: the {N}elson model.
    \newblock \emph{Math. Phys. Anal. Geom.}
    \href{https://doi.org/10.1023/A:1011408618527}{\textbf{3}~(3),
      pp. 217--285}
    (2000).

  \bibitem[ABN19]{MR3907740} Z.~Ammari, S.~Breteaux, F.~Nier.
    \newblock Quantum mean-field asymptotics and multiscale analysis.
    \newblock \emph{Tunis. J. Math.}
    \href{http://dx.doi.org/10.2140/tunis.2019.1.221}{\textbf{1}~(2), pp.
      221--272} (2019).

  \bibitem[AF14]{ammari2014jsp} Z.~Ammari, M.~Falconi.
    \newblock {W}igner measures approach to the classical limit of the
    {N}elson model: {C}onvergence of dynamics and ground state energy.
    \newblock \emph{J. Stat. Phys.}
    \href{http://dx.doi.org/10.1007/s10955-014-1079-7}{\textbf{157}~(2), pp.
      330--362} (2014).
    
  \bibitem[AF17]{ammari2017sima} Z.~Ammari, M.~Falconi.
    \newblock {Bohr's correspondence principle for the renormalized Nelson
      model}.
    \newblock \emph{SIAM J. Math. Anal.}
    \href{http://dx.doi.org/10.1137/17M1117598}{\textbf{49}~(6),
      pp. 5031--5095} (2017).

  \bibitem[AFP16]{ammari2016cms} Z.~Ammari, M.~Falconi, B.~Pawilowski.
    \newblock On the rate of convergence for the mean field approximation of
    bosonic many-body quantum dynamics.
    \newblock \emph{Commun. Math. Sci.}
    \href{http://dx.doi.org/10.4310/CMS.2016.v14.n5.a9}{\textbf{14}~(5), pp.
      1417--1442} (2016).

  \bibitem[AN08]{ammari2008ahp} Z.~Ammari, F.~Nier.
    \newblock Mean field limit for bosons and infinite dimensional
    phase-space analysis.
    \newblock \emph{Ann. Henri Poincar\'e}
    \href{http://dx.doi.org/10.1007/s00023-008-0393-5}{\textbf{9}~(8), pp.
      1503--1574} (2008).

  \bibitem[AN15]{MR3379490} Z.~Ammari, F.~Nier.
    \newblock Mean field propagation of infinite-dimensional {W}igner
    measures with a singular two-body interaction potential.
    \newblock \emph{Ann. Sc. Norm. Super. Pisa Cl. Sci. (5)}
    \href{http://dx.doi.org/10.2422/2036-2145.201112_004}{\textbf{XIV}~(1), pp.
      155--220} (2015).

  \bibitem[Ara83]{MR700181} A.~Arai.
    \newblock A note on scattering theory in nonrelativistic quantum
    electrodynamics.
    \newblock \emph{J. Phys. A}
    \href{http://dx.doi.org/10.1088/0305-4470/16/1/014}{\textbf{16}~(1),
      pp. 49--69} (1983).

  \bibitem[AFG{\etalchar{+}}02]{aschbacher2002jmp} W.~H. Aschbacher,
    J.~Fr\"{o}hlich, G.~M. Graf, K.~Schnee, M.~Troyer.
    \newblock Symmetry breaking regime in the nonlinear {H}artree equation.
    \newblock \emph{J. Math. Phys.}
    \href{http://dx.doi.org/10.1063/1.1488673}{\textbf{43}~(8),
      pp. 3879--3891} (2002).

  \bibitem[Bac84]{bachelot1984aihpnl} A.~Bachelot.
    \newblock Probl\`eme de {C}auchy pour des syst\`emes hyperboliques
    semi-lin\'eaires.
    \newblock \emph{Ann. Inst. H. Poincar\'e Anal. Non Lin\'eaire}
    \href{http://www.numdam.org/item/?id=AIHPC_1984__1_6_453_0}{\textbf{1}~(6),
      pp. 453--478} (1984).

  \bibitem[BDFH20]{MR4087255} M.~Ballesteros, D.~A.~Deckert, J.~Faupin,
    F.~H\"{a}nle.
    \newblock One-boson scattering processes in the massive spin-boson model.
    \newblock \emph{J. Math. Anal. Appl.}
    \href{https://doi.org/10.1016/j.jmaa.2020.124094}{\textbf{489} (1),
      pp. 124094--44} (2020).

  \bibitem[CCFO21]{carlone2021sima} R. Carlone, M. Correggi, M. Falconi, and
    M. Olivieri.
    \newblock {Emergence of Time-Dependent Point Interactions in Polaron
      Models}.
    \newblock {\em SIAM J. Math. Anal.}
    \href{https://epubs.siam.org/doi/10.1137/20M1381344}{{\bf 53}~(4),
      pp. 4657--4691} (2021).

  \bibitem[CFO19]{correggi2019jst} M. Correggi, M. Falconi, and
    M. Olivieri.
    \newblock {Magnetic Schr\"odinger Operators as the Quasi-Classical Limit of Pauli-Fierz-type Models}.
    \newblock {\em J. Spectr. Theory}
    \href{http://dx.doi.org/10.4171/JST/277}{\textbf{9}~(4), pp. 1287--1325} (2019).
    
  \bibitem[CFO20]{correggi2020arxiv} M. Correggi, M. Falconi, and
    M. Olivieri.
    \newblock {Quasi-Classical Dynamics}.
    \newblock {\em Preprint} (2020).
    \newblock \href {https://arxiv.org/abs/2007.09442} {arXiv:2007.09442}.

  \bibitem[CFO21]{correggi2021jems} M. Correggi, M. Falconi, and M. Olivieri.
    \newblock {Quasi-Classical Dynamics}.
    \newblock {\em J. Eur. Math. Soc. (JEMS)}, to appear (2021).
    \newblock \href {https://arxiv.org/abs/1909.13313} {arXiv:1909.13313}.

  \bibitem[CPRY12]{correggi2012jmp} M.~Correggi, F.~Pinsker, N.~Rougerie,
    J.~Yngvason.
    \newblock Critical rotational speeds for superfluids in homogeneous
    traps.
    \newblock \emph{J. Math. Phys.}
    \href{http://dx.doi.org/10.1063/1.3697418}{\textbf{53}~(9), pp. 095203,
      45} (2012).

  \bibitem[CRDY07]{correggi2007jmp} M.~Correggi, T.~Rindler-Daller,
    J.~Yngvason.
    \newblock Rapidly rotating {B}ose-{E}instein condensates in strongly
    anharmonic traps.
    \newblock \emph{J. Math. Phys.}
    \href{http://dx.doi.org/10.1063/1.2712421}{\textbf{48}~(4), pp. 042104,
      30} (2007).

  \bibitem[DG99]{derezinski1999rmp} J.~Derezi{\'n}ski, C.~G{\'e}rard.
    \newblock Asymptotic completeness in quantum field theory. {M}assive
    {P}auli-{F}ierz {H}amiltonians.
    \newblock \emph{Rev. Math. Phys.}
    \href{http://dx.doi.org/10.1142/S0129055X99000155}{\textbf{11}~(4), pp.
      383--450} (1999).

  \bibitem[DG13]{MR3060648} J.~Derezi{\'n}ski, C.~G{\'e}rard.
    \newblock \emph{Mathematics of quantization and quantum fields}.
    \newblock Cambridge Monographs on Mathematical Physics (Cambridge
    University Press, Cambridge, 2013).

  \bibitem[DM15]{MR3411743} W.~Dybalski, J.~S. M{\o}ller.
    \newblock The translation invariant massive {N}elson model:
    {III}. {A}symptotic completeness below the two-boson threshold.
    \newblock \emph{Ann. Henri Poincar\'{e}}
    \href{http://dx.doi.org/10.1007/s00023-014-0384-7}{\textbf{16}~(11), pp.
      2603--2693} (2015).

  \bibitem[Fal13]{MR3059880} M.~Falconi.
    \newblock Classical limit of the {N}elson model with cutoff.
    \newblock \emph{J. Math. Phys.}
    \href{http://dx.doi.org/10.1063/1.4775716}{\textbf{54}~(1), pp. 012303,
      30} (2013).


  \bibitem[FS75]{MR381571} W.~Faris, B.~Simon.
    \newblock Degenerate and non-degenerate ground states for
    {S}chr\"{o}dinger operators.
    \newblock \emph{Duke Math. J.}
    \href{http://dx.doi.org/10.1215/S0012-7094-75-04251-9}{\textbf{42}~(3),
      pp. 559--567} (1975).

  \bibitem[FS14]{MR3201223} J.~Faupin, I.~M. Sigal.
    \newblock On {R}ayleigh scattering in non-relativistic quantum
    electrodynamics.
    \newblock \emph{Comm. Math. Phys.}
    \href{http://dx.doi.org/10.1007/s00220-014-1883-6}{\textbf{328}~(3), pp.
      1199--1254} (2014).

  \bibitem[FGS01]{MR1878286} J.~Fr\"{o}hlich, M.~Griesemer, B.~Schlein.
    \newblock Asymptotic electromagnetic fields in models of
    quantum-mechanical matter interacting with the quantized radiation field.
    \newblock \emph{Adv. Math.}
    \href{http://dx.doi.org/10.1006/aima.2001.2026}{\textbf{164}~(2), pp.
      349--398} (2001).

  \bibitem[FT75]{MR380160} I.~Fukuda, M.~Tsutsumi.
    \newblock On the {Y}ukawa-coupled {K}lein-{G}ordon-{S}chr\"{o}dinger
    equations in three space dimensions.
    \newblock \emph{Proc. Japan Acad.}
    \href{http://dx.doi.org/10.3792/pja/1195518563}{\textbf{51}~(6),
      pp. 402--405} (1975).

  \bibitem[GGM04]{MR2077252} V.~Georgescu, C.~G\'{e}rard, J.~S. M{\o}ller.
    \newblock Spectral theory of massless {P}auli-{F}ierz models.
    \newblock \emph{Comm. Math. Phys.}
    \href{http://dx.doi.org/10.1007/s00220-004-1111-x}{\textbf{249}~(1), pp.
      29--78} (2004).

  \bibitem[GMR11]{MR2764336} C.~G\'{e}rard, J.~S. M{\o}ller,
    M.~G. Rasmussen.
    \newblock Asymptotic completeness in quantum field theory: translation
    invariant {N}elson type models restricted to the vacuum and one-particle
    sectors.
    \newblock \emph{Lett. Math. Phys.}
    \href{http://dx.doi.org/10.1007/s11005-010-0445-x}{\textbf{95}~(2), pp.
      109--134} (2011).

  \bibitem[HHS05]{MR2103217} M.~Hirokawa, F.~Hiroshima, H.~Spohn.
    \newblock Ground state for point particles interacting through a massless
    scalar {B}ose field.
    \newblock \emph{Adv. Math.}
    \href{http://dx.doi.org/10.1016/j.aim.2004.03.011}{\textbf{191}~(2), pp.
      339--392} (2005).

  \bibitem[HK68]{hoegh-krohn1968jmp} R.~H{\o}egh-Krohn.
    \newblock Asymptotic fields in some models of quantum field theory. {I}.
    \newblock \emph{J. Math. Phys.}
    \href{http://dx.doi.org/10.1063/1.1664548}{\textbf{9}, pp. 2075--2080}
    (1968).

  \bibitem[HK69a]{hoegh-krohn1969jmp1} R.~H{\o}egh-Krohn.
    \newblock Asymptotic fields in some models of quantum field theory. {II}.
    \newblock \emph{J. Math. Phys.}
    \href{http://dx.doi.org/10.1063/1.1664889}{\textbf{10}, pp. 639--643}
    (1969).

  \bibitem[HK69b]{hoegh-krohn1969jmp2} R.~H{\o}egh-Krohn.
    \newblock Asymptotic fields in some models of quantum field
    theory. {III}.
    \newblock \emph{J. Math. Phys.}
    \href{http://dx.doi.org/10.1063/1.1665046}{\textbf{11}, pp. 185--188}
    (1969).

  \bibitem[HS95]{MR1326139} M.~H\"{u}bner, H.~Spohn.
    \newblock Radiative decay: nonperturbative approaches.
    \newblock \emph{Rev. Math. Phys.}
    \href{http://dx.doi.org/10.1142/S0129055X95000165}{\textbf{7}~(3), pp.
      363--387} (1995).

  \bibitem[LP20]{MR4159563} N.~Leopold, P.~Pickl.
    \newblock Derivation of the {M}axwell-{S}chr\"{o}dinger equations from
    the {P}auli-{F}ierz {H}amiltonian.
    \newblock \emph{SIAM J. Math. Anal.}
    \href{http://dx.doi.org/10.1137/19M1307639}{\textbf{52}~(5),
      pp. 4900--4936} (2020).

  \bibitem[LP18]{Leopold:2018aa} N.~Leopold, P.~Pickl.
    \newblock Mean-field limits of particles in interaction with quantized
    radiation fields (2018).
    \newblock \href {https://arxiv.org/abs/1806.10843} {arXiv:1806.10843}.

  \bibitem[LS74]{Lusternik1974} L.~A. Lusternik, V.~J. Sobolev.
    \newblock \emph{Elements of functional analysis} (Hindustan Publishing
    Corp., Delhi; Halsted Press [John Wiley\thinspace \&\thinspace Sons,
    Inc.], New York, 1974).

  \bibitem[Miy19]{MR3912797} T.~Miyao.
    \newblock On the semigroup generated by the renormalized {N}elson
    {H}amiltonian.
    \newblock \emph{J. Funct. Anal.}
    \href{http://dx.doi.org/10.1016/j.jfa.2018.11.001}{\textbf{276}~(6),pp. 1948--1977}
    (2019).
	
  \bibitem[Nak92]{MR1221361} S.~Nakamura.
    \newblock Resolvent estimates and time-decay in the semiclassical limit.
    \newblock M\'{e}thodes semi-classiques, Vol. 2 (Nantes, 1991).
    \newblock \emph{Ast\'erisque}
    \href{http://dx.doi.org/10.24033/ast.190}{\textbf{210}~(9), pp. 247--262}
    (1992).

  \bibitem[Nel64]{MR0175537} E.~Nelson.
    \newblock Interaction of nonrelativistic particles with a quantized
    scalar field.
    \newblock \emph{J. Mathematical Phys.}
    \href{http://dx.doi.org/10.1063/1.1704225}{\textbf{5}, pp. 1190--1197}
    (1964).

  \bibitem[OT94]{MR1275411} T.~Ozawa, Y.~Tsutsumi.
    \newblock Asymptotic behavior of solutions for the coupled
    {K}lein-{G}ordon-{S}chr\"{o}dinger equations.
    \newblock In: \emph{Spectral and scattering theory and applications},
    \emph{Adv. Stud. Pure Math.}, vol.~23, (pp. 295--305) (Math. Soc. Japan,
    Tokyo, 1994).

  \bibitem[RS79]{reed1979III} M.~Reed, B.~Simon.
    \newblock \emph{Methods of modern mathematical
      physics. {III}. {S}cattering {T}heory} (Academic Press, New York,
    1979).

  \bibitem[SS62]{MR142335} S.~S.~Schweber, E. C. G. Sudarshan.
    \newblock Asymptotic field operators in quantum field theory.
    \newblock \emph{Ann. Physics},
    \href{http://dx.doi.org/10.1016/0003-4916(62)90181-1}{\textbf{19}, pp.
      351--382} (1962).


  \bibitem[Sei11]{seiringer2011cmp} R.~Seiringer.
    \newblock The excitation spectrum for weakly interacting bosons.
    \newblock \emph{Comm. Math. Phys.}
    \href{http://dx.doi.org/10.1007/s00220-011-1261-6}{\textbf{306}~(2), pp.
      565--578} (2011).

  \bibitem[Shi03]{MR2037763} A.~Shimomura.
    \newblock Scattering theory for the coupled
    {K}lein-{G}ordon-{S}chr\"{o}dinger equations in two space dimensions.
    \newblock \emph{J. Math. Sci. Univ. Tokyo}
    \href{https://www.ms.u-tokyo.ac.jp/journal/abstract/jms100404.html}{\textbf{10}~(4),
      pp. 661--685} (2003).

  \bibitem[Spo98]{MR1623746} H.~Spohn.
    \newblock Ground state of a quantum particle coupled to a scalar {B}ose
    field.
    \newblock \emph{Lett. Math. Phys.}
    \href{http://dx.doi.org/10.1023/A:1007473300274}{\textbf{44}~(1),
      pp. 9--16} (1998).

  \bibitem[Wan88]{MR927007} X.~P. Wang.
    \newblock Time-decay of scattering solutions and resolvent estimates for
    semiclassical {S}chr\"{o}dinger operators.
    \newblock \emph{J. Differential Equations}
    \href{http://dx.doi.org/10.1016/0022-0396(88)90032-0}{\textbf{71}~(2),
      pp.  348--395} (1988).

  \end{thebibliography}
\end{document}